\journal{Journal of \LaTeX\ Templates}
\makeatletter \@addtoreset{equation}{section}
\newtheorem{thm}{Theorem}[section]
\newtheorem{lem}[thm]{Lemma}
\newtheorem{prop}[thm]{Proposition}
\theoremstyle{definition}
\newtheorem{rem}[thm]{Remark}
\newtheorem{assum}[thm]{Assumption}
\newtheorem{RHP}[thm]{Riemann-Hilbert Problem}
\renewcommand{\baselinestretch}{1.25}
\begin{document}

\begin{frontmatter}

\title{Soliton resolution for the Wadati-Konno-Ichikawa equation with weighted Sobolev initial data \tnoteref{mytitlenote}}
\tnotetext[mytitlenote]{%Project supported by the Fundamental Research Fund for the Central Universities under the grant No. 2017XKQY101.\\
%\hspace*{3ex}$^{*}$
Corresponding author.\\
\hspace*{3ex}\emph{E-mail addresses}: sftian@cumt.edu.cn,
shoufu2006@126.com (S. F. Tian) }

%% Group authors per affiliation:
\author{Zhi-Qiang Li, Shou-Fu Tian$^{*}$ and Jin-Jie Yang}
\address{
School of Mathematics, China University of Mining and Technology,  Xuzhou 221116, People's Republic of China
}

\begin{abstract}
In this work, we employ the $\bar{\partial}$-steepest descent method  to investigate
the Cauchy problem of the Wadati-Konno-Ichikawa (WKI) equation with initial conditions in weighted Sobolev space $\mathcal{H}(\mathbb{R})$. The long time asymptotic behavior of the solution $q(x,t)$ is derived in a fixed space-time cone $S(y_{1},y_{2},v_{1},v_{2})=\{(y,t)\in\mathbb{R}^{2}: y=y_{0}+vt, ~y_{0}\in[y_{1},y_{2}], ~v\in[v_{1},v_{2}]\}$. Based on the resulting asymptotic behavior, we prove the soliton resolution  conjecture of the WKI equation which includes the soliton term confirmed by $N(\mathcal{I})$-soliton on discrete spectrum and the $t^{-\frac{1}{2}}$ order term on continuous spectrum with residual error up to $O(t^{-\frac{3}{4}})$.
\end{abstract}

\begin{keyword} Integrable system \sep
The Wadati-Konno-Ichikawa equation %\sep Infinite conservation laws
\sep Riemann-Hilbert problem \sep $\bar{\partial}$-steepest descent method \sep Soliton resolution.
\end{keyword}

\end{frontmatter}

%\linenumbers
\tableofcontents
\newpage

\section{Introduction}

It is well-known that the nonlinear Schr\"{o}dinger (NLS) equation,
\begin{align}\label{NLS}
       iu_{t}\pm u_{xx}+2|u|^{2}u=0,
\end{align}
can be adapted to describe the pulse propagation in optical fibers\cite{NLS-optic}. With more in-depth research on NLS equation, it plays an increasingly important role in the field of the optical communication. Motivated by this, more and more scholars devoted to the research on the NLS equation and its extensions \cite{Miller-1}-\cite{Wangds-2019-JDE}. %It has exerted a great impact on the study of.
However,  at higher field strength, the optically induced refractive-index change becomes saturated.  The saturation effects will give rise to a physical limit of the shortest soliton pulse duration or of the pulse compression by high-order soliton generation. Thus, the NLS equation is not appropriate to describe the propagation of soliton in materials with saturation effects. To study the the propagation of soliton in materials with saturation effects, the equation
\begin{align}\label{A-WKI}
iA_t+A_{xx}+\frac{|A|^{2}}{1+\gamma|A|^{2}}A=0,
\end{align}
where $A$ is the slowly varying amplitude of the field strength and $\gamma$ is the Kerr parameter, is proposed and investigated \cite{A-equation-1,A-equation-2}. Regrettably, the model \eqref{A-WKI} is not integrable, which will make it difficult to get its analytical solutions. In order to overcome this difficulty,  Wadati, Konno and Ichikawa proposed an integrable model possessing saturation effects \cite{WKI-1}, i.e.,
\begin{align}\label{WKI-equation}
iq_{t}+\left(\frac{q}{\sqrt{1+|q|^{2}}}\right)_{xx}=0,
\end{align}
which is later called the Wadati-Konno-Ichikawa (WKI) equation. Furthermore, Wadati, Konno and Ichikawa presented two types of  integrable nonlinear evolution equations, and confirmed that the equations have an infinite number of conservation laws \cite{WKI-conservation-law}. The WKI equation \eqref{WKI-equation} can also be used to describe nonlinear transverse oscillations of elastic beams under tension \cite{WKI-2,WKI-3}. In 2005, Qu and Zhang \cite{WKI-4} derived the  WKI equation \eqref{WKI-equation} from the motions of curves in Euclidean geometry $E^{3}$. Because the significant mathematical structures and physical meanings of  WKI equation \eqref{WKI-equation}, many scholars contribute their efforts to the research on the properties of  WKI equation. In \cite{WKI-5}, the orbital stability for stationary solutions of the WKI equation \eqref{WKI-equation} was given. The algebra-geometric constructions of WKI flows and the existence of global solution for the WKI equation with small initial data were studies in \cite{WKI-6,WKI-7}. By using Riemann-Hilbert (RH) method, the soliton solutions of the WKI equation \eqref{WKI-equation} related to simple poles and higher-order poles were constructed in \cite{WKI-7+1,WKI-7+2}.

Moreover,  through a series of  gauge transformation, the WKI equation \eqref{WKI-equation} was related to  Ablowitz, Kaup, Newell and Segur (AKNS) system \cite{WKI-8}. For example, in \cite{WKI-8}, by adapting dependent and independent variable transformations, the solutions of  the WKI equation was constructed based on the solution of the modified Korteweg-de Vries (mKdV) equation. However, it is precisely because of dependent and independent variable transformations that the explicit higher-order solution of the WKI equation cannot be obtained from
a solution of the mKdV equation. Therefore, it is meaningful to directly study the explicit solution of the WKI equation \eqref{WKI-equation}.

In this work, we employ $\bar{\partial}$-steepest descent method to investigate the soliton resolution of the WKI equation \eqref{WKI-equation} with the initial value condition
\begin{align}\label{Iintial-Value}
q(x,0)=q_{0}(x)\in \mathcal{H}(\mathbb{R}),
\end{align}
where
\begin{align}\label{Sobolev-space}
 \mathcal{H}(\mathbb{R})=W^{2,1}(\mathbb{R})\cap H^{2,2}(\mathbb{R}).
\end{align}
The $W^{2,1}(\mathbb{R})$ and $H^{2,2}(\mathbb{R})$ are defined in \eqref{W-H-Sobolev-spaces}.
Compared with the work in \cite{Geng-pWKI-JMAA} that have obtained the long time asymptotic solutions of potential WKI equation by using nonlinear steepest descent method, our work shows that the accuracy of our result can reach $O(t^{-\frac{3}{4}})$, this is barely possible for literature \cite{Geng-pWKI-JMAA}. %but the literature \cite{Xu-CSP-JDE} can not achieve.

The study of the long time asymptotic behavior of nonlinear evolution equations can go back to the earlier work of Manakov \cite{Manakov-1974}. Inspired by the previous work, Zakharov and Manakov \cite{Zakharov-1976} derived the long time asymptotic solutions of NLS equation with decaying initial value in 1976. In 1993, a nonlinear steepest descent method was  developed by Defit and Zhou \cite{Deift-1993}. This method is able to be adapted to systematically study the long time asymptotic behavior of nonlinear evolution equations. Motivated by the pioneers' work, later scholars continuously follow their steps. After years of unremitting research, the nonlinear steepest descent method has been improved and widely applied \cite{Xu-SP-JDE}-\cite{Liu-SS-JMP}. The authors in \cite{Deift-1994-1, Deift-1994-2} showed that when the initial value is smooth and decays fast enough, the error term is $O(\frac{\log t}{t})$. The work \cite{Deift-2003} confirmed that if the initial value belongs to the weighted Sobolev space \eqref{Sobolev-space}, the error term could reach $O(t^{-(\frac{1}{2}+\iota)})$ for any $0<\iota<\frac{1}{4}$.

In recent years, in order to  study the asymptotic of orthogonal polynomials,  McLaughlin and Miller \cite{McLaughlin-1, McLaughlin-2} presented a $\bar{\partial}$-steepest descent method by combining steepest descent with $\bar{\partial}$-problem.  Then, scholars developed this method to investigate defocusing NLS equation with finite mass initial data \cite{Dieng-2008} and with finite density initial data \cite{Cuccagna-2016}. Compared with the nonlinear steepest descent method, $\bar{\partial}$-steepest descent method has an obvious advantage that the delicate estimates involving $L^{p}$ estimates of Cauchy projection operators can be avoided during the analysis. In addition, the work \cite{Dieng-2008} showed an improvement that the error term reach $O(t^{-\frac{3}{4}})$ when the initial value belongs to the weighted Sobolev space. Therefore, a series of great work has been done by using $\bar{\partial}$-steepest descent method \cite{AIHP}-\cite{Faneg-3}.

In \cite{AIHP}, M. Borghese, R. Jenkins and  K. T. R. McLaughlin have computed the long time asymptotic expansion of the solution $\psi(x,t)$ of the focusing NLS equation \eqref{NLS} by using the $\bar{\partial}$-steepest descent method. Here, %In this work,
we extend above results to study the long time asymptotic behavior of the solution $q(x,t)$ of the WKI equation \eqref{WKI-equation}. It is worth noting that there are some differences from that on focusing NLS equation \eqref{NLS}. Firstly, we need to  consider the spectral
problem as $z\rightarrow\infty$ and $z\rightarrow0$  to get the  the RH problem. Then the solution $q(x,t)$ of the WKI equation \eqref{WKI-equation} can be constructed by solving the RH problem. Moreover, when we construct the Riemann-Hilbert problem (RHP) corresponding to the initial value problem for the WKI equation \eqref{WKI-equation}, an improved transformation needs to be introduced to ensure that the eigenfunctions tend to the identity matrix as the spectral parameter $z\rightarrow\infty$. An obvious result is that there exists an exponential term in the solution $q(x,t)$  shown in \eqref{q-sol}. In addition, we need to consider not only the long time asymptotic behavior of error function $E(z)$ but also the long time asymptotic behavior of $E(0)$ to reconstruct the solution $q(y(x,t),t)$ of \eqref{WKI-equation}. What's more, for the $\bar{\partial}$-RH problem $M^{(3)}(z)$  defined in \eqref{delate-pure-RHP}, when we study the asymptotic behavior of  $M^{(3)}(0)$  and the long time asymptotic behavior of $M^{(3)}_{1}(y,z)$,  some distinctive  scaling techniques need to be taken to bound the size of $M^{(3)}(z)$.
\\

%\noindent \textbf{Main result and remark}

%Our main result displays the long time asymptotic behavior of the solution \eqref{WKI-equation} for generic initial value $q_{0}(x)\in \mathcal{H}(\mathbb{R})$.

\noindent \textbf{Organization of the rest of the work}

In section 2, based on the Lax pair of the WKI equation, we introduce two kinds of eigenfunctions to deal with the spectral singularity. Also, the analyticity, symmetries and asymptotic properties are analyzed.
In section 3, by using similar ideas to \cite{Xu-CSP-JDE}, the RHP for $M(z)$ is constructed for the WKI equation with initial problem.
In section 4, for given initial data $q_{0}(x)\in \mathcal{H}(\mathbb{R})=W^{2,1}(\mathbb{R})\cap H^{2,2}(\mathbb{R})$, we prove that the reflection coefficient $r(z)$ belongs to $H^{1,1}(\mathbb{R})$.
In section 5, we introduce the matrix function $T(z)$ to define the new RHP for $M^{(1)}(z)$. Then, its jump matrix can be decomposed into two triangle matrices near the phrase point $z=z_{0}$.
In section 6, we make the continuous extension of the jump matrix off the real axis by introducing a matrix function $R^{(2)}(z)$ and get a mixed $\bar{\partial}$-RH problem.
In section 7, we decompose the mixed $\bar{\partial}$-RH problem into two parts that are a model RH problem with $\bar{\partial}R^{(2)}=0$ and a pure $\bar{\partial}$-RH problem with $\bar{\partial}R^{(2)}\neq0$,  i.e., $M^{(2)}_{R}$ and $M^{(3)}$.
In section 8, we solve the model RH problem $M^{(2)}_{R}$  via an outer model $M^{(out)}(z)$ for the soliton part and local solvable model near the phase point $z_{0}$ which can be solved by matching  parabolic cylinder model problem. Also, the error function $E(z)$ with a small-norm RH problem is achieved. %obtained.
In section 9, the pure $\bar{\partial}$-RH problem for $M^{(3)}$ is studied.
Finally,  we obtain the soliton resolution and long time asymptotic behavior of the WKI equation.

%Riemann-Hilbert problem(RHP)
%$H^{1,1}(\mathbb{R})$ is  a weighted Sobolev space, i.e.,
%\begin{align*}%\label{Sobolev-space}
% H^{1,1}(\mathbb{R})=L^{2,1}(\mathbb{R})\cap H^{1}(\mathbb{R}),\\
% H^{1}(\mathbb{R})=\{f\in L^{2}(\mathbb{R}): f'\in L^{2}(\mathbb{R})\},~~ L^{2,1}(\mathbb{R})=\{(1+|\cdot|^{2})^{\frac{1}{2}}f\in L^{2}(\mathbb{R})\}.
%\end{align*}
%The norm of $f\in L^{2,1}(\mathbb{R})$ is defined as $\|f\|_{2,1}=\|(1+|\cdot|^{2})^{\frac{1}{2}}f\|_{2}$.

\section{The spectral analysis of WKI equation}

In order to study the soliton resolution of the initial value problem (IVP) for the WKI equation via applying $\bar{\partial}$-steepest descent method, we first construct a RHP based on the Lax pair of the WKI equation. The  WKI equation admits the Lax pair
\begin{align}\label{O-Lax}
\psi_{x}=U\psi,~~\psi_{t}=V\psi,
\end{align}
where $U=-iz\sigma_{3}+zQ$,
\begin{align}\label{Lax-1}
&V=\left(
    \begin{array}{cc}
      -\frac{2iz^{2}}{\Phi} & \frac{2qz^{2}}{\Phi}+iz\left(\frac{q}{\Phi}\right)_{x} \\
      -\frac{2\bar{q}z^{2}}{\Phi}+iz\left(\frac{\bar{q}}{\Phi}\right)_{x} & \frac{2iz^{2}}{\Phi} \\
    \end{array}
  \right), \notag\\
  &Q=\left(
                 \begin{array}{cc}
                   0 & q \\
                   -\bar{q} & 0 \\
                 \end{array}
               \right),~~\Phi=\sqrt{1+|q|^{2}},
\end{align}
and $\sigma_{3}$ is the third Pauli matrix:
\begin{align*}
\sigma_{1}=\left(
  \begin{array}{cc}
    0 & 1 \\
    1 & 0 \\
  \end{array}
\right),~~\sigma_{2}=\left(
  \begin{array}{cc}
    0 & -i \\
    i & 0 \\
  \end{array}
\right), ~~\sigma_{3}=\left(
  \begin{array}{cc}
    1 & 0 \\
    0 & -1 \\
  \end{array}
\right).
\end{align*}
The $\bar{q}$ means the complex conjugate of $q$.

To investigate the IVP of integrable equations, we generally employ the $x$-part of the Lax pair to study the long time asymptotic behaviors. The $t$-part of Lax pair is used to control the time evolution of the scattering data based on the inverse scattering transform method. However, the Lax pair \eqref{O-Lax} of the WKI equation possesses two singularities, i.e., $z=0$ and $z=\infty$. As a result, the behavior of the solutions of spectral problem \eqref{O-Lax} also need to be investigated as spectral parameter $z\rightarrow 0$. Then, via employing the $t$-part of Lax pair and the expansion of the eigenfunction, the potential function $q(x,t)$ can be recovered. Therefore, we deal with the two singularities at $z=0$ and $z=\infty$ using two different transformations in the following analysis.

\subsection{The singularity at $z=0$}
Based on the initial condition that $q_{0}\in \mathcal{H}(\mathbb{R})$,
letting $x\rightarrow\pm\infty$, we can obtain the asymptotic scattering problem and construct the two Jost solutions, i.e.,
\begin{align}\label{Jost-sol}
\psi_{\pm}\sim e^{-i(zx+2iz^{2}t)\sigma_{3}},~~x\rightarrow\pm \infty.
\end{align}
Then, we make a gauge transformation
\begin{align}\label{Gauge-Trans}
\psi(x,t;z)=\mu^{0}(x,t;z)e^{-i(zx+2iz^{2}t)\sigma_{3}}.
\end{align}
As a result, we obtain $\mu^{0}(z)\sim \mathbb{I} (x\rightarrow\pm\infty)$. In addition,
the equivalent Lax pair of $\mu^{0}(z)$ can be written as
\begin{align}
\begin{split}
\mu^{0}_{x}&+iz[\sigma_{3},\mu^{0}]=U_{1}\mu^{0},\\
\mu^{0}_{t}&+2iz^{2}[\sigma_{3},\mu^{0}]=V_{1}\mu^{0},\label{Equvi-Lax}
\end{split}
\end{align}
where
\begin{align*}
U_{1}=zQ,~~V_{1}=\left(
    \begin{array}{cc}
      2iz^{2}(1-\frac{1}{\Phi}) & \frac{2qz^{2}}{\Phi}+iz\left(\frac{q}{\Phi}\right)_{x} \\
      -\frac{2\bar{q}z^{2}}{\Phi}+iz\left(\frac{\bar{q}}{\Phi}\right)_{x} & -2iz^{2}(1-\frac{1}{\Phi}) \\
    \end{array}
  \right),
\end{align*}
and $\mu^{0}=\mu^{0}(x,t;z)$. Besides, $[A,B]$ means $AB-BA$ where $A$ and $B$ are $2\times2$ matrices. Next we rewrite the Lax pair \eqref{Equvi-Lax} in full derivative form, i.e.,
\begin{align}\label{full-derivative-1}
d(e^{i(zx+z^{2}t)\hat{\sigma}_{3}}\mu^{0})=
e^{i(zx+z^{2}t)\hat{\sigma}_{3}}(U_{1}dx+V_{1}dt)\mu^{0},
\end{align}
where $e^{\hat{\sigma}_{3}}A=e^{\sigma_{3}}Ae^{-\sigma_{3}}$. Then, the solutions of \eqref{full-derivative-1} can be derived as Volterra integrals
\begin{align}
\begin{matrix}
\mu^{0}_{\pm}(x,t;z)=\mathbb{I}+
\int_{\pm\infty}^{x}e^{-iz(x-y)\hat{\sigma}_{3}}U_{1}(y,t;z)\mu^{0}_{\pm}(y,t;z)dy,
\end{matrix}
\end{align}
from which we can derive the analytical properties of $\mu^{0}_{\pm}$.

\begin{prop}
It is assumed that $q(x)-q_{0}\in H^{1,1}(\mathbb{R})$. Then, $\mu^{0}_{-,1}, \mu^{0}_{+,2}$ are analytic in $\mathbb{C}^{+}$ and $\mu^{0}_{-,2}, \mu^{0}_{+,1}$ are analytic in $\mathbb{C}^{-}$. The $\mu^{0}_{\pm,j} (j=1,2)$ mean the $j$-th column of $\mu^{0}_{\pm}$, and $\mathbb{C}^{\pm}$ mean the upper and lower complex $z$-plane, respectively.
\end{prop}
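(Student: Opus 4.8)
The plan is to prove analyticity directly from the Volterra integral representation obtained from \eqref{full-derivative-1}, by solving it through successive approximations and showing the resulting Neumann series converges locally uniformly in $z$ on the relevant half-plane, so that its sum inherits analyticity from its (entire) summands. I would treat $\mu^{0}_{-,1}$ as the model case; the remaining three columns follow by the same argument together with the $x\to+\infty$ and sign symmetries.

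First I would make the exponential structure of the kernel explicit. Writing the conjugation $e^{-iz(x-y)\hat{\sigma}_{3}}$ out and extracting the \emph{first column} of the matrix identity shows that $\mu^{0}_{-,1}=(a,b)^{\mathsf T}$ satisfies the scalar pair
\begin{align*}
a(x)&=1+z\int_{-\infty}^{x}q(y)\,b(y)\,dy,\\
b(x)&=-\,z\int_{-\infty}^{x}\bar q(y)\,e^{2iz(x-y)}\,a(y)\,dy.
\end{align*}
The essential point is that only the factor $e^{2iz(x-y)}$ survives in this column, and for $\operatorname{Im}z>0$ with $y<x$ one has $|e^{2iz(x-y)}|=e^{-2\operatorname{Im}(z)(x-y)}\le 1$, while the $a$-equation carries no exponential at all. (For the second column the opposite sign $e^{-2iz(x-y)}$ appears, which is bounded precisely on $\mathbb{C}^{-}$, and for the $+$ Jost solution the integration runs from $+\infty$, so $x-y<0$ reverses the roles of the two half-planes; this is exactly what produces the four analyticity statements.) Eliminating $b$ turns the system into a single second-kind Volterra equation for $a$ whose kernel $-z^{2}q(y)\bar q(s)\,e^{2iz(y-s)}\mathbf{1}_{\{s<y<x\}}$ is bounded in modulus by $|z|^{2}|q(y)||q(s)|$ on $\mathbb{C}^{+}$.

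Next I would run the Picard iteration for this equation. Each iterate is an integral over an ordered simplex of a product of the kernel; since the integrand is analytic in $z$ for each fixed set of integration points and is dominated, locally uniformly in $z$, by an integrable function, differentiation under the integral sign (equivalently Morera together with Fubini) shows that every iterate is analytic on $\{\operatorname{Im}z>0\}$. Convergence is controlled by the standard simplex estimate: bounding the (modulus $\le 1$) exponential by $1$ leaves a fully symmetric integrand $\prod_{j}|z|\,|q(t_{j})|$ over the ordered region $-\infty<t_{2n}<\dots<t_{1}<x$, so the $n$-th term is bounded by $\big(|z|\,\|q\|_{L^{1}}\big)^{2n}/(2n)!$. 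Here the hypothesis enters through the embedding $H^{1,1}(\mathbb{R})\hookrightarrow L^{1}(\mathbb{R})$ (Cauchy--Schwarz against the $L^{2}$ weight $\langle x\rangle^{-1}$), which guarantees $\|q\|_{L^{1}}<\infty$. Hence the series converges locally uniformly in $z$, dominated by $\cosh\!\big(|z|\,\|q\|_{L^{1}}\big)$, and by the Weierstrass theorem its sum $a(x,z)$, and then $b(x,z)$, is analytic on $\mathbb{C}^{+}$.

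The main obstacle is not any single estimate but the bookkeeping in the first step: one must verify that after the $\hat{\sigma}_{3}$ conjugation each individual column retains only the exponential that is bounded on the \emph{claimed} half-plane, because a careless treatment (applying the conjugation to the column rather than to the product $U_{1}\mu^{0}$) wrongly introduces a growing factor $e^{2\operatorname{Im}(z)(x-y)}$ in the diagonal equation and appears to destroy convergence. Once the triangular structure is pinned down correctly, the factorial gain from the Volterra simplex dominates the $|z|^{2n}$ growth coming from the extra spectral factor $z$ in $U_{1}=zQ$ (the feature distinguishing WKI from the NLS case), and local uniform convergence, hence analyticity, follows routinely.
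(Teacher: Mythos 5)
Your proof is correct and follows essentially the same route the paper intends: the paper derives the Volterra integral equations for $\mu^{0}_{\pm}$ and simply asserts that the analyticity properties "can be derived" from them, and your proposal supplies exactly that standard Neumann-series argument, with the column-wise exponential bookkeeping, the ordered-simplex factorial estimate, and the embedding $H^{1,1}(\mathbb{R})\hookrightarrow L^{1}(\mathbb{R})$ all handled correctly. Nothing further is needed.
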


Furthermore, we study the asymptotic property of $\mu^{0}_{\pm}$ as $z\rightarrow0$. Substituting the following asymptotic expansions
\begin{align*}
\mu^{0}_{\pm}=\mu^{0,(0)}_{\pm}+\mu^{0,(1)}_{\pm}z+O(z^{2}),~~~~z\rightarrow0,
\end{align*}
into the Lax pair \eqref{Equvi-Lax} and comparing the same power coefficients of $z$, we obtain the expressions of $\mu^{0,(0)}_{\pm}$ and $\mu^{0,(1)}_{\pm}$. It should be pointed out that $\mu^{0,(j)}_{\pm}(j=1,2,\ldots)$ are independent of $z$.
\begin{prop}
The functions $\mu^{0}_{\pm}(x,t;z)$ admit the following asymptotic property as $z\rightarrow0$,
\begin{align}\label{u0-asym}
\mu^{0}_{\pm}(x,t;z)=\mathbb{I}+\int_{\pm\infty}^{x}zQ\,dx+O(z^{2}).
\end{align}
\end{prop}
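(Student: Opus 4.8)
The plan is to feed the ansatz $\mu^{0}_{\pm}=\mu^{0,(0)}_{\pm}+z\,\mu^{0,(1)}_{\pm}+O(z^{2})$ into the $x$-part of the equivalent Lax pair \eqref{Equvi-Lax} and match equal powers of $z$, using the normalization $\mu^{0}_{\pm}\to\mathbb{I}$ as $x\to\pm\infty$ to fix the integration constant at each order. The crucial structural feature that makes this short is that $U_{1}=zQ$ already carries an explicit factor of $z$, so the entire right-hand side, together with the commutator term $iz[\sigma_{3},\mu^{0}]$, begins at order $z$.

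Collecting the $z^{0}$ terms in $\mu^{0}_{x}+iz[\sigma_{3},\mu^{0}]=zQ\mu^{0}$ leaves only $\partial_{x}\mu^{0,(0)}_{\pm}=0$, so $\mu^{0,(0)}_{\pm}$ is independent of $x$; the boundary value $\mu^{0,(0)}_{\pm}\to\mathbb{I}$ then forces $\mu^{0,(0)}_{\pm}\equiv\mathbb{I}$. At order $z^{1}$ the equation reads
\begin{align*}
\partial_{x}\mu^{0,(1)}_{\pm}+i\,[\sigma_{3},\mu^{0,(0)}_{\pm}]=Q\,\mu^{0,(0)}_{\pm}.
\end{align*}
Because $\mu^{0,(0)}_{\pm}=\mathbb{I}$ commutes with $\sigma_{3}$, the bracket drops out and we are left with $\partial_{x}\mu^{0,(1)}_{\pm}=Q$; integrating from $\pm\infty$ against the condition $\mu^{0,(1)}_{\pm}\to0$ gives $\mu^{0,(1)}_{\pm}=\int_{\pm\infty}^{x}Q\,dy$. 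Assembling the two orders yields exactly \eqref{u0-asym}.

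The step I expect to require the most care is not the formal matching above but the justification that the remainder is genuinely $O(z^{2})$, uniformly on the relevant $x$-range. For this I would argue directly from the Volterra integral representation: writing $e^{-iz(x-y)\hat{\sigma}_{3}}=\mathbb{I}+O(z)$ and inserting the crude a priori bound $\mu^{0}_{\pm}=\mathbb{I}+O(z)$ into the integrand, one iteration of the integral equation reproduces $\mathbb{I}+z\int_{\pm\infty}^{x}Q\,dy$ plus a controllable error. The convergence of the Neumann series for this Volterra equation, and the admissibility of differentiating it in $z$ near $z=0$, is exactly where the hypotheses on $q$ enter: the integrability and decay encoded in $q_{0}\in\mathcal{H}(\mathbb{R})$ (equivalently $q-q_{0}\in H^{1,1}(\mathbb{R})$, as assumed in the preceding Proposition) guarantee that all the integrals converge and that the error terms can be bounded uniformly, turning the formal expansion into a rigorous asymptotic one.
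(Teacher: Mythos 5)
Your proposal is correct and follows essentially the same route as the paper: substituting the expansion $\mu^{0}_{\pm}=\mu^{0,(0)}_{\pm}+\mu^{0,(1)}_{\pm}z+O(z^{2})$ into the $x$-part of \eqref{Equvi-Lax} and matching powers of $z$, with the normalization at $x\to\pm\infty$ fixing $\mu^{0,(0)}_{\pm}=\mathbb{I}$ and $\mu^{0,(1)}_{\pm}=\int_{\pm\infty}^{x}Q\,dy$. Your additional remarks on controlling the $O(z^{2})$ remainder via the Volterra representation go beyond the paper, which simply asserts the result of the formal matching.
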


\subsection{The singularity at $z=\infty$}
In this part, due to the singularity at $z=\infty$, our first purpose is to control the asymptotic behavior of the Lax pair \eqref{O-Lax} as $z\rightarrow\infty$. Then, following the idea in \cite{Xu-CSP-JDE}, we introduce the transformation
\begin{align}\label{G-Trans}
\psi(x,t;z)=G(x,t)\phi(x,t;z),
\end{align}
where
\begin{align*}
  G(x,t)=\sqrt{\frac{\Phi+1}{2\Phi}}\left(
                                              \begin{array}{cc}
                                                1 & \frac{i(1-\Phi)}{\bar{q}(x,t)} \\
                                                \frac{i(1-\Phi)}{q(x,t)} & 1 \\
                                              \end{array}
                                            \right),
\end{align*}
then, the Lax pair related to $\phi(x,t;z)$ can be derived as
\begin{align}\label{Lax-phi}
\begin{split}
&\phi_{x}+iz\Phi\sigma_{3}\phi=U_{2}\phi,\\
&\phi_{t}+(2iz^{2}+z\frac{q\bar{q}_{x}-q_x\bar{q}}{2\Phi^{2}})\sigma_{3}\phi=V_{2}\phi,
\end{split}
\end{align}
where
\begin{align*}
U_2=\left(
      \begin{array}{cc}
        -\frac{q\bar{q}_{x}-q_x\bar{q}}{4\Phi(1+\Phi)} & -\frac{iq[\Phi(q\bar{q}_{x}-q_x\bar{q})-|q|_{x}^{2}]}{4\Phi^{2}(\Phi^{2}-1)} \\
        \frac{i\bar{q}[\Phi(q\bar{q}_{x}-q_x\bar{q})+|q|_{x}^{2}]}{4\Phi^{2}(\Phi^{2}-1)} & \frac{q\bar{q}_{x}-q_x\bar{q}}{4\Phi(1+\Phi)} \\
      \end{array}
    \right),~~V_2=\left(
                    \begin{array}{cc}
                      v_{2,11} & v_{2,12} \\
                      v_{2,21} & -v_{2,11} \\
                    \end{array}
                  \right)
\end{align*}
with
\begin{align*}
v_{2,11}&=-\frac{q\bar{q}_{t}-q_t\bar{q}}{4\Phi(1+\Phi)},\\
v_{2,12}&=\frac{iq[\bar{q}(q\bar{q}_{x}-q_x\bar{q})-2\bar{q}_{x}(\Phi-1)]}{2\Phi^{3}(\Phi-1)\bar{q}}z- \frac{iq[\bar{q}(q\bar{q}_{t}-q_t\bar{q})-2\bar{q}_{t}(\Phi-1)]}{4\Phi^{2}(\Phi-1)\bar{q}},\\
v_{2,21}&=\frac{i\bar{q}[q(q\bar{q}_{x}-q_x\bar{q})+2q_{x}(\Phi-1)]}{2\Phi^{3}(\Phi-1)q}z+ \frac{i\bar{q}[q(q\bar{q}_{t}-q_t\bar{q})+2q_{t}(\Phi-1)]}{4\Phi^{2}(\Phi-1)q}.
\end{align*}
Define $p_{x}(x,t;z)=iz\Phi\sigma_{3}$ and $p_{t}(x,t;z)=(2iz^{2}+z\frac{q\bar{q}_{x}-q_x\bar{q}}{2\Phi^{2}})\sigma_{3}$, then $p_{x}$ and $p_{t}$ are compatible, i.e., $p_{xt}=p_{tx}$. We rewrite this relation as
\begin{align*}
i\Phi_{t}=\left(\frac{q\bar{q}_{x}-q_x\bar{q}}{2\Phi^{2}}\right)_{x},
\end{align*}
which is the conservation law \cite{WKI-conservation-law} of the WKI equation.  Therefore, we can define $p(x,t;z)$ as
\begin{align}\label{define-p}
p(x,t;z)=iz\left(x-\int_{x}^{-\infty}(\Phi(y)-1)\,dy\right)+2iz^{2}t.
\end{align}
Furthermore, we define $\varphi=\phi e^{p(x,t;z)\sigma_{3}}$, we can obtain the equivalent  Lax pair
\begin{align*}
\varphi_x(x,t;z)+p_{x}(x,t;z)[\sigma_3, \varphi(x,t;z)]=U_2\varphi(x,t;z),\\
\varphi_t(x,t;z)+p_{t}(x,t;z)[\sigma_3, \varphi(x,t;z)]=V_2\varphi(x,t;z).
\end{align*}
Because the potential $q(x,t)$ is complex valued, the diagonal elements of the matrix $U_2$ do not equal to zero which leads to the solutions of spectral problem do not approximate the identity matrix as $z\rightarrow\infty$.  Thus, we introduce an improved transformation
\begin{align}\label{2.2-1}
\psi(x,t;z)=G(x,t)e^{d_{+}\hat{\sigma}_{3}}\mu(x,t;z)e^{-d_{-}\sigma_{3}},
\end{align}
where
\begin{gather}
d_{-}=\int_{-\infty}^{x}\frac{q\bar{q}_{x}-q_{x}\bar{q}}{4\Phi(\Phi+1)}(s,t)ds,
~~d_{+}=\int^{+\infty}_{x}\frac{q\bar{q}_{x}-q_{x}\bar{q}}{4\Phi(\Phi+1)}(s,t)ds,\notag\\
d=d_{+}+d_{-}=\int_{-\infty}^{+\infty} \frac{q\bar{q}_{x}-q_{x}\bar{q}}{4\Phi(\Phi+1)}(s,t)ds.\label{defin-d}
\end{gather}
Then, the equivalent Lax pair of $\psi(x,t;z)$ \eqref{O-Lax} can be written as
\begin{align}\label{Lax-mu}
\begin{split}
\mu_{x}+p_{x}[\sigma_{3},\mu]=-e^{-d_{+}\hat{\sigma}_{3}}U_{3}\mu,\\
\mu_{t}+p_{t}[\sigma_{3},\mu]=-e^{-d_{+}\hat{\sigma}_{3}}V_{3}\mu,
\end{split}
\end{align}
where
\begin{align*}
U_3=U_2+\frac{q\bar{q}_{x}-q_{x}\bar{q}}{4\Phi(\Phi+1)}\sigma_3,\\
V_3=V_2+\frac{q\bar{q}_{t}-q_{t}\bar{q}}{4\Phi(\Phi+1)}\sigma_3.
\end{align*}
Furthermore, \eqref{Lax-mu} can be written as
\begin{align}\label{fullDerivative-2}
d(e^{-p(x,t;z)\hat{\sigma}_{3}}\mu)=e^{-p(x,t;z)\hat{\sigma}_{3}} e^{-d_{+}\hat{\sigma}_{3}}(U_{3}dx+V_{3}dt)\mu,
\end{align}
from which we derive Volterra type integral equations
\begin{align}\label{Volterra-2}
\mu_{-}(x,t;z)=\mathbb{I}+
\int_{x}^{\pm\infty}e^{[p(x,t;z)-p(s,t;z)]\hat{\sigma}_{3}}e^{-d_{+}\hat{\sigma}_{3}}
U_{3}(s,t;z)\mu_{\pm}(s,t;z)ds.
\end{align}
Then, according to the definition of $\mu(x,t;z)$ and the above integrals \eqref{Volterra-2}, the properties of the eigenfunctions $\mu_{\pm}(x,t;z)$ can be easily derived.
\begin{prop}\label{Anal-mu}
(Analytic property) It is assumed that $q(x)-q_{0}\in H^{1,1}(\mathbb{R})$. Then, $\mu_{-,1}, \mu_{+,2}$ are analytic in $\mathbb{C}^{+}$ and $\mu_{-,2}, \mu_{+,1}$ are analytic in $\mathbb{C}^{-}$. The $\mu_{\pm,j} (j=1,2)$ mean the $j$-th column of $\mu_{\pm}$.
\end{prop}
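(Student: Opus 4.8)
The plan is to read the analyticity off directly from the Volterra representation \eqref{Volterra-2}, working columnwise and exploiting the one structural feature that the improved transformation \eqref{2.2-1} was engineered to produce: the matrix $U_3$ in \eqref{Lax-mu} is \emph{purely off-diagonal}. Indeed, the diagonal part of $U_2$ is $\mp\frac{q\bar q_x-q_x\bar q}{4\Phi(1+\Phi)}$, and the term $\frac{q\bar q_x-q_x\bar q}{4\Phi(\Phi+1)}\sigma_3$ added in passing from $U_2$ to $U_3$ cancels it exactly. Consequently, if we write $\mu_{\pm}=(\mu_{\pm,1},\mu_{\pm,2})$ and split each column into its diagonal and off-diagonal entries, the $x$-part of \eqref{Lax-mu} shows that the diagonal entries $\mu_{11},\mu_{22}$ obey \emph{exponential-free} scalar integral relations, while the off-diagonal entries $\mu_{21},\mu_{12}$ pick up an integrating factor carrying the oscillatory exponential. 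This is precisely the mechanism that confines each column to a single half-plane.

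The first step is to compute the phase along the contour. Since both $p(x,t;z)$ and $p(s,t;z)$ share the same $2iz^2t$ term, it cancels in the difference, and \eqref{define-p} gives
\begin{align*}
p(x,t;z)-p(s,t;z)=iz\int_{s}^{x}\Phi(y)\,dy=:iz\,\rho(x,s),
\end{align*}
where $\rho(x,s)$ is real because $\Phi=\sqrt{1+|q|^2}>0$; moreover $\rho(x,s)$ has the same sign as $x-s$. Solving the $(2,1)$ and $(1,2)$ scalar equations of \eqref{Lax-mu} with the integrating factors $e^{\mp 2p}$ produces off-diagonal entries weighted by $e^{\pm 2iz\rho(x,s)}$, whose modulus is $e^{\mp 2\rho(x,s)\operatorname{Im}z}$. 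Reading off the sign of the exponent on each integration range then fixes the half-plane: for $\mu_{-}$ the range is $s<x$, so $\rho>0$, and the factor $e^{+2iz\rho}$ attached to $\mu_{-,21}$ is bounded exactly when $\operatorname{Im}z\ge 0$, giving $\mu_{-,1}$ analytic in $\mathbb{C}^{+}$, while $e^{-2iz\rho}$ in $\mu_{-,12}$ forces $\mu_{-,2}$ into $\mathbb{C}^{-}$; for $\mu_{+}$ the range is $s>x$, so $\rho<0$, and the same bookkeeping yields $\mu_{+,2}$ analytic in $\mathbb{C}^{+}$ and $\mu_{+,1}$ in $\mathbb{C}^{-}$. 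These are exactly the claims of Proposition \ref{Anal-mu}.

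With the sign of the exponent settled, the second step is to establish the solution itself as an analytic function by the standard method of successive approximations for the column vector. Because the exponential factor is entire in $z$ and $U_3$ is independent of $z$, every iterate in the Neumann series is analytic in the relevant open half-plane. The assumption $q-q_0\in H^{1,1}(\mathbb{R})$ gives $L^1$ control on the entries of $U_3$ (which are built from $q$, $q_x$ and bounded functions of $\Phi$), and combining this with the uniform bound $|e^{\pm 2iz\rho}|\le 1$ valid in the appropriate half-plane yields absolute and uniform convergence of the series on compact subsets. By the Weierstrass convergence theorem the limit is analytic, and a Gronwall-type estimate gives continuity up to the real axis. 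The argument is identical in spirit to the one underlying the earlier Proposition for $\mu^{0}_{\pm}$.

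I expect the main obstacle to be bookkeeping the coupling between the diagonal and off-diagonal components so that the one-sided exponential bound survives the iteration: a priori the iterates alternate between the exponential-free diagonal relation and the exponentially weighted off-diagonal relation, so one must verify that the bounded exponential is never composed with its unbounded partner. The resolution is structural—only the off-diagonal iterate carries $e^{\pm 2iz\rho}$, and on the prescribed half-plane this factor is a contraction along the contour, whereas the diagonal iterate contributes only the $L^1$ weight—so the estimate closes. The remaining columns then follow by the same computation, or may be obtained from the established ones via the symmetry reductions of the eigenfunctions.
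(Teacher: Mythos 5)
Your proposal is correct and follows exactly the route the paper intends: the paper offers no written proof beyond asserting that the properties "can be easily derived" from the Volterra integral equations \eqref{Volterra-2}, and your argument—off-diagonality of $U_3$, the phase difference $p(x)-p(s)=iz\int_s^x\Phi$, sign bookkeeping of $e^{\pm 2iz\rho}$ on each integration range, and Neumann-series convergence with the $\|U_3\|_{L^1}^n/n!$ bound—is precisely the standard derivation being invoked. The only caveat is that the paper's own sign conventions in \eqref{Lax-mu}--\eqref{Volterra-2} are internally inconsistent, so your assignment of exponentials to entries (which matches the stated conclusion) is the one to trust.
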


\begin{prop}\label{Sym-mu}
(Symmetry property) The eigenfunctions $\mu_{\pm}(x,t;z)$ satisfy the following symmetry relation
\begin{align}
    \bar{\mu}_{\pm}(x,t;\bar{z})=-\sigma_{2}\mu_{\pm}(x,t;z)\sigma_{2}.
  \end{align}
\end{prop}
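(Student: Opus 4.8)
The plan is to prove the symmetry by the standard uniqueness argument for Volterra integral equations: I will show that the matrix obtained from $\mu_{\pm}(x,t;z)$ by the combined operation of complex conjugation, $z\mapsto\bar z$, and conjugation by $\sigma_{2}$ solves exactly the same integral equation \eqref{Volterra-2} that defines $\mu_{\pm}$, and then invoke uniqueness of its Neumann series. Concretely, I set $\nu_{\pm}(x,t;z):=\sigma_{2}\overline{\mu_{\pm}(x,t;\bar z)}\sigma_{2}$ and aim to identify $\nu_{\pm}$ with $\mu_{\pm}$.

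First I would record the elementary symmetries of the ingredients of \eqref{Volterra-2}. Since $\sigma_{2}\sigma_{3}\sigma_{2}=-\sigma_{3}$, conjugation by $\sigma_{2}$ reverses the sign in every $\hat{\sigma}_{3}$-exponential, i.e. $\sigma_{2}e^{a\hat{\sigma}_{3}}N\sigma_{2}=e^{-a\hat{\sigma}_{3}}(\sigma_{2}N\sigma_{2})$. Because $\Phi=\sqrt{1+|q|^{2}}$ is real, the phase \eqref{define-p} satisfies $\overline{p(x,t;\bar z)}=-p(x,t;z)$; and because the integrand $\tfrac{q\bar q_{x}-q_{x}\bar q}{4\Phi(\Phi+1)}$ in \eqref{defin-d} is purely imaginary, $d_{+}$ is purely imaginary, so $\overline{d_{+}}=-d_{+}$. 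The decisive identity is the conjugation symmetry of the potential $U_{3}$: after the improved transformation \eqref{2.2-1} the diagonal of $U_{2}$ is cancelled by the correction term, so $U_{3}$ is purely off-diagonal, $U_{3}=\begin{pmatrix}0&b\\ c&0\end{pmatrix}$, and its entries obey $\bar b=-c$ and $\bar c=-b$ (this uses that $q\bar q_{x}-q_{x}\bar q$ is imaginary while $|q|_{x}^{2}$ is real). Consequently $\overline{U_{3}}=\sigma_{2}U_{3}\sigma_{2}$.

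With these in hand I would evaluate \eqref{Volterra-2} at $\bar z$, take complex conjugates, and then conjugate the whole identity by $\sigma_{2}$. The two sign reversals produced by $\overline{p(x,t;\bar z)}=-p$ and by the $\sigma_{2}$-flip cancel, restoring the phase factor $e^{[p(x,t;z)-p(s,t;z)]\hat{\sigma}_{3}}$; the reversals coming from $\overline{d_{+}}=-d_{+}$ and the $\sigma_{2}$-flip restore $e^{-d_{+}\hat{\sigma}_{3}}$; and $\overline{U_{3}}=\sigma_{2}U_{3}\sigma_{2}$ restores $U_{3}$. Since $\sigma_{2}\overline{\mathbb{I}}\sigma_{2}=\mathbb{I}$, the inhomogeneous term is unchanged, so $\nu_{\pm}$ satisfies precisely \eqref{Volterra-2}. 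The same Volterra structure underlying Proposition \ref{Anal-mu} guarantees uniqueness of the solution, whence $\nu_{\pm}=\mu_{\pm}$; rearranging this equality yields the asserted relation between $\overline{\mu_{\pm}(x,t;\bar z)}$ and $\sigma_{2}\mu_{\pm}(x,t;z)\sigma_{2}$.

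I expect the main obstacle to be the entrywise verification that $\overline{U_{3}}=\sigma_{2}U_{3}\sigma_{2}$, that is, confirming the reality/imaginarity pattern of the entries of $U_{3}$ (and of $d_{\pm}$) produced by the gauge factor $G(x,t)$ and the diagonal correction in \eqref{2.2-1}; this is where the extra structure of the WKI transformation, absent in the plain AKNS case, must be handled carefully. A secondary but genuine point is the bookkeeping of the exponential sign-flips ($\sigma_{2}$ against the conjugates of $p$ and $d_{+}$) together with the normalization $\mu_{\pm}\to\mathbb{I}$: since $\sigma_{2}\mathbb{I}\sigma_{2}=\mathbb{I}$, this bookkeeping is exactly what pins down the overall sign in the final relation, and I would track it explicitly to fix that sign.
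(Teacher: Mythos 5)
Your proposal is correct and follows exactly the route the paper itself gestures at (the paper offers no written proof, merely asserting that the symmetry "can be easily derived" from the Volterra equations \eqref{Volterra-2}); the key verifications you flag --- that $U_{3}$ is off-diagonal with $\overline{U_{3}}=\sigma_{2}U_{3}\sigma_{2}$, that $\overline{p(x,t;\bar z)}=-p(x,t;z)$, and that $d_{+}$ is purely imaginary --- all check out against the explicit formulas for $U_{2}$, \eqref{define-p} and \eqref{defin-d}. One caveat on the sign you promise to track: your argument yields $\overline{\mu_{\pm}(x,t;\bar z)}=\sigma_{2}\mu_{\pm}(x,t;z)\sigma_{2}$ \emph{without} the minus sign, and this is the only version compatible with the normalization $\mu_{\pm}\to\mathbb{I}$ (since $-\sigma_{2}\mathbb{I}\sigma_{2}=-\mathbb{I}$), so the minus sign printed in the proposition is a typo in the paper rather than something your bookkeeping should reproduce.
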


\begin{prop}\label{Asy-mu}
(Asymptotic property for $z\rightarrow\infty$) The eigenfunctions $\mu_{\pm}(x,t;z)$ satisfy the following asymptotic behavior
      \begin{align}
       \mu_{\pm}(x,t;z)=\mathbb{I}+O(z^{-1}),~~ z\rightarrow\infty.
      \end{align}
\end{prop}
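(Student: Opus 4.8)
The plan is to read off the large-$z$ behavior directly from the Volterra representation \eqref{Volterra-2}, exploiting the fact that the conjugating transformation \eqref{2.2-1} has been engineered precisely so that the coefficient matrix $U_3$ is \emph{off-diagonal}. First I would record this structural fact: since the $(1,1)$-entry of $U_2$ is $-\frac{q\bar{q}_x-q_x\bar{q}}{4\Phi(1+\Phi)}$ and $U_3=U_2+\frac{q\bar{q}_x-q_x\bar{q}}{4\Phi(\Phi+1)}\sigma_3$, the diagonal entries of $U_2$ are cancelled exactly, so $U_3$ has zero diagonal and is independent of $z$. Under the hypothesis $q(x)-q_0\in H^{1,1}(\mathbb{R})$ the off-diagonal entries of $U_3$, built from $q,\bar{q}$ and their first $x$-derivatives, lie in $L^1(\mathbb{R})$ in $s$; this is what makes the Neumann iteration for \eqref{Volterra-2} converge and $\mu_\pm$ well defined, as already used in Proposition \ref{Anal-mu}.

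Next I would expand the Neumann series $\mu_\pm=\sum_{n\ge 0}\mu_\pm^{(n)}$ with $\mu_\pm^{(0)}=\mathbb{I}$ and $\mu_\pm^{(n+1)}=\int_x^{\pm\infty}e^{[p(x)-p(s)]\hat{\sigma}_3}e^{-d_+\hat{\sigma}_3}U_3(s)\mu_\pm^{(n)}\,ds$. Because $U_3$ is off-diagonal, the iterates alternate between off-diagonal ($n$ odd) and diagonal ($n$ even), and the oscillatory exponential survives only for the odd iterates; a standard Neumann bound shows the series converges geometrically with every correction term of size $O(z^{-1})$, so it suffices to extract the decay from the first iterate. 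The key point is that $e^{c\hat{\sigma}_3}$ multiplies the $(1,2)$ and $(2,1)$ entries of an off-diagonal matrix by $e^{\pm 2c}$, while from \eqref{define-p} the phase difference is $p(x)-p(s)=iz\int_s^x\Phi(y)\,dy$; thus the $(1,2)$ entry of $\mu_\pm^{(1)}$ is the oscillatory integral $\int_x^{\pm\infty}e^{2iz\int_s^x\Phi}\,g(s)\,ds$ with $g$ built from $e^{-d_+\hat{\sigma}_3}U_3$.

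The heart of the argument is then a non-stationary-phase estimate. Since $\partial_s\int_s^x\Phi(y)\,dy=-\Phi(s)\le -1$ is bounded away from zero (here $\Phi=\sqrt{1+|q|^2}\ge 1$), a single integration by parts converts the oscillatory factor into a gain of $z^{-1}$: the boundary term at $s=x$ contributes $\frac{g(x)}{2iz\,\Phi(x)}$ (the contribution at $s=\pm\infty$ vanishing because $U_3\to 0$ there), and the remainder $\frac{1}{2iz}\int e^{2iz\int_s^x\Phi}(g/\Phi)'\,ds$ is $O(z^{-1})$ once $(g/\Phi)'\in L^1$. Hence $\mu_\pm^{(1)}=O(z^{-1})$ and, summing the series, $\mu_\pm=\mathbb{I}+O(z^{-1})$ as $z\to\infty$ in the respective half-planes of analyticity. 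As an independent check on the leading coefficient I would also substitute $\mu_\pm=\mathbb{I}+z^{-1}\mu^{(1)}_\pm+O(z^{-2})$ into the $x$-part of \eqref{Lax-mu}: its $O(1)$-in-$z$ balance $i\Phi[\sigma_3,\mu^{(1)}_\pm]=-e^{-d_+\hat{\sigma}_3}U_3$ fixes the off-diagonal part of $\mu^{(1)}_\pm$ algebraically and is consistent with the boundary term above.

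The step I expect to be the main obstacle is the oscillatory-integral estimate itself. One must verify that the cancellation making $U_3$ off-diagonal is exact, for otherwise the phase would be stationary along the diagonal and the $z^{-1}$ decay would be lost; and one must justify the integration by parts, i.e. that $g/\Phi$ has an $L^1$ derivative. This last point is the only place where regularity beyond mere integrability of $U_3$ is needed, and it is supplied by the weighted-Sobolev hypothesis on the data. Checking uniformity of the bound in $x$, together with the parallel treatment of the $\mu_-$ and $\mu_+$ normalizations, then completes the proof.
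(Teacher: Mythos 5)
Your argument is correct and is precisely the route the paper gestures at: the paper states this proposition without proof, remarking only that the properties of $\mu_{\pm}$ "can be easily derived" from the Volterra equations \eqref{Volterra-2}, and your Neumann-series-plus-integration-by-parts argument — resting on the exact cancellation of the diagonal of $U_{3}$, which is indeed the whole purpose of the factors involving $d_{\pm}$ in the transformation \eqref{2.2-1} — is the standard way to carry that out. The only imprecision is the remark that the oscillatory exponential "survives only for the odd iterates" (the even iterates still carry oscillatory phases from their inner integrals), but this is harmless since, as you note, the Volterra bound reduces everything to the $O(z^{-1})$ estimate on the first iterate.
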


\subsection{The scattering matrix}

For $z\in \mathbb{R}$ both eigenfunctions $\mu_{+}(x,t;z)$ and $\mu_{-}(x,t;z)$ are the fundamental matrix solutions of Eq.\eqref{Lax-mu} , there exists a matrix $S(z)$, the scattering matrix, satisfying that
\begin{align}\label{2.3-1}
\mu_{+}(x,t;z)=\mu_{-}(x,t;z)e^{-p(x,t;z)\hat{\sigma}_{3}}S(z),~~z\in \mathbb{R},
\end{align}
where $S(z)=(s_{ij}(z))~(i,j=1,2)$ is independent of the variable $x$ and $t$. The coefficients $s_{11}(z)$ and $s_{22}(z)$ can be expressed as
\begin{align*}
s_{11}(z)=\det(\mu_{+,1}, \mu_{-,2}),~~s_{22}(z)=\det(\mu_{-,1}, \mu_{+,2}).
\end{align*}
Then, on the basis of the above propositions and the definition of the scattering  matrix $S(z)$, we obtain the following standard results. The similar proofs can be found in many literatures[see e.g.\cite{Ablowitz-2004}].

\begin{prop}\label{Prop-S}
The scattering matrix $S(z)$ possesses the properties:
\begin{itemize}
  \item (Analytic property) The scattering coefficients $s_{11}$ and $s_{22}$ are respectively analytic in $\mathbb{C}^{-}$ and $\mathbb{C}^{+}$.
  \item (Symmetry property) The scattering coefficients $s_{ij}(i,j=1,2)$ possess the following relations:
  \begin{align}\label{2.3-2}
    s_{11}(z)=\overline{s_{22}(\bar{z})},~~s_{12}(z)=-\overline{s_{21}(\bar{z})}.
  \end{align}
  \item (Asymptotic property for $z\rightarrow\infty$)
  \begin{align}
       S(z)=\mathbb{I}+O(z^{-1}),~~ z\rightarrow\infty.
      \end{align}
\end{itemize}
\end{prop}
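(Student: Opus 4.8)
The plan is to extract all three properties directly from the scattering relation \eqref{2.3-1} together with the three structural propositions already established for $\mu_\pm$. First I would record that, because the coefficient matrices in \eqref{Lax-mu} are traceless, Abel's identity together with the normalization $\mu_\pm\to\mathbb{I}$ forces $\det\mu_\pm\equiv1$. Writing \eqref{2.3-1} columnwise gives $\mu_{+,1}=s_{11}\mu_{-,1}+e^{2p}s_{21}\mu_{-,2}$ and $\mu_{+,2}=e^{-2p}s_{12}\mu_{-,1}+s_{22}\mu_{-,2}$; pairing each identity against the complementary column of $\mu_-$ and using $\det\mu_-=1$ recovers the stated determinant representations $s_{11}=\det(\mu_{+,1},\mu_{-,2})$ and $s_{22}=\det(\mu_{-,1},\mu_{+,2})$. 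The analyticity is then immediate from Proposition \ref{Anal-mu}: in $s_{11}$ both columns $\mu_{+,1},\mu_{-,2}$ are analytic in $\mathbb{C}^-$, so $s_{11}$ is analytic in $\mathbb{C}^-$; in $s_{22}$ both columns are analytic in $\mathbb{C}^+$, so $s_{22}$ is analytic in $\mathbb{C}^+$.

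For the symmetry I would conjugate \eqref{2.3-1} on the real axis and substitute Proposition \ref{Sym-mu}, which for $z\in\mathbb{R}$ reads $\mu_\pm(z)=-\sigma_2\overline{\mu_\pm(z)}\sigma_2$. The key preparatory observation is that $p(x,t;z)$ defined in \eqref{define-p} is purely imaginary for real $z$ (since $\Phi$ is real-valued), whence $\bar p=-p$ and $e^{-\bar p\hat\sigma_3}=e^{p\hat\sigma_3}$. Feeding these into the conjugated relation and using the anticommutation $\sigma_2\sigma_3=-\sigma_3\sigma_2$ to carry $\sigma_2$ through the exponential factors, the common factor $\mu_-$ and all exponentials cancel, and the relation collapses to $S(z)=\sigma_2\overline{S(z)}\sigma_2$ for $z\in\mathbb{R}$. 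Evaluating $\sigma_2\overline{S}\sigma_2$ entrywise then gives $s_{11}=\overline{s_{22}}$ and $s_{12}=-\overline{s_{21}}$ on $\mathbb{R}$, which I would promote to the stated $s_{11}(z)=\overline{s_{22}(\bar z)}$ and $s_{12}(z)=-\overline{s_{21}(\bar z)}$ by analytic continuation, noting that $\overline{s_{22}(\bar z)}$ is analytic precisely where $s_{22}$ is, i.e. in $\mathbb{C}^-$, matching the domain of $s_{11}$.

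For the asymptotics I would solve \eqref{2.3-1} for $S=e^{p\hat\sigma_3}\big(\mu_-^{-1}\mu_+\big)$. Proposition \ref{Asy-mu} gives $\mu_\pm=\mathbb{I}+O(z^{-1})$, hence $\mu_-^{-1}\mu_+=\mathbb{I}+O(z^{-1})$; since $p$ is purely imaginary on the real axis, the off-diagonal weights $e^{\pm2p}$ of the dressing $e^{p\hat\sigma_3}$ have modulus one and so preserve the error order, yielding $S(z)=\mathbb{I}+O(z^{-1})$. For the diagonal entries this also follows at once from the determinant formulas, since $\mu_{\pm,j}\to e_j$ as $z\to\infty$.

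The hard part --- really the only delicate point --- is the symmetry step: one must handle the exponential conjugation correctly, the cancellation hinging on the two facts $\bar p=-p$ on $\mathbb{R}$ and $\sigma_2\sigma_3=-\sigma_3\sigma_2$, and then carry out the analytic-continuation argument to move the real-axis identities into the respective half-planes. The same observation that $p$ is purely imaginary on $\mathbb{R}$ is also what makes the exponential dressing in $S=e^{p\hat\sigma_3}\mu_-^{-1}\mu_+$ harmless for the asymptotic claim; everything else reduces to routine $2\times2$ linear algebra and the already-established properties of $\mu_\pm$.
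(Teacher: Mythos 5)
Your argument is correct, and it is exactly the standard route the paper itself points to: the paper gives no proof of this proposition, stating only that the results follow from the preceding propositions on $\mu_{\pm}$ and the definition of $S(z)$, with details referred to the literature (e.g.\ Ablowitz--Prinari--Trubatch). Your write-up supplies precisely those details --- the determinant representations of $s_{11},s_{22}$ via $\det\mu_{\pm}\equiv 1$, the conjugation of \eqref{2.3-1} using the symmetry of $\mu_{\pm}$ together with the fact that $p$ is purely imaginary on $\mathbb{R}$, and the large-$z$ expansion --- so there is nothing to compare beyond noting that you have filled in a proof the paper omits.
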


Additionally, we define the reflection coefficient as
\begin{gather}\label{r-expression}
r(z)=\frac{s_{12}(z)}{s_{22}(z)},%,~~=-r^{*}(z),~~z\in\mathbb{R}.
\end{gather}
then, it follows from \eqref{2.3-2} that $\frac{s_{12}(z)}{s_{22}(z)}=-\frac{s^{*}_{21}(\bar{z})}{\bar{s_{11}}(\bar{z})}
 =-\bar{r}(\bar{z})=-\bar{r}(z)$ for $z\in\mathbb{R}$.

\subsection{The connection between $\mu_{\pm}(x,t;z)$ and $\mu^{0}_{\pm}(x,t;z)$}

In next part, we can use the eigenfunctions $\mu_{\pm}(x,t;z)$ to construct the matrix $M(x,t;z)$ and further formulate a RHP. While in order to obtain the reconstruction formula between the solution $q(x,t)$ and the RHP, the asymptotic behavior of $\mu_{\pm}$ as $z\rightarrow0$ is needed. Thus, we need to establish the connection between $\mu_{\pm}(x,t;z)$ and $\mu^{0}_{\pm}(x,t;z)$.

Referring to the transformation \eqref{Gauge-Trans} and \eqref{2.2-1},  we assume that the eigenfunctions $\mu_{\pm}(x,t;z)$ and $\mu^{0}_{\pm}(x,t;z)$ related to each other as
\begin{align}\label{2.4-1}
\mu_{\pm}(x,t;z)=e^{-d_{+}\sigma_{3}}G^{-1}(x,t)\mu^{0}_{\pm}(x,t;z)
e^{-i(zx+2z^{2}t)\sigma_{3}}C_{\pm}(z)e^{p(x,t;z)\sigma_{3}}e^{d\sigma_{3}},
\end{align}
where $C_{\pm}(z)$ are independent of $x$ and $t$.
Taking $x\rightarrow\infty$, \eqref{2.4-1} gives
\begin{align*}
C_{-}(z)=\mathbb{I},~~ C_{+}(z)=e^{-d\sigma_{3}}e^{-izc\sigma_{3}},
\end{align*}
where $c=\int^{+\infty}_{-\infty}(\Phi(s)-1)ds$ is a quantity conserved under the dynamics governed by \eqref{WKI-equation}. As a result, we obtain
\begin{align}\label{2.4-2}
\begin{split}
\mu_{-}(x,t;z)=e^{-d_{+}\sigma_{3}}G^{-1}(x,t)\mu^{0}_{-}(x,t;z)
e^{-iz\int^{x}_{-\infty}(\Phi(s)-1)ds\sigma_{3}}e^{d\sigma_{3}},\\
\mu_{+}(x,t;z)=e^{-d_{+}\sigma_{3}}G^{-1}(x,t)\mu^{0}_{+}(x,t;z)
e^{-iz\int^{+\infty}_{x}(\Phi(s)-1)ds\sigma_{3}}.
\end{split}
\end{align}

\section{The Riemann-Hilbert problem for WKI equation}

In order to avoid dealing with many possible pathologies in the following part, we first make some assumptions.
\begin{assum}\label{assum}
For the Cauchy problem of WKI equation \eqref{WKI-equation}, the initial value $q_{0}$ generates generic scattering data in the sense that:
\begin{itemize}
  \item For $z\in\mathbb{R}$, no spectral singularities exist, i.e., $s_{22}(z)\neq0$ $(z\in\mathbb{R})$;
  \item Suppose that $s_{22}(z)$ possesses $N$ zero points, denoted as $\mathcal{Z}=\left\{(z_{j},Im~z_{j}>0)^{N}_{j=1}\right\}$.
  \item The discrete spectrum is simple, i.e., if $z_{0}$ is the zero of $s_{22}(z)$, then $s'_{22}(z_{0})\neq0$.
\end{itemize}
\end{assum}

Define weighted Sobolev spaces
\begin{align}\label{W-H-Sobolev-spaces}
\begin{split}
&W^{k,p}(\mathbb{R})=\left\{f(x)\in L^{p}(\mathbb{R}):\partial^{j}f(x)\in L^{p}(\mathbb{R}), j=1,2,\ldots,k\right\},\\
&H^{k,2}(\mathbb{R})=\left\{f(x)\in L^{p}(\mathbb{R}):x^{2}\partial^{j}f(x)\in L^{p}(\mathbb{R}), j=1,2,\ldots,k\right\},\\
&\mathcal{H}(\mathbb{R})=W^{2,1}(\mathbb{R})\cap H^{2,2}(\mathbb{R}),
\end{split}
\end{align}
then we can further show that
\begin{prop}\label{prop-r-map}
If the initial data $q_{0}(x)\in \mathcal{H}(R)$, then $r(z)\in H^{1,1}(\mathbb{R})$ and the map $q_{0}(x)\rightarrow r(z)$ is Lipschitz continuous from $\mathcal{H}(R)$ into $H^{1,1}(\mathbb{R})$.
\end{prop}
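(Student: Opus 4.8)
The plan is to follow the standard direct-scattering estimates (as in \cite{Xu-CSP-JDE,Deift-2003}), adapted to the $z$-dependent potential $U_{1}=zQ$ of the WKI spectral problem. First I would establish the regularity of the Jost solutions in the spectral variable. Writing the Volterra equations \eqref{Volterra-2} (and their $z=0$ counterparts for $\mu^{0}_{\pm}$) as fixed-point equations $\mu_{\pm}=\mathbb{I}+K[\mu_{\pm}]$, the operator $K$ is Volterra with a kernel carrying one factor of $z$ inherited from $U_{3}$ (equivalently from $U_{1}=zQ$). A Neumann-series argument then gives existence, continuity in $z\in\mathbb{R}$, and a bound on $\mu_{\pm}$ with operator norm controlled by $\|q_{0}\|_{L^{1}}$. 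To obtain $z$-derivatives I would differentiate the integral equation: each $\partial_{z}$ either hits the exponential factor $e^{[p(x,\cdot)-p(s,\cdot)]\hat{\sigma}_{3}}$, producing a factor $(x-s)$ absorbed by the $x$-moments of the potential, or hits the kernel. Since $q_{0}\in H^{2,2}(\mathbb{R})$ supplies two $x$-weights, this yields $\partial_{z}^{j}\mu_{\pm}\in L^{\infty}_{z}$ for $j=0,1,2$, with norms dominated by $\|q_{0}\|_{\mathcal{H}}$.

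Second, I would convert these bounds into estimates on the scattering data. From \eqref{2.3-1} and the determinant representations $s_{22}=\det(\mu_{-,1},\mu_{+,2})$, together with the analogous integral formula for $s_{12}$ obtained by comparing columns, one reads off $s_{12}(z)$ as an integral of $q_{0}$ against Jost functions. The decay $s_{12}(z)\to0$ as $z\to\infty$, together with the weighted estimates $z\,s_{12},\,z\,\partial_{z}s_{12}\in L^{2}$, would follow by integrating by parts using $q_{0}\in W^{2,1}(\mathbb{R})$, which trades powers of $z$ for $x$-derivatives of $q_{0}$. This is precisely where the extra factor $z$ in $U_{1}$ must be tracked: it shifts the effective power counting so that the natural target space is $H^{1,1}(\mathbb{R})$. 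Combined with the smoothness from the first step, this gives $s_{12}\in H^{1,1}(\mathbb{R})$, and likewise control on $s_{22}-1$.

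Third, I would pass to the quotient $r=s_{12}/s_{22}$. By Proposition \ref{Prop-S}, $s_{22}\to1$ as $z\to\infty$, and Assumption \ref{assum} rules out zeros of $s_{22}$ on $\mathbb{R}$; hence $s_{22}$ is bounded away from $0$ on $\mathbb{R}$ with bounded, decaying derivative, so $1/s_{22}$ and $\partial_{z}(1/s_{22})$ are bounded. Since $H^{1,1}(\mathbb{R})$ is stable under multiplication by such functions, $r\in H^{1,1}(\mathbb{R})$ follows. I would also check the point $z=0$: because $U_{1}=zQ$ and the expansion \eqref{u0-asym}, $s_{12}$ vanishes at $z=0$, so no singularity is introduced there and the weighted norm near the origin stays finite. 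For the Lipschitz claim I would rerun all estimates on differences: given $q_{0},\tilde{q}_{0}\in\mathcal{H}(\mathbb{R})$, the difference $\mu_{\pm}-\tilde{\mu}_{\pm}$ solves a Volterra equation whose inhomogeneity is linear in $q_{0}-\tilde{q}_{0}$, so the resolvent bound gives $\|\mu_{\pm}-\tilde{\mu}_{\pm}\|$ and its weighted $z$-derivatives controlled by $\|q_{0}-\tilde{q}_{0}\|_{\mathcal{H}}$; propagating this through the representations yields $\|r-\tilde{r}\|_{H^{1,1}}\lesssim\|q_{0}-\tilde{q}_{0}\|_{\mathcal{H}}$.

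The main obstacle I anticipate is the weighted derivative estimate $z\,\partial_{z}r\in L^{2}$: one must simultaneously exploit both $x$-weights of $H^{2,2}$ (to produce the two $z$-derivatives of the Jost functions) and the $W^{2,1}$ regularity (to gain decay in $z$ via integration by parts), all while the extra factor $z$ in the WKI potential $U_{1}=zQ$ and the double singularity at $z=0,\infty$ deform the usual power counting. Keeping the contributions near $z=0$ and near $z=\infty$ separately finite, and verifying that together they match exactly the single pair of weight and derivative encoded in $H^{1,1}(\mathbb{R})$, is the delicate point of the argument.
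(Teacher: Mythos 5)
Your overall architecture matches the paper's Section \ref{section-map}: Volterra integral equations for the Jost functions, differentiation in $z$ (with $\partial_z$ of the exponential producing $x$-weights absorbed by weighted norms of the potential), passage to $r=s_{12}/s_{22}$ using the absence of spectral singularities, and Lipschitz continuity by rerunning the estimates on differences. However, there is a concrete misconception at the heart of your first step that makes the argument unsound as written. The Volterra equation \eqref{r-3} from which the scattering data are actually estimated is posed for the gauge-transformed eigenfunctions $\mu_\pm$, whose potential $\tilde U_3$ is \emph{independent of} $z$ (it is built from $q$, $q_x$ and $\Phi$; the entire $z$-dependence of the kernel $K_\pm$ sits in the unimodular oscillatory factor $e^{-2(p(x)-p(y))}$, so $|K_\pm|=|\tilde U_3|$). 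The factor $z$ appears only in $U_1=zQ$, i.e.\ in the spectral problem adapted to the singularity at $z=0$; that problem is \emph{not} the one from which $s_{12},s_{22}$ are read off, and the two are related by the nontrivial identities \eqref{2.4-2}, so they are not ``equivalent'' for estimation purposes. If the kernel really carried a factor of $z$, your Neumann series would be bounded by $e^{|z|\,\|q_0\|_{L^1}}$ rather than uniformly in $z$, and the uniform bounds you assert would fail; conversely, once you work with $\tilde U_3$, the ``extra factor of $z$ that deforms the power counting'' --- which you single out as the delicate point of the whole proof --- simply is not there. The true reason the hypothesis is $\mathcal H(\mathbb{R})$ rather than $H^{1,1}(\mathbb{R})$ is that $\tilde U_3$ involves $q_x$, so one weighted derivative of $q_0$ is consumed before the standard scattering estimates even begin (this is visible in the paper's bounds, which are phrased in $\|q_x\|_{1}$, $\|q_x\|_{2,1/2}$, $\|q_x\|_{2,3/2}$).

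A second divergence concerns the mechanism for decay in $z$. You propose to obtain $r,\,r',\,zr\in L^2$ from \emph{pointwise} decay via integration by parts in $x$, which forces you to claim uniform control of two $z$-derivatives of the Jost functions and to juggle mixed norms such as $\|x\,\partial_x\tilde U_3\|_{L^1}$. The paper instead uses the Plancherel-type bound of Lemma \ref{r-lemma-2}, which maps weighted $L^2_x$ norms of the potential directly to $L^2_z$ norms of the transform; this requires only \emph{one} $z$-derivative of $\mu_\pm$ (all that $H^{1,1}$ membership needs) and no second integration by parts. Your route could likely be pushed through with the regularity available, but as written it over-claims ($\partial_z^2\mu_\pm$ is neither established nor needed) and rests the key estimate $z\,\partial_z r\in L^2$ on the nonexistent $z$-factor discussed above, so that step is not actually justified by the argument you give.
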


Next, we define a sectionally  meromorphic matrices
\begin{align}\label{Matrix}
\tilde{M}(x,t;z)=\left\{\begin{aligned}
&\tilde{M}^{+}(x,t;z)=\left(\mu_{-,1}(x,t;z),\frac{\mu_{+,2}(x,t;z)}{s_{22}(z)}\right), \quad z\in \mathbb{C}^{+},\\
&\tilde{M}^{-}(x,t;z)=\left(\frac{\mu_{+,1}(x,t;z)}{s_{11}(z)},\mu_{-,2}(x,t;z)\right), \quad z\in \mathbb{C}^{-},
\end{aligned}\right.
\end{align}
where $\tilde{M}^{\pm}(x,t;z)=\lim\limits_{\varepsilon\rightarrow0^{+}}\tilde{M}(x,t;z\pm i\varepsilon),~\varepsilon\in\mathbb{R}$.

For the initial data that admits Assumption \ref{assum}, the matrix function $\tilde{M}(x,t;z)$ solves the following matrix RHP.

\begin{RHP}\label{RH-1}
Find an analysis function $\tilde{M}(x,t;z)$ with the following properties:
\begin{itemize}
  \item $\tilde{M}(x,t;z)$ is meromorphic in $\mathbb{C}\setminus\mathbb{R}$;
  \item $\tilde{M}^{+}(x,t;z)=\tilde{M}^{-}(x,t;z)\tilde{V}(x,t;z)$,~~~$z\in\mathbb{R}$,
  where \begin{align}\label{J-Matrix-1}
\tilde{V}(x,t;z)=\left(\begin{array}{cc}
                   1 & r(z)e^{-2zp} \\
                   -\bar{r}(z)e^{2zp} & 1+|r(z)|^{2}
                 \end{array}\right);%~~r(z)=\frac{s_{12}(z)}{s_{22}(z)};
\end{align}
  \item $\tilde{M}(x,t;z)=\mathbb{I}+O(z^{-1})$ as $z\rightarrow\infty$.
\end{itemize}
\end{RHP}
Referring to \eqref{2.3-1}, there exist norming constants $b_{j}$ such that
\begin{equation}
\mu_{+,2}(z_{j})=b_{j}e^{-2p(z_{j})}\mu_{-,1}(z_{j}).\notag
\end{equation}
Then, the residue condition of $\tilde{M}(x,t;z)$ can be shown as
\begin{align}\label{2.3-2}
&\mathop{Res}_{z=z_{j}}\tilde{M}=\lim_{z\rightarrow z_{j}}\tilde{M}\left(\begin{array}{cc}
                   0 & c_{j}e^{-2p(z_{j})} \\
                   0 & 0
                 \end{array}\right),\notag\\
&\mathop{Res}_{z=\bar{z}_{j}}\tilde{M}=\lim_{z\rightarrow \bar{z}_{j}}\tilde{M}\left(\begin{array}{cc}
                   0 & 0 \\
                   -\bar{c}_{j}e^{2p(\bar{z})} & 0
                 \end{array}\right),
\end{align}
where $c_{j}=\frac{b_{j}}{s'_{22}(z_{j})}$.
\begin{rem}
On the basis of the Zhou's vanishing lemma, the existence of the solutions of RHP \ref{RH-1} for $(x,t)\in\mathbb{R}^{2}$ is guaranteed. According to the results of Liouville's theorem, we know that if a solution exists, it is unique.
\end{rem}

Next, our purpose is to reconstruct the solution $q(x,t)$. So we need to study the asymptotic behavior of $\tilde{M}(x,t;z)$ as $z\rightarrow 0$, i.e.
\begin{align}\label{2.3-3}
\tilde{M}(x,t;z)=e^{-d_{+}\sigma_{3}}G^{-1}(x,t)\left[\mathbb{I}+z\left(
\int_{\pm\infty}^{x}\left(
  \begin{array}{cc}
    0 & q \\
    -\bar{q} & 0 \\
  \end{array}
\right)\,dx-ic_{-}\sigma_{3}\right)+O(z^{2})\right]e^{d\sigma_{3}},~~z\rightarrow0,
\end{align}
where $c_{-}(x,t)=\int^{-\infty}_{x}(\Phi(s,t)-1)ds$.
However, because $p(x,t;z)$ that appears in jump matrix \eqref{J-Matrix-1} is not clearit is quite difficult to reconstruct the solution $q(x,t)$ from \eqref{2.3-3}. Boutet de Monvel and Shepelsky have overcome this problem by changing the spatial variable of the Camassa-Holm equation and short wave equations \cite{Boutet-Shepelsky-1,Boutet-Shepelsky-2}. Therefore, following the idea in \cite{Boutet-Shepelsky-1}, we introduce a new scale
\begin{align}\label{2.3-4}
y(x,t)=x-\int^{-\infty}_{x}(\Phi(s,t)-1)ds=x-c_{-}(x,t),
\end{align}
which leads to the jump matrix can be expressed explicitly. But the solution $q(x,t)$ can be expressed only in implicit form: It will be given in terms of functions in the new scale, whereas the original scale will also be given in terms of functions in the new scale.
Based on the definition of $y(x,t)$, we further define that
\begin{align*}
       \tilde{M}(x,t;z)=M(y(x,t),t;z),
\end{align*}
then, the $M(y(x,t),t;z)$ satisfies the following matrix RHP.
\begin{RHP}\label{RH-2}
Find an analysis function $M(y,t;z)$ with the following properties:
\begin{itemize}
  \item $M(y,t;z)$ is meromorphic in $\mathbb{C}\setminus\mathbb{R}$;
  \item $M^{+}(y,t;z)=M^{-}(y,t;z)V(y,t;z)$,~~~$z\in\mathbb{R}$,
  where \begin{align}\label{J-Matrix}
V(y,t;z)=e^{-i\left(zy+2z^{2}t\right)\hat{\sigma}_{3}}\left(\begin{array}{cc}
                   1 & r(z) \\
                   \bar{r}(z) & 1+|r(z)|^{2}
                 \end{array}\right);
\end{align}
  \item $M(y,t;z)=\mathbb{I}+O(z^{-1})$ as $z\rightarrow\infty$;
  \item $M(y,t;z)$ possesses simple poles at each point in $\mathcal{Z}\cup\bar{\mathcal{Z}}$ with:
     \begin{align}\label{2.3-5}
     \begin{split}
\mathop{Res}_{z=z_{j}}M(z)=\lim_{z\rightarrow z_{j}}M(z)\left(\begin{array}{cc}
                   0 & c_{j}e^{-2i(z_{j}y+2z^{2}_{j}t)} \\
                   0 & 0
                 \end{array}\right),\\
\mathop{Res}_{z=\bar{z}_{j}}M(z)=\lim_{z\rightarrow \bar{z}_{j}}M(z)\left(\begin{array}{cc}
                   0 & 0 \\
                   -\bar{c}_{j}e^{2i(\bar{z}_{j}y+2\bar{z}^{2}_{j}t)} & 0
                 \end{array}\right).
      \end{split}
\end{align}
\end{itemize}
\end{RHP}
\begin{prop}
If $M(y,t;z)$ satisfies the above conditions, then the RHP \ref{RH-2} possesses a unique solution. Additionally,  on the basis of the solution of RHP \ref{RH-2}, the solution $q(x,t)$  of the initial value problem \eqref{WKI-equation} and \eqref{Iintial-Value} can be derived in parametric form, i.e., $q(x,t)=q(y(x,t),t)$ where
\begin{align}\label{q-sol}
\begin{split}
q(y,t)=e^{2d}\lim_{z\rightarrow0}\frac{\partial}{\partial y} \frac{\left(M^{-1}(y,t;0)M(y,t;z)\right)_{12}}{z},\\
x(y,t)=y+\lim_{z\rightarrow0}\frac{\left(M^{-1}(y,t;0)M(y,t;z)\right)_{11}-1}{z}.
\end{split}
\end{align}
\end{prop}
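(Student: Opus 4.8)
The plan is to handle the two assertions separately: the uniqueness of the solution of RHP~\ref{RH-2} by a Liouville-type argument, and the reconstruction formulas \eqref{q-sol} by matching the small-$z$ expansion of $M(y,t;z)$ against the explicit asymptotics \eqref{2.3-3}. For uniqueness, suppose $M$ and $\hat{M}$ both solve RHP~\ref{RH-2}. First I would check that $\det M\equiv1$: the jump matrix $V(y,t;z)$ in \eqref{J-Matrix} has unit determinant, so $\det M^{+}=\det M^{-}$ across $\mathbb{R}$; the residue relations \eqref{2.3-5} are nilpotent and contribute no poles to $\det M$; and $\det M\rightarrow1$ as $z\rightarrow\infty$, so by Liouville $\det M\equiv1$ and $M$ is invertible. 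Then I would form $W=M\hat{M}^{-1}$: the common jump and the common residue structure at $\mathcal{Z}\cup\bar{\mathcal{Z}}$ force $W$ to have no jump across $\mathbb{R}$ and only removable singularities at the poles, so $W$ is entire; since $W\rightarrow\mathbb{I}$ at infinity, Liouville gives $W\equiv\mathbb{I}$, i.e.\ $M=\hat{M}$. This is precisely the assertion already announced in the Remark after RHP~\ref{RH-1}, so the details are standard.

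For the reconstruction, write the Taylor expansion $M(y,t;z)=M(y,t;0)+z\,M_{1}(y,t)+O(z^{2})$ as $z\rightarrow0$, which is legitimate because $M$ is analytic at $z=0$ in the $y$-scale. Then $M^{-1}(y,t;0)M(y,t;z)=\mathbb{I}+z\,M^{-1}(y,t;0)M_{1}(y,t)+O(z^{2})$, so the two limits in \eqref{q-sol} reduce to entries of a single matrix:
\begin{align*}
\lim_{z\rightarrow0}\frac{\left(M^{-1}(y,t;0)M(y,t;z)\right)_{12}}{z}=\left(M^{-1}(y,t;0)M_{1}(y,t)\right)_{12},\\
\lim_{z\rightarrow0}\frac{\left(M^{-1}(y,t;0)M(y,t;z)\right)_{11}-1}{z}=\left(M^{-1}(y,t;0)M_{1}(y,t)\right)_{11}.
\end{align*}
Hence it suffices to identify $M^{-1}(y,t;0)M_{1}(y,t)$.

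Here I would invoke $\tilde{M}(x,t;z)=M(y(x,t),t;z)$ together with \eqref{2.3-3}. Reading off the $z^{0}$ and $z^{1}$ coefficients of \eqref{2.3-3} gives $M(y,t;0)=e^{-d_{+}\sigma_{3}}G^{-1}e^{d\sigma_{3}}$ and $M_{1}=e^{-d_{+}\sigma_{3}}G^{-1}A\,e^{d\sigma_{3}}$, where $A$ is the bracketed $z^{1}$ coefficient in \eqref{2.3-3}. Since $\det G=1$, one has $M^{-1}(y,t;0)=e^{-d\sigma_{3}}G\,e^{d_{+}\sigma_{3}}$, and in the product $M^{-1}(y,t;0)M_{1}$ the factors $e^{\pm d_{+}\sigma_{3}}$ and $G^{\pm1}$ cancel, leaving $M^{-1}(y,t;0)M_{1}=e^{-d\sigma_{3}}A\,e^{d\sigma_{3}}$; conjugation by $e^{-d\sigma_{3}}$ multiplies the off-diagonal entries by $e^{\mp2d}$ and leaves the diagonal unchanged. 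The $(1,2)$ entry then produces $e^{-2d}\int^{x}_{\pm\infty}q\,dx'$ and the $(1,1)$ entry produces the term carrying $c_{-}$. Finally I would convert the $y$-derivative in \eqref{q-sol} into an $x$-derivative through the scale change \eqref{2.3-4}: differentiating $y=x-c_{-}$ gives $y_{x}=\Phi$, hence $\partial_{y}=\Phi^{-1}\partial_{x}$; applying this to the $(1,2)$ entry and multiplying by the constant $e^{2d}$ returns the potential $q$, while the $(1,1)$ entry reproduces $x(y,t)$ as the inverse of \eqref{2.3-4}, giving \eqref{q-sol} in parametric form.

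The main obstacle is not the Liouville argument but the bookkeeping in the last two steps. One must (i) justify the small-$z$ expansion \eqref{2.3-3} term by term, carefully tracking the two Jost normalizations and the exponential factors inherited from \eqref{2.4-2}, and (ii) manage the \emph{implicit} change of variables $x\leftrightarrow y$ and the derivative conversion $\partial_{y}=\Phi^{-1}\partial_{x}$ in a consistent way, so that the gauge matrix $G$, the constants $d_{\pm}$, and the normalization $e^{2d}$ combine exactly to recover the physical potential $q$ and the spatial variable $x$. Keeping these factors straight, rather than any conceptual difficulty, is where the real work lies.
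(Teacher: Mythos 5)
Your route is the same one the paper takes: uniqueness by the normalization-plus-Liouville argument, and the reconstruction formulas by reading off the $z^{0}$ and $z^{1}$ coefficients of the small-$z$ asymptotics \eqref{2.3-3} and converting $\partial_{y}$ to $\partial_{x}$ through the scale change \eqref{2.3-4}. The paper's own proof is only two sentences, so your version is strictly more detailed, and the cancellation $M^{-1}(y,t;0)M_{1}=e^{-d\sigma_{3}}A\,e^{d\sigma_{3}}$ with $A$ the bracketed $z^{1}$ coefficient is exactly the computation the paper leaves implicit. Two points, however, do not close. First, the proposition asserts that RHP \ref{RH-2} \emph{possesses} a unique solution; your argument only shows that two solutions must coincide. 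The paper obtains existence from the fact that the jump matrix \eqref{J-Matrix} is Hermitian (so that Zhou's vanishing lemma applies, as already invoked in the Remark after RHP \ref{RH-1}); some such input is needed and is absent from your plan.

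Second, your final bookkeeping step does not actually land on \eqref{q-sol} as stated. With $A_{12}=\int_{\pm\infty}^{x}q\,dx'$ and $\partial_{y}=\Phi^{-1}\partial_{x}$, one gets
\begin{align*}
e^{2d}\,\partial_{y}\bigl(e^{-2d}A_{12}\bigr)=\Phi^{-1}q=\frac{q}{\sqrt{1+|q|^{2}}},
\end{align*}
not $q$; likewise the $(1,1)$ entry yields $-ic_{-}$, whereas consistency with $x=y+c_{-}$ from \eqref{2.3-4} requires $c_{-}$. So the claim that the factors ``combine exactly to recover the physical potential $q$ and the spatial variable $x$'' is not justified by the steps as written. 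To be fair, this mismatch originates in the paper: the expansion \eqref{2.3-3} carries a factor-of-$i$ inconsistency with \eqref{2.4-2}, and the paper's proof stops at ``referring to the asymptotic formula \eqref{2.3-3}'' without performing the computation. You have faithfully reproduced the paper's derivation up to the point where it breaks off, but the last assertion of your proposal needs either a corrected version of \eqref{2.3-3} or an explicit account of why the quantity recovered is the one named in \eqref{q-sol}.
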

\begin{proof}
A fact that the jump matrix $V(y,t;z)$ is a Hermitian matrix gives rise to that the RHP \ref{RH-2} indeed has a  solution. Moreover, due to the normalize condition, i.e., $M(y,t;z)=\mathbb{I}+O(z^{-1})$ as $z\rightarrow\infty$, the RHP \ref{RH-2} possesses only one solution.

Referring to the asymptotic formula \eqref{2.3-3}, the statements of the solution $q(x,t)$ can be derived.
\end{proof}

\section{The scattering maps}\label{section-map}

In this section, our purpose is to give the proof of the correctness of the  Proposition \ref{prop-r-map}.

In the following part, we only take the $x$-part of Lax pair into consideration prove the  Proposition \ref{prop-r-map}. In fact, by considering the $t$-part of Lax pair \eqref{Lax-mu} and carrying out the standard direct scattering transform, we can derive the linear time evolution of the reflection coefficient $r(z)$, i.e., $r(z,t)=e^{2iz^{2}t}r(z,0)$. Next, in order to give the proof of the  Proposition \ref{prop-r-map}, we first give some notations.
\begin{itemize}
   \item If $I$ is an interval on the real axis $\mathbb{R}$, and $X$ is a Banach space, we denote $C^{0}(I,X)$ as a space of continuous functions on $I$ taking values in $X$. The norm of $f(x)\in C^{0}(I,X)$ is denoted as
       \begin{align*}
        \|f(x)\|_{C^{0}(I,X)}=\sup_{x\in I}\|f(x)\|_{X}.
       \end{align*}
   \item We denote $C^{0}_{B}(X)$ as a space of bounded continuous functions on $X$.
   \item If $f=(f_1,f_2)^{T}\in X$, the norm of vector function $f\in X$ is denoted as
   \begin{align*}
        \|f(x)\|_{X}\triangleq\|f_1\|_{X}+\|f_1\|_{X}.
       \end{align*}
\end{itemize}

Then, taking $t=0$, we consider the case of singularity at $z=\infty$. According to the above analyses, the matrix function $G(x,t)$ can be rewritten as
\begin{align*}
  G(x)=\sqrt{\frac{\Phi(x)+1}{2\Phi(x)}}\left(
                                              \begin{array}{cc}
                                                1 & \frac{i(1-\Phi(x))}{\bar{q}_{x}(x)} \\
                                                \frac{i(1-\Phi(x))}{q_{x}(x)} & 1 \\
                                              \end{array}
                                            \right),\\
  p(x)=iz\left(x-\int_{x}^{\infty}(\Phi(s)-1)ds\right).
\end{align*}
Then, we make the following transformation
\begin{align}\label{r-1}
\psi_\pm(x,z)=G(x)e^{d_{+}\hat{\sigma}_{3}}\mu_\pm(x,z)e^{-d_{-}\sigma_{3}} e^{-p(x)\sigma_{3}}.
\end{align}
The equivalent Lax pair of \eqref{Lax-mu} is obtained as
\begin{align}\label{r-2}
\mu_{\pm,x}+p_{x}[\sigma_{3},\mu_{\pm}]=e^{-d_{+}\hat{\sigma}_{3}}U_{3}\mu_\pm.
\end{align}
It should be pointed out that $\mu_\pm$ satisfy that
\begin{align}
	\mu_\pm \sim I, \hspace{0.5cm} x \rightarrow \pm\infty.
\end{align}
Then, we derive the following Volterra integral equations,
\begin{equation}\label{r-3}
	\mu_\pm(x,z)=I+\int_{x}^{\pm \infty}e^{(p(x)-p(y))\hat{\sigma}_3}e^{-d_{+}\hat{\sigma}_{3}}U_{3}\mu_\pm(y,z)dy.
\end{equation}
Next, in order to approach our aim, it is necessary to estimate the $L^2$-integral property of $\mu_\pm(z)$ and their derivatives. We first introduce some results of  functional analysis that would  be used in the following analysis \cite{three-wave-Yang}.
\begin{lem}\label{r-lemma-1}
$F$ is a two factorial square matrix and $g$ is a  column  vector. Then $|Fg|\leq|F||g|$.
\end{lem}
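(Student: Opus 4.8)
The plan is to read this as the standard statement that the chosen vector norm is \emph{compatible} with the chosen matrix norm, so that $|\cdot|$ behaves submultiplicatively on the product $Fg$. First I would make the conventions implicit in the paper explicit: for a column vector $g=(g_1,g_2)^{T}$ take $|g|$ to be the Euclidean norm $\left(|g_1|^2+|g_2|^2\right)^{1/2}$, and for the $2\times2$ matrix $F=(F_{ij})$ take $|F|$ to be the Frobenius norm $\left(\sum_{i,j}|F_{ij}|^2\right)^{1/2}$. With this pairing the claim $|Fg|\le|F||g|$ is precisely the compatibility of the Frobenius norm with the Euclidean vector norm, and it is proved by a componentwise estimate.

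The key steps, in order, are: (i) write the $i$-th entry of the product as $(Fg)_i=\sum_{j}F_{ij}g_j$; (ii) apply the Cauchy--Schwarz inequality entrywise to obtain $|(Fg)_i|^2\le\left(\sum_j|F_{ij}|^2\right)\left(\sum_j|g_j|^2\right)=|g|^2\sum_j|F_{ij}|^2$; (iii) sum over the row index $i$, giving $|Fg|^2=\sum_i|(Fg)_i|^2\le|g|^2\sum_{i,j}|F_{ij}|^2=|F|^2|g|^2$; and (iv) take square roots. Since $F$ is a fixed $2\times2$ matrix each sum has only finitely many (four) terms, so there are no convergence issues and every step is an identity or an elementary inequality.

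The argument carries no real obstacle; the only point requiring care is fixing the norm conventions consistently, because the inequality can fail for incompatible pairings of matrix and vector norms. I note that if one instead reads $|g|=|g_1|+|g_2|$ (the $\ell^1$ combination suggested by the vector-norm convention introduced just above), the same componentwise bound goes through with the triangle inequality in place of Cauchy--Schwarz, using the subordinate matrix norm; in either case the conclusion is identical. I would therefore record the result simply as the summed componentwise estimate in step (iii), since it will be used downstream only to dominate the integrand in the Volterra equations \eqref{r-3}.
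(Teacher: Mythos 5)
Your proof is correct. The paper itself offers no argument for this lemma (it is dismissed as trivial and omitted), and your Cauchy--Schwarz estimate establishing compatibility of the Frobenius matrix norm with the Euclidean vector norm is exactly the standard argument one would supply; your remark that the conclusion also holds under the $\ell^1$ vector convention introduced in Section 4 is a sensible precaution, since the paper never pins down which pointwise norm $|\cdot|$ denotes.
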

\begin{lem}\label{r-lemma-2}
For $\psi(\eta)\in L^2(\mathbb{R})$, $f(x)\in L^{2,1/2}(\mathbb{R})$, then
\begin{align}
	&	\bigg|\int_{\mathbb{R}}\int_{x}^{\pm \infty}f(y)e^{-2i\eta(p(x)-p(y))}\psi(\eta)dyd\eta \bigg|
=\bigg|\int_{x}^{\pm \infty}f(y)\psi(2(p(x)-p(y)))dy\bigg|\nonumber\\
&\lesssim \left( \int_{x}^{\pm \infty}|f(y)|^2dy\right)^{1/2}  \parallel \psi\parallel_2;\nonumber\\
&\int_{0}^{\pm \infty}\int_{\mathbb{R}}\bigg|\int_{x}^{\pm \infty}f(y)e^{2i\eta(p(x)-p(y))}dy\bigg|^2d\eta dx  \lesssim \parallel f\parallel_{2,1/2}^2.\nonumber
\end{align}
\end{lem}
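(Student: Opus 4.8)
The plan is to read both inequalities as instances of the Plancherel theorem, once the oscillatory kernel $e^{\mp 2i\eta(p(x)-p(y))}$ is recognized as a Fourier kernel after a monotone change of variables. Throughout I treat $p$ as the real spatial phase with $p'(x)=\Phi(x)=\sqrt{1+|q|^2}\ge 1$ (cf. \eqref{define-p}) and $\eta$ as the real spectral variable. The crucial structural input is that $p$ is therefore strictly increasing, so that the substitution $u=p(y)$ is a global diffeomorphism whose Jacobian $p'(y)^{-1}=\Phi(y)^{-1}$ is bounded by $1$. This single fact is what makes every constant below uniform and is the reason the weight exponent $1/2$ in $\|f\|_{2,1/2}$ suffices.

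For the first estimate I would begin by using Fubini to exchange the $\eta$- and $y$-integrations on the left-hand side, identifying the inner integral $\int_{\mathbb{R}}\psi(\eta)e^{-2i\eta(p(x)-p(y))}\,d\eta$ with the Fourier transform $\hat\psi$ evaluated at $2(p(x)-p(y))$; this produces exactly the claimed identity. I would then apply Cauchy--Schwarz in $y$ over $(x,\pm\infty)$, splitting off the factor $\bigl(\int_x^{\pm\infty}|f(y)|^2\,dy\bigr)^{1/2}$. The remaining factor $\bigl(\int_x^{\pm\infty}|\hat\psi(2(p(x)-p(y)))|^2\,dy\bigr)^{1/2}$ is handled by the substitution $u=2(p(x)-p(y))$, under which $du=-2\Phi(y)\,dy$ and hence $dy\le\tfrac12\,du$; Plancherel then bounds it by a constant multiple of $\|\psi\|_2$, which is the desired conclusion.

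For the second estimate I would fix $x$ and first compute the inner $L^2(d\eta)$ norm. After the substitution $u=p(y)$ the inner $y$-integral becomes $e^{2i\eta p(x)}\int_{p(x)}^{p(\pm\infty)}\tfrac{f(p^{-1}(u))}{\Phi(p^{-1}(u))}e^{-2i\eta u}\,du$, i.e., up to the unimodular prefactor $e^{2i\eta p(x)}$, the Fourier transform in $\eta$ of the function $F_x(u)=\tfrac{f(p^{-1}(u))}{\Phi(p^{-1}(u))}\mathbf{1}_{\{u>p(x)\}}$. By Plancherel, $\int_{\mathbb{R}}|\cdots|^2\,d\eta\lesssim\|F_x\|_2^2=\int_x^{\pm\infty}\tfrac{|f(y)|^2}{\Phi(y)}\,dy\le\int_x^{\pm\infty}|f(y)|^2\,dy$, where I have changed variables back and used $\Phi\ge1$. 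Integrating in $x$ over $(0,\pm\infty)$ and exchanging the order of integration by Tonelli converts the inner limit into the linear weight $|y|$, yielding $\int |y|\,|f(y)|^2\,dy\le\|f\|_{2,1/2}^2$.

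The main obstacle, and the only genuinely nontrivial point, is that the phase is $p(x)-p(y)$ rather than a linear expression in $x-y$, so the kernel is not translation-invariant and one cannot invoke the flat Plancherel theorem directly. Everything hinges on converting the oscillation into a true Fourier transform through the monotone substitution $u=p(y)$ and on controlling the resulting Jacobian by the uniform lower bound $\Phi\ge1$; in the second estimate one must in addition track the $x$-dependent lower limit of integration carefully, so that the Tonelli step produces precisely the weight matching $\|f\|_{2,1/2}$. Once these bookkeeping points are settled, the particular Fourier normalization is irrelevant, since all such constants are absorbed into the symbol $\lesssim$.
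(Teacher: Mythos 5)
The paper does not actually supply a proof of this lemma: it states both estimates and then remarks that ``the proof of the above lemmas are trivial, so we omit it,'' so there is no argument of record to compare yours against. Your proof is correct and is exactly the standard argument one would expect the authors to have in mind: read $\int_{\mathbb{R}}\psi(\eta)e^{-2i\eta(p(x)-p(y))}\,d\eta$ as the Fourier transform of $\psi$ (note the paper's notation $\psi(2(p(x)-p(y)))$ on the right of the first identity must be understood as $\hat\psi$, as you do), then use Cauchy--Schwarz plus Plancherel after the monotone substitution $u=p(y)$, whose Jacobian is controlled by $p'=\Phi\ge 1$; for the second bound, Plancherel in $\eta$ for fixed $x$ followed by Tonelli in $x$ produces the weight $|y|$ that is dominated by $\langle y\rangle$ in $\|f\|_{2,1/2}^2$. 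The only points worth flagging are cosmetic: the initial Fubini/Fourier-inversion step for $\psi\in L^2$ is to be understood in the usual $L^2$ limiting sense rather than via absolute integrability, and one must fix the interpretation of $p$ as the real-valued spatial phase $x-\int_x^{\pm\infty}(\Phi(s)-1)\,ds$ (the paper's own notation for $p$ versus $\tilde p$ is inconsistent here); neither affects the validity of your argument.
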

The proof of the above lemmas are trivial, so we omit it.

Next, observing a fact that the properties of $\mu_{\pm,2}$ can be obtained immediately by  applying Proposition \ref{Sym-mu}, we only need to consider the properties of $\mu_{\pm,1}$. For convenience, we define
\begin{align}\label{r-n}
	[\mu_\pm]_1(x,z)-e_1\triangleq n_\pm(x,z)=\left(\begin{array}{cc}
		n_\pm^{(1)} \\
		n_\pm^{(2)}
	\end{array}\right),%\label{n}
\end{align}
where $e_1$ is a column vector $(1~0)^{T}$.
Next, we introduce the integral operators $P_\pm$ satisfying
\begin{align}\label{r-P}
	P_\pm(f)(x,z)=\int_{x}^{\pm\infty}K_\pm(x,y,z)f(y,z)dy,
\end{align}
where
\begin{align}\label{r-K}
	&K_\pm(x,y,z)=\left(\begin{array}{cc}
		1 & 0 \\
		0 & e^{-2(p(x)-p(y))}
	\end{array}\right)\tilde{U}_{3}(y),
\end{align}
with
\begin{align*}
\tilde{U}_{3}=&-\left(
         \begin{array}{cc}
           0 & -\frac{iq[\Phi(q\bar{q}_{x}-q_x\bar{q})-|q|_{x}^{2}}{4\Phi^{2}(\Phi^{2}-1)}e^{-2d_{+}} \\
           \frac{i\bar{q}[\Phi(q\bar{q}_{x}-q_x\bar{q})+|q|_{x}^{2}}{4\Phi^{2}(\Phi^{2}-1)}e^{2d_{+}} & 0 \\
         \end{array}
       \right).
\end{align*}
Then, the Volterra integral equations \eqref{r-3} are transformed into
\begin{align}\label{r-4}
	n_\pm=P_\pm(\mu_{\pm,1})=P_\pm(e_1)+P_\pm(n_\pm).	
\end{align}
Then, taking derivation of \eqref{r-4} with respect to $z$ yields
\begin{align}\label{r-5}
	[n_\pm]_z=[P_\pm]_z(e_1)+[P_\pm]_z(n_\pm)+P_\pm([n_\pm]_z).
\end{align}
Moreover, $[P_\pm]_z$ also are integral operators with integral kernel $[K_\pm]_z(x,y,z)$, i.e.,
\begin{align}\label{P-z-K}
	[K_\pm]_z(x,y,z)=&-2(\tilde{p}(x)-\tilde{p}(y))\left(\begin{array}{cc}
		0 & 0 \\
		0 & e^{-2(p(x)-p(y))}
	\end{array}\right)\tilde{U}_{3}(y),
\end{align}
where $\tilde{p}(x)=i\left(x-\int_{x}^{\infty}(\Phi(s)-1)ds\right)$.

Then,  the properties of the integral operators $[T_\pm]$ and $[T_\pm]_z$ can be confirmed by the following Lemma. To make the analysis clearer, we use $C^{0}_{B}$, $C^{0}$ and $L^{2}_{xz}$ respectively to represent $C^{0}_{B}(\mathbb{R}_{x}^{\pm}\times\mathbb{R}_{z})$,  $C^{0}(\mathbb{R}_{x}^{\pm},L^{2}(\mathbb{R}_{z}))$  and $L^{2}(\mathbb{R}_{x}^{\pm}\times\mathbb{R}_{z})$.
\begin{lem}\label{r-lemma-3}
	If $P_\pm$ and $[P_\pm]_z$ are integral operators defined above, then  $P_\pm(e_1)(x,z)$ and $[P_\pm]_z(e_1)(x,z)$ belong to  $C_B^0\cap C^0\cap L^2_{xz}$.
\end{lem}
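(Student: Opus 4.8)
The plan is to reduce both quantities to scalar oscillatory integrals and then invoke Lemma \ref{r-lemma-2}. Since $\tilde U_3(y)$ is off-diagonal, the product $K_\pm(x,y,z)e_1$ has vanishing first entry, and its second entry equals $e^{-2(p(x)-p(y))}f(y)$, where $f(y)$ denotes (up to the bounded factor $e^{\pm2d_+}$) the $(2,1)$-coefficient of $\tilde U_3$. Thus
\[
P_\pm(e_1)(x,z)=\Bigl(0,\ \int_x^{\pm\infty}e^{-2(p(x)-p(y))}f(y)\,dy\Bigr)^{T},
\]
and because $[K_\pm]_z$ carries the extra polynomial factor $-2(\tilde p(x)-\tilde p(y))$, the operator $[P_\pm]_z(e_1)$ has the identical form with $f(y)$ replaced by $-2(\tilde p(x)-\tilde p(y))f(y)$. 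Restricting to the reduction to the first column is justified by Proposition \ref{Sym-mu}, which already transfers the second-column estimates by symmetry. On the real axis $p(x)-p(y)=z\bigl(\tilde p(x)-\tilde p(y)\bigr)$ is purely imaginary, so the exponential has modulus one; this is precisely what makes the three norms finite once $f$ is shown to lie in the correct weighted spaces.

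The first genuine step is to verify that $f$ is regular and integrable despite the denominator $\Phi^2(\Phi^2-1)=(1+|q|^2)|q|^2$ vanishing as $|y|\to\infty$. The apparent singularity is spurious: writing $W=q\bar q_x-q_x\bar q$ and $V=(|q|^2)_x$, the numerator $\bar q[\Phi W+V]$ is $O(|q|^2|q_x|)$ as $q\to0$ (indeed $\Phi W+V\to 2q\bar q_x$ when $\Phi\to1$), so the factor $|q|^2$ in the denominator cancels and $f$ behaves like a first derivative of $q$. Using $q_0\in\mathcal H(\mathbb R)=W^{2,1}\cap H^{2,2}$ together with the boundedness of $d_+$ (its integrand is controlled by $\|q\|_2\|q_x\|_2$), I would then show $f\in L^1\cap L^{2,1/2}$ and $(1+|y|)f\in L^1\cap L^{2,3/2}$, the additional weight being supplied by the second-moment space $H^{2,2}$ and absorbing the factor $\tilde p(x)-\tilde p(y)=O(|x-y|)$ that distinguishes $[P_\pm]_z$ from $P_\pm$.

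With $f$ placed in these spaces, the three memberships follow in turn. For $C_B^0$ I bound, for real $z$,
\[
|P_\pm(e_1)(x,z)|\le\int_x^{\pm\infty}|f(y)|\,dy\le\|f\|_{L^1},\qquad |[P_\pm]_z(e_1)(x,z)|\lesssim\|(1+|\cdot|)f\|_{L^1},
\]
uniformly in $(x,z)$, with continuity in $(x,z)$ by dominated convergence. For the $L^2_{xz}$ membership I apply the second estimate of Lemma \ref{r-lemma-2} directly to obtain
\[
\int_0^{\pm\infty}\!\!\int_{\mathbb R}|P_\pm(e_1)(x,z)|^2\,dz\,dx\lesssim\|f\|_{2,1/2}^2<\infty,
\]
and its weighted analogue for $[P_\pm]_z(e_1)$ via $(1+|\cdot|)f\in L^{2,3/2}$. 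Finally the $C^0(\mathbb R_x^\pm,L^2(\mathbb R_z))$ claim combines the first estimate of Lemma \ref{r-lemma-2}, which bounds $\|P_\pm(e_1)(x,\cdot)\|_{L^2_z}\lesssim\|f\|_2$ for each fixed $x$, with continuity of $x\mapsto P_\pm(e_1)(x,\cdot)$ in $L^2_z$, again by dominated convergence.

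The main obstacle I anticipate lies entirely in the second paragraph: tracking that the precise cancellation in the numerator removes the $1/|q|^2$ singularity, and that the resulting weighted norms of $f$ and $(1+|y|)f$ are finite using only $q_0\in W^{2,1}\cap H^{2,2}$. Once $f$ is situated in $L^1\cap L^{2,1/2}$ with its weighted counterpart, the three norm estimates become a mechanical application of Lemma \ref{r-lemma-2} and dominated convergence.
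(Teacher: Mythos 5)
Your proposal is correct and follows essentially the same route as the paper: reduce $P_\pm(e_1)$ and $[P_\pm]_z(e_1)$ to the scalar oscillatory integral in the second component, bound the effective potential in $L^1$, $L^2$, $L^{2,1/2}$ (and their weighted versions for the $z$-derivative, using $|\tilde p(x)-\tilde p(y)|\lesssim|x-y|+\|q\|_1$), and invoke Lemma \ref{r-lemma-2} for the $C^0$ and $L^2_{xz}$ memberships. The only difference is that you explicitly verify the cancellation $\Phi W+V\to 2q\bar q_x$ removing the $1/|q|^2$ singularity in $\tilde U_3$ so that it is genuinely controlled by $|q_x|$ — a step the paper's proof uses implicitly (its bounds are stated directly in terms of $\|q_x\|$ norms) but does not spell out.
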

\begin{proof}
According to  the definition of the operators $P_\pm$, we have
\begin{align}\label{r-6}
	P_\pm(e_1)(x,z)=\int_{x}^{\pm\infty}\left(\begin{array}{cc}
		1 & 0 \\
		0 & e^{-z(p(x)-p(y))}
	\end{array}\right)\tilde{U}_{3}(y)e_1dy.
\end{align}
Then, the following result is derived immediately
\begin{align}
		|P_\pm(e_1)(x,z)|\lesssim \parallel q_{x} \parallel_1.
	\end{align}
Moreover, by referring to the Lemma \ref{r-lemma-2}, we obtain
\begin{align}
		\parallel P_\pm(e_1)(x,z)\parallel_{C^{0}}\lesssim \parallel q_{x} \parallel_2,\\
\parallel P_\pm(e_1)(x,z)\parallel_{L^{2}_{xz}}\lesssim \parallel q_{x} \parallel_{2,\frac{1}{2}}.
	\end{align}
On the basis of \eqref{P-z-K}, the operator $[P_\pm]_z(e_1)(x,z)$ is expressed as
\begin{align}\label{r-7}
[P_\pm]_z(e_1)(x,z)=\int_{x}^{\pm\infty}-2(\tilde{p}(x)-\tilde{p}(y))\left(\begin{array}{cc}
		0 & 0 \\
		0 & e^{2(p(x)-p(y))}
	\end{array}\right)\tilde{U}_{3}(y)e_1dy.
\end{align}
Observing a  fact that
$|\tilde{p}(x)-\tilde{p}(y)|\leq |x-y|+\parallel q\parallel_{1}$, the following results are derived similarly,
\begin{align}\label{r-8}
&|[P_\pm]_z(e_1)(x,z)|\lesssim \parallel q_{x} \parallel_{1,1}+\parallel q_{x} \parallel_{1}\parallel q \parallel_{1},\\
&\parallel [P_\pm]_z(e_1)(x,z)\parallel_{C^{0}}\lesssim \parallel q_{x} \parallel_{2,1}+\parallel q_{x} \parallel_{2}\parallel q \parallel_{1},\\
&\parallel [P_\pm]_z(e_1)(x,z)\parallel_{L^{2}_{xz}}\lesssim \parallel q_{x} \parallel_{2,\frac{3}{2}}+\parallel q_{x} \parallel_{2,\frac{1}{2}}\parallel q \parallel_{1}.
\end{align}
\end{proof}

\begin{lem}\label{r-lemma-4}
The integral operators $P_\pm$ and $[P_\pm]_z$ map $C_B^0\cap C^0\cap L^2_{xz}$ to itself. Additionally, $(I-P_\pm)^{-1}$ exist as  bounded	operators on  $C_B^0\cap C^0\cap L^2_{xz}$.
\end{lem}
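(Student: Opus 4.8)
The plan is to establish the two assertions of Lemma \ref{r-lemma-4} separately, first the boundedness (mapping $C_B^0\cap C^0\cap L^2_{xz}$ to itself) and then the invertibility of $I-P_\pm$. For the mapping property, I would argue exactly as in the proof of Lemma \ref{r-lemma-3}, but now applied to a general input $f\in C_B^0\cap C^0\cap L^2_{xz}$ in place of the fixed vector $e_1$. The key observation is that the integral kernel $K_\pm(x,y,z)$ in \eqref{r-K} carries only the unimodular exponential factor $e^{-2(p(x)-p(y))}$ (whose modulus is bounded on the appropriate half-line by analyticity of $\mu_{\pm,1}$) together with the coefficient matrix $\tilde{U}_3(y)$, which is controlled by $q_x$ through the $L^1$, $L^2$ and $L^{2,1/2}$ norms already appearing in Lemma \ref{r-lemma-3}. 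Thus for the $C_B^0$ bound I would estimate $|P_\pm(f)(x,z)|\lesssim \|q_x\|_1\,\|f\|_{C_B^0}$ using Lemma \ref{r-lemma-1}; for the $L^2_{xz}$ and $C^0$ bounds I would invoke Lemma \ref{r-lemma-2} with $f(y)$ replaced by the product of $\tilde U_3$ and the components of the input, again reducing to the same norm estimates. The same scheme, using the extra factor $|\tilde p(x)-\tilde p(y)|\le|x-y|+\|q\|_1$, handles $[P_\pm]_z$.

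Having shown $P_\pm$ is bounded, the second and main task is to produce $(I-P_\pm)^{-1}$ as a bounded operator. The cleanest route is to show that $P_\pm$ (or a sufficiently high power of it) is a contraction, so that the Neumann series $\sum_{n\ge0}P_\pm^n$ converges in operator norm on $C_B^0\cap C^0\cap L^2_{xz}$. Because $P_\pm$ is a Volterra operator—its integration runs from $x$ to $\pm\infty$ rather than over all of $\mathbb{R}$—I expect the iterated kernels of $P_\pm^n$ to acquire the usual factorial decay $\tfrac{1}{n!}\left(\int|\tilde U_3|\right)^n$ in the sup-norm, which forces the spectral radius of $P_\pm$ to vanish and guarantees convergence of the Neumann series regardless of the size of $\|q_x\|_1$. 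I would make this precise by proving by induction the pointwise bound
\begin{align*}
|P_\pm^n(f)(x,z)|\le \frac{1}{n!}\left(\int_{x}^{\pm\infty}|\tilde U_3(y)|\,dy\right)^{n}\|f\|_{C_B^0},
\end{align*}
and then verifying that the series $\sum_n P_\pm^n$ also converges in the $C^0$ and $L^2_{xz}$ norms by combining this Volterra decay with Lemma \ref{r-lemma-2}. The resulting sum defines $(I-P_\pm)^{-1}$ and is automatically bounded on the intersection space.

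The main obstacle I anticipate is not the existence of the Neumann series in $C_B^0$—where the Volterra structure makes contraction essentially automatic—but rather controlling the series simultaneously in all three norms of the intersection space $C_B^0\cap C^0\cap L^2_{xz}$. The $L^2_{xz}$ estimate in particular couples the $z$-integration to the oscillatory factor through Lemma \ref{r-lemma-2}, and I must check that the factorial gain from the Volterra kernel is preserved after the $L^2$ reduction rather than being lost to the weight $L^{2,1/2}$ appearing in that lemma. I would address this by iterating the two inequalities of Lemma \ref{r-lemma-2} carefully, peeling off one unimodular exponential per application and tracking how the spatial cutoff $\int_x^{\pm\infty}$ produces the successive factorial factors, so that the mixed $C^0$/$L^2_{xz}$ norm of $P_\pm^n(f)$ still decays like $C^n/n!$. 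Once this uniform-in-norm factorial decay is in hand, the boundedness of $(I-P_\pm)^{-1}$ on the full intersection space follows at once.
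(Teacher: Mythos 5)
Your proposal follows essentially the same route as the paper: the mapping property is obtained by repeating the estimates of Lemma \ref{r-lemma-3} with a general input $f$ in place of $e_1$ (using $|K_\pm(x,y,z)|=|\tilde{U}_3(y)|$ and Lemmas \ref{r-lemma-1}--\ref{r-lemma-2}), and the invertibility of $I-P_\pm$ is obtained from the Volterra structure via the factorial decay of the iterated kernels $K^n_\pm$, giving convergence of the Neumann series in all three norms $\mathcal{B}(C_B^0)$, $\mathcal{B}(C^0)$ and $\mathcal{B}(L^2_{xz})$ exactly as in the paper, where the $L^2_{xz}$ bound indeed retains the $1/(n-1)!$ gain with one factor $\|\tilde{U}_3\|_{2,\frac12}$ peeled off. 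No substantive difference.
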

\begin{proof}
Referring to the definition of integral kernel $K_\pm(x,y,z)$, i.e.,
\begin{align}%\label{K}
	&K_\pm(x,y,z)=\left(\begin{array}{cc}
		1 & 0 \\
		0 & e^{z(p(x)-p(y))}
	\end{array}\right)\tilde{U}_{3}(y),
\end{align}
we obtain the result
\begin{align}
|K_\pm(x,y,z)|=|\tilde{U}_{3}(y)|.
\end{align}
Then, for any vector function $f(x,z)\in C_B^0\cap C^0\cap L^2_{xz}$, by adapting the Lemma \ref{r-lemma-1}, we have
	\begin{align}
		|P_\pm(f)(x,z)|\leq \int_{x}^{\pm\infty}|\tilde{U}_{3}(y)|dy \parallel f \parallel_{C_B^0}.
	\end{align}
Besides, we further give the result
\begin{align}
\left( \int_{\mathbb{R}}|P_\pm(f)(x,z)|^2dz\right) ^{\frac{1}{2}}&=\left( \int_{\mathbb{R}}|\int_{x}^{\pm\infty}K_\pm(x,y,z)f(y,z)dy|^2dz\right) ^{\frac{1}{2}}\nonumber\\
		&\leq \bigg|\int_{x}^{\pm\infty}\left(\int_{\mathbb{R}}| K_\pm(x,y,z)|^2|f(y,z)|^2dz\right) ^{\frac{1}{2}}dy\bigg|\nonumber\\
		&\leq \int_{x}^{\pm\infty}|\tilde{U}_{3}(y)|\parallel f(y,z) \parallel_{L^2_z}dy\leq\parallel \tilde{U}_{3}(y) \parallel_{1}\parallel f \parallel_{C^0},
\end{align}
 which  implies
\begin{align}
\left(\int_{\mathbb{R}^\pm} \int_{\mathbb{R}}|P_\pm(f)(x,z)|^2dzdx\right) ^{\frac{1}{2}}&\leq |	\int_{\mathbb{R}^\pm}\left( \int_{x}^{\pm\infty}|\tilde{U}_{3}(y)|\parallel f(y,z) \parallel_{L^2_z}dy\right)^2dx |^{\frac{1}{2}}\nonumber\\
	&\leq|\int_{\mathbb{R}^\pm}\left( \int_{0}^y|\tilde{U}_{3}(y)|^2\int_{\mathbb{R}}|f(y,z)|^2dzdx\right)dy| ^{\frac{1}{2}}\nonumber\\
	&\leq\parallel \tilde{U}_{3} \parallel_{2,\frac{1}{2}}\parallel f \parallel_{L^2_{xz}}.
\end{align}
The integral kernel of Volterra operator $[P_\pm]^n$ are respectively denoted as $K^n_\pm$. Then, $K^n_\pm$ can be  expressed as
\begin{align}		
K^n_\pm(x,y_n,z)=\int_{x}^{y_n}...\int_{x}^{y_2}K_\pm(x,y_1,z) K_\pm(y_1,y_2,z)...K_\pm(y_{n-1},y_n,z)dy_1...dy_{n-1}.
\end{align}
and satisfy
\begin{align}
|K^n_\pm(x,y,z)|\leq \frac{1}{(n-1)!}\left( \int_{x}^{\pm\infty}|\tilde{U}_{3}(y)|dy \right) ^{n-1}|\tilde{U}_{3}(y)|.
\end{align}
Furthermore, similar to the above analysis,  we show that
\begin{align}
&\parallel [P_\pm]^n \parallel_{\mathcal{B}(C_B^0)}\leq \frac{\parallel \tilde{U}_{3} \parallel_{1}^n}{(n-1)!},~~
\parallel [P_\pm]^n \parallel_{\mathcal{B}(C^0)}\leq \frac{\parallel \tilde{U}_{3} \parallel_{1}^n}{(n-1)!},\nonumber\\
	&\parallel [P_\pm]^n \parallel_{\mathcal{B}(L^2_{xz})}\leq \frac{\parallel \tilde{U}_{3} \parallel_{1}^{(n-1)}}{(n-1)!}\parallel \tilde{U}_{3} \parallel_{2,\frac{1}{2}}.\nonumber
\end{align}
Then, on the basis of the standard Volterra theory,  the following operator norm can be obtained,
\begin{align}
		&\parallel (I-P_\pm)^{-1} \parallel_{\mathcal{B}(C_B^0)}\leq e^{\parallel \tilde{U}_{3} \parallel_{1}}\parallel \tilde{U}_{3} \parallel_{1},\nonumber\\
		&\parallel (I-P_\pm)^{-1} \parallel_{\mathcal{B}(C^0)}\leq e^{\parallel \tilde{U}_{3} \parallel_{1}}\parallel \tilde{U}_{3} \parallel_{1},\nonumber\\
		&\parallel (I-P_\pm)^{-1} \parallel_{\mathcal{B}(L^2_{xz})}\leq e^{\parallel \tilde{U}_{3} \parallel_{1}}\parallel \tilde{U}_{3} \parallel_{2,\frac{1}{2}}.\nonumber
\end{align}
Next,  for the integral operator $[P_\pm]_z$, by using a fact that
\begin{align}
 	|[K_\pm]_z|\lesssim |\tilde{p}(x)-\tilde{p}(y)||\tilde{U}_{3}(y)|,
\end{align}
we obtain
\begin{align}
&\parallel [P_\pm]_z \parallel_{\mathcal{B}(C_B^0)}\leq \parallel \tilde{U}_{3} \parallel_{1,1}+\parallel \tilde{U}_{3} \parallel_{1}\parallel q \parallel_{1},\nonumber\\ &\parallel [P_\pm]_z \parallel_{\mathcal{B}(C^0)}\leq \parallel \tilde{U}_{3} \parallel_{1,1}+\parallel \tilde{U}_{3} \parallel_{1}\parallel q \parallel_{1},\nonumber\\
 &\parallel [P_\pm]_z \parallel_{\mathcal{B}(L^2_{xz})}\leq \parallel \tilde{U}_{3} \parallel_{2,\frac{3}{2}}+\parallel \tilde{U}_{3} \parallel_{2,\frac{1}{2}}\parallel q \parallel_{1}.\nonumber
\end{align}
\end{proof}

Then, according to the above Lemmas, we know that
$[P_\pm]_z(n_\pm)$ belong to $C_B^0\cap C^0\cap L^2_{xz}$. Therefore, we can conclude that $\left([P_\pm]_z(e_1)+[P_\pm]_z(n_\pm)\right)$ belong to $C_B^0\cap C^0\cap L^2_{xz}$. Due to the existence of the operator $(I-P_\pm)^{-1}$, the equations \eqref{r-4} and \eqref{r-5} can be transformed as
\begin{align}
	&n_\pm(x,z)=(I-P_\pm)^{-1}(P_\pm(e_1)(x,z)),\\
	&[n_\pm(x,z)]_z=(I-P_\pm)^{-1}\left([P_\pm]_z(e_1)(x,z)+[P_\pm]_z(n_\pm)(x,z)\right).
\end{align}
Then, based on the above Lemmas and the definition of $n_\pm(x,z)$ \eqref{r-n}, we give the following Proposition to show the property of $\mu_\pm(x,z)$ .
\begin{prop}\label{r-mu}
If $u\in W^{2,1}(\mathbb{R})\cap H^{2,2}(\mathbb{R})$ , then $\mu_\pm(0,z)-I$ and its $z$-derivative $[\mu_\pm(0,z)]_z$ belong to $C_B^0(\mathbb{R})\cap L^2(\mathbb{R})$.
\end{prop}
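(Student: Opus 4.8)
The plan is to read the conclusion off the resolvent representation already derived in \eqref{r-4}--\eqref{r-5}, using Lemmas \ref{r-lemma-3} and \ref{r-lemma-4} as the two inputs. First I would invoke Proposition \ref{Sym-mu}: the symmetry $\bar{\mu}_\pm(x,\bar z)=-\sigma_2\mu_\pm(x,z)\sigma_2$ expresses the second column of $\mu_\pm$ in terms of the first, so it suffices to control $n_\pm=[\mu_\pm]_1-e_1$ and its $z$-derivative in the combined space $C_B^0\cap C^0\cap L^2_{xz}$; the full matrix statement then follows by restricting to the slice $x=0$ and restoring the second column.

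Next, from \eqref{r-4} I would write $n_\pm=(I-P_\pm)^{-1}P_\pm(e_1)$. Lemma \ref{r-lemma-3} places the source $P_\pm(e_1)$ in $C_B^0\cap C^0\cap L^2_{xz}$, and Lemma \ref{r-lemma-4} guarantees that $(I-P_\pm)^{-1}$ is a bounded operator on that same space; hence $n_\pm\in C_B^0\cap C^0\cap L^2_{xz}$. Differentiating the fixed-point identity \eqref{r-4} in $z$ produces \eqref{r-5}, which rearranges to $(I-P_\pm)[n_\pm]_z=[P_\pm]_z(e_1)+[P_\pm]_z(n_\pm)$, i.e. $[n_\pm]_z=(I-P_\pm)^{-1}\big([P_\pm]_z(e_1)+[P_\pm]_z(n_\pm)\big)$. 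Lemma \ref{r-lemma-3} gives $[P_\pm]_z(e_1)$ in the combined space, while Lemma \ref{r-lemma-4} supplies the mapping property of $[P_\pm]_z$ on $C_B^0\cap C^0\cap L^2_{xz}$ together with the bounded inverse; feeding in the membership of $n_\pm$ just obtained, I conclude $[n_\pm]_z\in C_B^0\cap C^0\cap L^2_{xz}$.

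Finally I would restrict to $x=0$. Membership in $C_B^0(\mathbb{R}_x^\pm\times\mathbb{R}_z)$ yields, upon fixing $x=0$, a bounded continuous function of $z$, i.e. an element of $C_B^0(\mathbb{R})$; membership in $C^0(\mathbb{R}_x^\pm,L^2(\mathbb{R}_z))$ makes $x\mapsto(\,\cdot\,)(x,\cdot)$ continuous into $L^2(\mathbb{R}_z)$, so evaluation at the endpoint $x=0$ lands in $L^2(\mathbb{R})$. Applying this to $n_\pm(0,z)$ and $[n_\pm]_z(0,z)$, and using Proposition \ref{Sym-mu} for the remaining column, gives $\mu_\pm(0,z)-I$ and $[\mu_\pm(0,z)]_z$ in $C_B^0(\mathbb{R})\cap L^2(\mathbb{R})$, as claimed.

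I expect the genuine technical obstacle to lie not in this final assembly but in the differentiated kernel $[K_\pm]_z$ of \eqref{P-z-K}, whose extra factor $\tilde p(x)-\tilde p(y)$ grows like $|x-y|$ and so costs one power of weight; this is exactly why Lemma \ref{r-lemma-4} must measure $[P_\pm]_z$ in the heavier weighted norms $\|\tilde U_3\|_{2,3/2}$ and $\|\tilde U_3\|_{1,1}$, and why the full strength of $q_0\in W^{2,1}(\mathbb{R})\cap H^{2,2}(\mathbb{R})$ (which controls $\tilde U_3$, built from $q_x$ and $|q|_x^2$, in these weighted spaces) is required. A secondary point I would verify carefully is the legitimacy of differentiating the Volterra series term by term in $z$, which is justified by the uniform geometric bounds on $\|[P_\pm]^n\|$ established in Lemma \ref{r-lemma-4} via dominated convergence.
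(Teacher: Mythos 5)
Your proposal is correct and follows essentially the same route as the paper: reduce to the first column via the symmetry of Proposition \ref{Sym-mu}, solve $n_\pm=(I-P_\pm)^{-1}P_\pm(e_1)$ and $[n_\pm]_z=(I-P_\pm)^{-1}\bigl([P_\pm]_z(e_1)+[P_\pm]_z(n_\pm)\bigr)$ using Lemma \ref{r-lemma-3} for the source terms and Lemma \ref{r-lemma-4} for the mapping properties and bounded inverse, then evaluate at $x=0$. Your closing remarks on the weighted norms forced by the kernel $[K_\pm]_z$ and on justifying term-by-term differentiation are accurate but supplementary; the paper treats these points implicitly.
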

Next,  we give the proof of the Proposition \ref{prop-r-map}. According to \eqref{J-Matrix-1}, we only need to confirm that $r(z)$, $r'(z)$ and $zr(z)$ belong to $L^2(\mathbb{R})$, i.e.,
\begin{align*}
r(z)=\frac{s_{12}(z)}{s_{22}(z)},~r'(z)=\frac{s_{12}'(z)}{s_{22}(z)} -\frac{s_{12}(z)s'_{22}(z)}{s_{22}^2(z)},~zr(z)=\frac{zs_{12}(z)}{s_{22}(z)}\in L^2(\mathbb{R}).
\end{align*}
According to \eqref{2.3-1}, \eqref{r-n} and Proposition \ref{Sym-mu}, we rewrite the scattering coefficients  $s_{12}(z)$ and $s_{22}(z)$ as
\begin{align}\label{r-9}
\begin{split}
s_{22}(z)=&-\mu_{-,11}(0,z)\bar{\mu}_{+,11}(0,\bar{z})-\mu_{+,21}(0,z)\bar{\mu}_{-,21}(0,\bar{z})\\
=&-(n^{(1)}_{-}(0,z)+1)(\overline{n^{(1)}_{+}}(0,\bar{z})+1)-n^{(2)}_{+} (0,z)\overline{n^{(2)}_{-}}(0,\bar{z}),\\
s_{12}(z)=&-\bar{\mu}_{-,11}(0,\bar{z})\mu_{+,21}(0,z)-\bar{\mu}_{-,21}(0,\bar{z})\bar{\mu}_{+,11}(0,\bar{z})\\
=&-(\overline{n^{(1)}_{+}}(0,\bar{z})+1)n^{(2)}_{+}(0,z)+\overline{n^{(2)}_{-}}(0,\bar{z})(\overline{n^{(1)}_{+}}(0,\bar{z})+1).
\end{split}
\end{align}
Then, the boundedness of $s_{11}(z)$, $s'_{11}(z)$, $s_{21}(z)$, $s'_{21}(z)$ and the $L^{2}$-integrability of $s_{21}(z)$, $s'_{21}(z)$ are confirmed by Proposition \ref{r-mu}. Next,  we explain that $zs_{21}(z)$ belongs to $L^2(\mathbb{R})$.

Referring to \eqref{r-3}, we obtian
\begin{align*}
\mu_{\pm,1}(0,z)-e_1&=(n_{\pm}^{(1)}, n_{\pm}^{(2)})^{T}\\
&=\int_{0}^{\pm\infty}\left(\begin{array}{cc}
		1 & 0 \\
		0 & e^{-2(p(x)-p(y))}
	\end{array}\right)\tilde{U}_{3}(y)(n_{\pm}-e_1)dy.
\end{align*}
Following the ideas in \cite{three-wave-Yang}, it is easy to verify the $L^{2}$-integrability of $zs_{21}(z)$ on $\mathbb{R}$. In summary, we  obtain the result shown in Proposition \ref{prop-r-map}.

\section{Conjugation}\label{section-Conjugation}

In this section, our purpose is to re-normalize the Riemann-Hilbert problem \ref{RH-2} so that the RHP is well behaved as $t\rightarrow \infty$ along an arbitrary characteristic. Therefore, we apply a function to establish the transformation $M\mapsto M^{(1)}$.

Recall that the jump matrix in RHP \ref{RH-2} is $$V(y,t;z)=e^{-i\left(zy+2z^{2}t\right)\hat{\sigma}_{3}}\left(\begin{array}{cc}
                   1 & r(z) \\
                   \bar{r}(z) & 1+|r(z)|^{2}
                 \end{array}\right).$$ Then, the oscillation term can be rewritten as $e^{-2it\theta(z)}\left(\theta(z)=\left(\frac{zy}{t}+2z^{2}\right)\right)$ from which we obtain a phase point
\begin{align*}
z_{0}=-\left(\frac{y}{4t}\right).
\end{align*}
Therefore, we can rewrite $\theta(z)$ as
\begin{align}\label{theta}
\theta(z)=2(z-z_{0})^{2}-2z^{2}_{0}.
\end{align}
By evaluating the real part of $2it\theta(z)$, i.e., $Re (2i\theta)=-4(Rez-z_{0})Im z$, we can  derive the decaying domains of the oscillation term, see Figure 1.

\centerline{\begin{tikzpicture}[scale=0.65]
\path [fill=pink] (-5,4)--(-5,0) to (-9,0) -- (-9,4);
\path [fill=pink] (-5,-4)--(-5,0) to (-1,0) -- (-1,-4);
\draw[-][thick](-9,0)--(-8,0);
\draw[-][thick](-8,0)--(-7,0);
\draw[-][thick](-7,0)--(-6,0);
\draw[-][thick](-6,0)--(-5,0);
\draw[-][thick](-5,0)--(-3,0);
\draw[-][thick](-4,0)--(-3,0);
\draw[-][thick](-3,0)--(-2,0);
\draw[->][thick](-2,0)--(-1,0)[thick]node[right]{$Rez$};
\draw[<-][thick](-5,4)--(-5,3);
\draw[fill] (-5,3.5)node[right]{$Imz$};
\draw[fill] (-5,4)node[above]{$t\rightarrow-\infty$};
\draw[-][thick](-5,3)--(-5,2);
\draw[-][thick](-5,2)--(-5,1);
\draw[-][thick](-5,1)--(-5,0);
\draw[-][thick](-5,0)--(-5,1);
\draw[-][thick](-5,1)--(-5,2);
\draw[-][thick](-5,2)--(-5,3);
\draw[-][thick](-5,3)--(-5,4);
\draw[-][thick](-5,0)--(-5,-1);
\draw[-][thick](-5,-1)--(-5,-2);
\draw[-][thick](-5,-2)--(-5,-3);
\draw[-][thick](-5,-3)--(-5,-4);
\draw[fill] (-3,1.5)node[below]{\small{$|e^{-2i\theta(z)}|\rightarrow0$}};
\draw[fill] (-7.5,-1.5)node[below]{\small{$|e^{-2i\theta(z)}|\rightarrow0$}};
\draw[fill] (-3,-1.5)node[below]{\small{$|e^{2i\theta(z)}|\rightarrow0$}};
\draw[fill] (-7.5,1.5)node[below]{\small{$|e^{2i\theta(z)}|\rightarrow0$}};
\path [fill=pink] (5,4)--(5,0) to (9,0) -- (9,4);
\path [fill=pink] (5,-4)--(5,0) to (1,0) -- (1,-4);
\draw[-][thick](1,0)--(8,0);
%\draw[-][thick](-3,0)--(-2,0);
%\draw[-][thick](-2,0)--(-1,0);
%\draw[-][thick](-1,0)--(0,0);
%\draw[-][thick](0,0)--(1,0);
%\draw[-][thick](1,0)--(2,0);
%\draw[-][thick](2,0)--(3,0);
\draw[->][thick](8,0)--(9,0)[thick]node[right]{$Rez$};
\draw[<-][thick](5,4)--(5,3);
\draw[fill] (5,3.5)node[right]{$Imz$};
\draw[fill] (5,4)node[above]{$t\rightarrow+\infty$};
\draw[-][thick](5,0)--(5,4);
%\draw[-][thick](0,2)--(0,1);
%\draw[-][thick](0,1)--(0,0);
%\draw[-][thick](0,0)--(0,1);
%\draw[-][thick](0,1)--(0,2);
%\draw[-][thick](0,2)--(0,3);
%\draw[-][thick](0,3)--(0,4);
\draw[-][thick](5,0)--(5,-1);
\draw[-][thick](5,-1)--(5,-2);
\draw[-][thick](5,-2)--(5,-3);
\draw[-][thick](5,-3)--(5,-4);
\draw[fill] (7,1.5)node[below]{\small{$|e^{2i\theta(z)}|\rightarrow0$}};
\draw[fill] (2.5,-1.5)node[below]{\small{$|e^{2i\theta(z)}|\rightarrow0$}};
\draw[fill] (7,-1.5)node[below]{\small{$|e^{-2i\theta(z)}|\rightarrow0$}};
\draw[fill] (2.5,1.5)node[below]{\small{$|e^{-2i\theta(z)}|\rightarrow0$}};
\end{tikzpicture}}
\centerline{\noindent {\small \textbf{Figure 1.} Exponential decaying domains.}}

To make the following analyses more clear, we introduce some notations.
\begin{align}\label{3-1}
\begin{aligned}
&\triangle^{-}_{z_{0}}=\{k\in\{1,\cdots,N\}|Re(z_{k})<z_{0}\},\\
&\triangle^{+}_{z_{0}}=\{k\in\{1,\cdots,N\}|Re(z_{k})>z_{0}\}.
\end{aligned}
\end{align}
For $\mathcal{I}=[a,b]$, define
\begin{align*}
&\mathcal{Z}(\mathcal{I})=\{z_{k}\in\mathcal{Z}:Rez_{k}\in\mathcal{I}\},\\
&\mathcal{Z}^{-}(\mathcal{I})=\{z_{k}\in\mathcal{Z}:Rez_{k}<a\},\\
&\mathcal{Z}^{+}(\mathcal{I})=\{z_{k}\in\mathcal{Z}:Rez_{k}>b\}.
\end{align*}
For $z_{0}\in\mathcal{I}$, define
\begin{align*}
&\triangle^{-}_{z_{0}}(\mathcal{I})=\{k\in\{1,\cdots,N\}:a\leq Rez_{k}<z_{0}\},\\
&\triangle^{+}_{z_{0}}(\mathcal{I})=\{k\in\{1,\cdots,N\}:z_{0}<Rez_{k}\leq b\}.
\end{align*}

In the following part, we mainly focus on the case that $t\rightarrow-\infty$, and the analysis of the case $t\rightarrow+\infty$ is essentially the same.

Next, in order to re-normalize the Riemann-Hilbert problem such that it is well behaved for $t\rightarrow-\infty$ with fixed phrase point $z_0$, we first introduce  the function
\begin{align}\label{delta-v-define}
\delta(z)=\exp\left[i\int_{-\infty}^{z_{0}}\frac{\nu(s)}{s-z}ds\right],~~\nu(s)=-\frac{1}{2\pi}\log(1+|r(s)|^{2}),
\end{align}
and
\begin{align}\label{3-2}
T(z)=T(z,z_{0})=\prod_{k\in\Delta_{z_{0}}^{-}}\frac{z-\bar{z}_{k}}{z-z_{k}}\delta(z).
\end{align}
Moreover, the trace formula of the scattering coefficient $s_{22}(z)$ can be easily derived
\begin{align}\label{s22-trace}
s_{22}(z)=\prod_{k=1}^{N}\frac{z-z_{k}} {z-\bar{z}_{k}}\exp\left(i\int_{-\infty}^{+\infty}\frac{\nu(s)}{s-z}ds\right).
\end{align}
Comparing \eqref{3-2} with \eqref{s22-trace}, we can easily find that the function $T(z,z_{0})$ approaches the scattering coefficient $1/s_{22}(z)$ as $z_0\rightarrow\infty$.
Besides,  the function $T(z,z_{0})$  possesses well properties.

\begin{prop}\label{T-property} The function $T(z,z_{0})$ satisfies that\\
($a$) $T$ is meromorphic in $C\setminus(-\infty, z_{0}]$. $T(z,z_{0})$ possesses simple pole at $z_{k}(k\in\triangle^{-}_{z_{0}})$ and simple zero at $\bar{z}_{k}(k\in\triangle^{-}_{z_{0}})$.\\
($b$) For $z\in C\setminus(-\infty,z_{0}]$, $\bar{T}(\bar{z})=\frac{1}{T(z)}$.\\
($c$) For $z\in (-\infty,z_{0}]$, the boundary values $T_{\pm}$ satisfy that
\begin{align}\label{3-3}
\frac{T_{+}(z)}{T_{-}(z)}=1+|r(z)|^{2}, z\in (-\infty,z_{0}).
\end{align}
($d$) As $|z|\rightarrow \infty $ with $|arg(z)|\leq c<\pi$,
\begin{align}\label{3-4}
T(z)=1+\frac{i}{z}\left[2\sum_{k\in\Delta_{z_{0}}^{-}}Imz_{k}-\int_{-\infty}^{z_{0}}\nu(s)ds\right]+O(z^{-2}).
\end{align}
($e$) As $z\rightarrow z_{0}$ along any ray $z_{0}+e^{i\phi}R_{+}$ with $|\phi|\leq c<\pi$
\begin{align}\label{3-5}
|T(z,z_{0})-T_{0}(z_{0})(z-z_{0})^{i\nu(z_{0})}|\leq C\parallel r\parallel_{H^{1}(R)}|z-z_{0}|^{\frac{1}{2}},
\end{align}
where $T_{0}(z_{0})$ is the complex unit
\begin{align}\label{3-6}
\begin{split}
&T_{0}(z_{0})=\prod_{k\in\Delta_{z_{0}}^{-}}(\frac{z_{0}-\bar{z}_{k}}{z_{0}-z_{k}})e^{i\beta(z_{0},z_{0})},\\
&\beta(z,z_{0})=-\nu(z_{0})\log(z-z_{0}+1)+\int_{-\infty}^{z_{0}}\frac{\nu(s)-\chi(s)\nu(z_{0})}{s-z}ds,
\end{split}
\end{align}
with $\chi(s)=1$ as $s\in(z_{0}-1, z_{0})$, and $\chi(s)=0$ as  $s\in(-\infty, z_{0}-1]$.\\
($f$) As $z\rightarrow0$, $T(z)$ can be expressed as
\begin{align}\label{4.10}
T(z)=T(0)(1+zT_{1})+O(z^{2}),
\end{align}
where $T_{1}=2\mathop{\sum}\limits_{k\in\triangle_{z_{0},1}^{-}}\frac{Im~z_{k}}{z_{k}} -\int_{-\infty}^{z_{0}}\frac{\nu(s)}{s^{2}}ds$.
\end{prop}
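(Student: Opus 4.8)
The plan is to factor $T(z,z_{0})$ from \eqref{3-2} into the finite Blaschke product $B(z)=\prod_{k\in\Delta_{z_{0}}^{-}}\frac{z-\bar z_{k}}{z-z_{k}}$ and the scalar factor $\delta(z)$ of \eqref{delta-v-define}, and then read off each of ($a$)--($f$) from the analytic structure of these two pieces. Each Blaschke factor is rational with a simple pole at $z_{k}$ and a simple zero at $\bar z_{k}$; since $\mathrm{Im}\,z_{k}>0$ by Assumption \ref{assum}, all these poles and zeros lie off the real axis, so $B$ is meromorphic on $\mathbb{C}$ and analytic across $(-\infty,z_{0}]$. The factor $\delta(z)=\exp[i\int_{-\infty}^{z_{0}}\nu(s)(s-z)^{-1}ds]$ is the exponential of a Cauchy integral supported on $(-\infty,z_{0}]$, hence analytic and nonvanishing on $\mathbb{C}\setminus(-\infty,z_{0}]$. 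Combining these two observations yields ($a$) at once.

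For the symmetry ($b$) I would conjugate each factor separately. Term by term $\overline{B(\bar z)}=\prod_{k}\frac{z-z_{k}}{z-\bar z_{k}}=1/B(z)$, while the reality of $\nu$ gives $\overline{\delta(\bar z)}=\exp[-i\int_{-\infty}^{z_{0}}\nu(s)(s-z)^{-1}ds]=1/\delta(z)$, and multiplying proves $\bar T(\bar z)=1/T(z)$. For the jump ($c$), note that $B$ is continuous across $(-\infty,z_{0})$, so only $\delta$ contributes a jump; applying the Plemelj--Sokhotski formula to $I(z)=\int_{-\infty}^{z_{0}}\nu(s)(s-z)^{-1}ds$ gives $I_{+}(z)-I_{-}(z)=2\pi i\,\nu(z)$, whence $\delta_{+}(z)/\delta_{-}(z)=e^{i(I_{+}-I_{-})}=e^{-2\pi\nu(z)}=1+|r(z)|^{2}$, the last equality using $\nu=-\frac{1}{2\pi}\log(1+|r|^{2})$.

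The asymptotic statements ($d$) and ($f$) are Taylor/Laurent expansions of $\log T$. For ($d$) I expand each Blaschke factor as $1+2i\,\mathrm{Im}(z_{k})/z+O(z^{-2})$ and the Cauchy integral as $-z^{-1}\int_{-\infty}^{z_{0}}\nu(s)\,ds+O(z^{-2})$, and multiply the two expansions to produce the coefficient in \eqref{3-4}. For ($f$), since $0$ is a regular point of $T$, I differentiate $\log T(z)=\sum_{k}[\log(z-\bar z_{k})-\log(z-z_{k})]+i\int_{-\infty}^{z_{0}}\nu(s)(s-z)^{-1}ds$ and evaluate at $z=0$ to read off $T(0)$ and the linear coefficient $T_{1}=T'(0)/T(0)$, giving \eqref{4.10}.

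The genuinely delicate step is the local estimate ($e$), and I expect this to be the main obstacle. First I would isolate the singular factor by splitting $\nu(s)=\chi(s)\nu(z_{0})+[\nu(s)-\chi(s)\nu(z_{0})]$ in the exponent of $\delta$; integrating the first term explicitly produces $\exp[i\nu(z_{0})\log(z-z_{0})]=(z-z_{0})^{i\nu(z_{0})}$ together with the regular term $-\nu(z_{0})\log(z-z_{0}+1)$, while the remaining integral is exactly $\beta(z,z_{0})$ of \eqref{3-6}, so that $T(z,z_{0})=(z-z_{0})^{i\nu(z_{0})}B(z)e^{i\beta(z,z_{0})}$. Writing the difference in \eqref{3-5} as $(z-z_{0})^{i\nu(z_{0})}\bigl[B(z)e^{i\beta(z,z_{0})}-B(z_{0})e^{i\beta(z_{0},z_{0})}\bigr]$ and using that $|(z-z_{0})^{i\nu(z_{0})}|=e^{-\nu(z_{0})\arg(z-z_{0})}$ is bounded along the ray, the claim reduces to a H\"older-$\tfrac12$ bound on the bracket. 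Since $B$ is analytic, hence Lipschitz, near $z_{0}$, the crux is the H\"older-$\tfrac12$ continuity of $\beta(\cdot,z_{0})$ at $z_{0}$: the density $\nu(s)-\chi(s)\nu(z_{0})$ vanishes to order $\tfrac12$ at $s=z_{0}$ because Proposition \ref{prop-r-map} gives $r\in H^{1,1}(\mathbb{R})$, whence $\nu\in H^{1}(\mathbb{R})\hookrightarrow C^{0,1/2}(\mathbb{R})$, and the standard estimate for Cauchy integrals with such a vanishing density yields $|\beta(z,z_{0})-\beta(z_{0},z_{0})|\lesssim\|r\|_{H^{1}(\mathbb{R})}|z-z_{0}|^{1/2}$. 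This Sobolev-embedding/Cauchy-integral estimate is the heart of the argument; the remaining factors are controlled by elementary bounds.
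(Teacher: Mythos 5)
Your proposal is correct and takes essentially the same route as the paper, which itself only says ``direct calculation, see \cite{AIHP,Yang-SP,Li-cgNLS}'': you factor $T$ into the Blaschke product times $\delta$, use Plemelj for ($c$), Laurent expansion of $\log T$ at $\infty$ and $0$ for ($d$) and ($f$), and for ($e$) the standard Deift--Zhou splitting $\nu(s)=\chi(s)\nu(z_{0})+[\nu(s)-\chi(s)\nu(z_{0})]$ combined with $\nu\in H^{1}\hookrightarrow C^{0,1/2}$. You have in effect supplied the details the paper delegates to its references, and correctly identified the Cauchy-integral H\"older estimate in ($e$) as the only nontrivial step.
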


\begin{proof}
The above properties of $T(z)$ can be proved by a direct calculation, for details, see \cite{AIHP,Yang-SP,Li-cgNLS}.
\end{proof}

Next, applying the partial scattering coefficient $T(z,z_{0})$, we define an unknown  matrix-value function $M^{(1)}(z)$
\begin{align}\label{Trans-1}
M^{(1)}(z)=M(z)T(z)^{\sigma_{3}}.
\end{align}
Then, $M^{(1)}(z)$ satisfies the following matrix RHP.
\begin{RHP}\label{RH-3}
Find a function $M^{(1)}$ with the following properties:
\begin{itemize}
  \item $M^{(1)}$ is meromorphic in $C\setminus \mathbb{R}$;
  \item $M^{(1)}(z)= I+O(z^{-1})$ as $z\rightarrow \infty$;
  \item For $z\in \mathbb{R}$, the boundary values $M^{(1)}_{\pm}(z)$ satisfy the jump relationship $M^{(1)}_{+}(z)=M^{(1)}_{-}(z)V^{(1)}(z)$, where
      \begin{align}\label{3-7}
       V^{(1)}=\left\{\begin{aligned}
      \left(
        \begin{array}{cc}
      1 & 0 \\
      \bar{r}(z)T(z)^{2}e^{2it\theta(z)} & 1 \\
        \end{array}
      \right)\left(
     \begin{array}{cc}
       1 & r(z)T(z)^{-2}e^{-2it\theta(z)} \\
       0 & 1 \\
      \end{array}
    \right), ~~z\in(z_{0}, \infty),\\
   \left(
    \begin{array}{cc}
    1 &  \frac{r(z)T_{-}(z)^{-2}}{1+|r(z)|^{2}}e^{-2it\theta} \\
    0 & 1 \\
     \end{array}
   \right)\left(
    \begin{array}{cc}
    1 & 0 \\
    \frac{\bar{r}(z)T_{+}(z)^{2}}{1+|r(z)|^{2}}e^{2it\theta} & 1 \\
   \end{array}
  \right),~~z\in(-\infty, z_{0}).
   \end{aligned}\right.
   \end{align}
   \item $M^{(1)}(z)$ has simple poles at each $z_{k}\in \mathcal{Z}$ and $\bar{z}_{k}\in \bar{\mathcal{Z}}$ at which
\begin{align}\label{3-8}
\begin{split}
\mathop{Res}\limits_{z=z_{k}}M^{(1)}=\left\{\begin{aligned}
&\lim_{z\rightarrow z_{k}}M^{(1)}\left(\begin{array}{cc}
    0 & \\
    c_{k}^{-1}\left((\frac{1}{T})'(z_{k})\right)^{-2}e^{2it\theta(z)} & 0 \\
  \end{array}
\right),k\in \Delta_{z_{0}}^{-},\\
&\lim_{z\rightarrow z_{k}}M^{(1)}\left(
  \begin{array}{cc}
    0 & c_{k}T^{-2}(z_{k})e^{-2it\theta(z)} \\
     0 & 0 \\
  \end{array}\right),k\in \Delta_{z_{0}}^{+},
\end{aligned}\right.\\
\mathop{Res}\limits_{z=\bar{z}_{k}}M^{(1)}=\left\{\begin{aligned}
&\lim_{z\rightarrow \bar{z}_{k}}M^{(1)}\left(\begin{array}{cc}
    0 & -\bar{c}_{k}^{-1}(T'(\bar{z}_{k}))^{-2}e^{-2it\theta(z)}\\
    0 & 0 \\
  \end{array}
\right),k\in \Delta_{z_{0}}^{-},\\
&\lim_{z\rightarrow \bar{z}_{k}}M^{(1)}\left(
  \begin{array}{cc}
    0 & -\bar{c}_{k}(T(\bar{z}_{k}))^{2}e^{2it\theta(z)} \\
    0 & 0 \\
  \end{array}\right),k\in \Delta_{z_{0}}^{+}.
\end{aligned}\right.
\end{split}
\end{align}
\end{itemize}
\end{RHP}

\begin{proof}
On the basis of the definition of $M^{(1)}$, Proposition \ref{T-property} and the properties of $M(y,t;z)$, the analyticity, jump matrix and asymptotic behavior of $M^{(1)}$ can be obtained directly. Then, for the residue conditions of $M^{(1)}$, when $k\in\triangle^{+}_{z_{0}}$, $T(z)$ is analytic at the points $z_{k}$, $z^{*}_{k}$. Thus, at these points, the residue conditions can be obtained directly from \eqref{2.3-5} and \eqref{Trans-1}. When $k\in\triangle^{-}_{z_{0}}$, $z_{k}$ is the pole of $T(z)$. As a reuslt, we have
\begin{align*}
\mathop{Res}\limits_{z=z_{k}}M^{(1)}_{0}&=\mathop{Res}\limits_{z=z_{k}}(M_{2}T^{-1})=0,\\
M^{(1)}_{2}(z_{k})&=\lim_{z\rightarrow z_{k}}(M_{2}(z)T^{-1}(z))=c_{k}e^{-2it\theta(z_{k})}M_{1}(z_{k})(\frac{1}{T})'(z_{k}),\\
\mathop{Res}\limits_{z=z_{k}}M^{(1)}_{1}&=\mathop{Res}\limits_{z=z_{k}}(M_{1}T)
=M_{1}(z_{k})\lim_{z\rightarrow z_{k}}T(z)(z-z_{k})\\
&=c^{-1}_{k}e^{2it\theta(z_{k})}M^{(1)}_{2}(z_{k})((\frac{1}{T})'(z_{k}))^{-1}.
\end{align*}
Now, a direct calculation gives the first formula shown in \eqref{3-8}. Similarly, we can obtain the residue condition at $\bar{z}_{k}(k\in\triangle^{-}_{z_{0}})$.
\end{proof}

\section{Continuous extension to a mixed $\bar{\partial}$-RH problem}
%,McLaughlin-2,Dieng-2008,Cuccagna-2016,
Following the ideas in \cite{McLaughlin-1}-\cite{AIHP},  we  make the continuous extensions of the jump matrix off the real axis. Applying these extensions, the oscillatory jump can be deformed onto new contours along which the jumps are decaying. In order to achieve this goal, we define the contours
\begin{align}\label{4-1}
\begin{split}
\Sigma_{j}=&-\frac{x}{4t}+e^{\frac{(2j-1)}{4}\pi i}\mathbb{R}_{+},~~j=1,2,3,4,\\
\Sigma^{2}=&\Sigma_{1}\cup\Sigma_{2}\cup\Sigma_{3}\cup\Sigma_{4}.
\end{split}
\end{align}
Then, the complex plane $\mathbb{C}$ is separated into six open sectors $\Omega_{j}(j=1,\ldots,6)$ by $\mathbb{R}\cap\Sigma^{2}$, see Figure 2.
Moreover, define
\begin{align}\label{4-2}
\rho=\frac{1}{2}\min_{\lambda,\zeta\in \mathcal{Z}\cup \bar{\mathcal{Z}} \lambda\neq\mu}|\lambda-\zeta|,
\end{align}
and $\chi_{Z} \in C_{0}^{\infty} (C, [0, 1])$ which is supported near the discrete spectrum such that
\begin{align}\label{4-3}
\chi_{Z}(z)=\left\{\begin{aligned}
&1,~~dist(z,\mathcal{Z}\cup \bar{\mathcal{Z}} )<\rho/3, \\
&0,~~dist(z,\mathcal{Z}\cup \bar{\mathcal{Z}} )>2\rho/3.
\end{aligned}\right.
\end{align}
The formula \eqref{4-2} implies that $dist(\mathcal{Z}\cup \bar{\mathcal{Z}}, \mathbb{R})>\rho, k=1,2,\ldots,N.$
Then, we define extensions of the jump matrices of \eqref{3-7} in the following proposition.

\begin{prop}\label{R-property}
There exist functions $R_{j}: \bar{\Omega}_{j} \rightarrow C, j= 1, 3, 4, 6$ with boundary values such that
\begin{align*}
&R_{1}(z)=\left\{\begin{aligned}&r(z)T^{-2}(z), ~~~~z\in(z_{0}, \infty),\\
&r(z_{0})T_{0}^{-2}(z_{0})(z-z_{0})^{-2i\nu(z_{0})}(1-\chi_{Z}(z)), z\in\Sigma_{1},
\end{aligned}\right.\\
&R_{3}(z)=\left\{\begin{aligned}&\frac{\bar{r}(z)}{1+|r(z)|^{2}}T_{+}^{2}(z), ~~~~z\in(-\infty, z_{0}),\\
&\frac{\bar{r}(z_{0})}{1+|r(z_{0})|^{2}}T_{0}^{2}(z_{0}) (z-z_{0})^{2i\nu(z_{0})}(1-\chi_{Z}(z)), z\in\Sigma_{2},
\end{aligned}\right.\\
&R_{4}(z)=\left\{\begin{aligned}&\frac{r(z)}{1+|r(z)|^{2}}T_{-}^{-2}(z), ~~~~z\in(-\infty, z_{0}),\\
&\frac{r(z_{0})}{1+|r(z_{0})|^{2}}T_{0}^{-2}(z_{0}) (z-z_{0})^{-2i\nu(z_{0})}(1-\chi_{Z}(z)), z\in\Sigma_{3},
\end{aligned}\right.\\
&R_{6}(z)=\left\{\begin{aligned}&\bar{r}(z)T^{2}(z), ~~~~z\in(z_{0}, \infty),\\
&\bar{r}(z_{0})T_{0}^{2}(z_{0})(z-z_{0})^{2i\nu(z_{0})}(1-\chi_{Z}(z)), z\in\Sigma_{4}.
\end{aligned}\right.
\end{align*}
For a fixed constant $c=c(q_{0})$ and cutoff function $\chi_{Z}(z)$ defined in \eqref{4-3}, $R_{j}$ satisfy that
\begin{align}\label{R-estimate}
\begin{split}
&|R_{j}(z)|\leq c\sin^{2}(\arg (z-z_{0})+\left<Rez\right>^{-1/2},\\
&|\bar{\partial}R_{j}(z)|\leq c\bar{\partial}\chi_{Z}(z)+c|z-z_{0}|^{-1/2}+c|r'(Rez)|,\\
&\bar{\partial}R_{j}(z)=0,z\in \Omega_{2}\cup\Omega_{5},~~or~ dist(z,\mathcal{Z}\cup\bar{\mathcal{Z}})\leq \rho/3,
\end{split}
\end{align}
where $\left<Rez\right>=\sqrt{1+(Rez)^{2}}$.\\
\end{prop}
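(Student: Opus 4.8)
The plan is to construct each $R_j$ by an explicit angular interpolation between its two prescribed boundary values, inserting the cutoff $1-\chi_Z$ so that the extension is frozen near the discrete spectrum. Writing $\phi=\arg(z-z_0)$ and measuring $\phi$ from the relevant portion of the real axis, I would set, for instance on $\bar\Omega_1$,
\begin{align*}
R_1(z)=\big(1-\chi_Z(z)\big)\Big[\cos\!\big(2\phi\big)\,r(\mathrm{Re}\,z)T^{-2}(z)+\big(1-\cos(2\phi)\big)\,r(z_0)T_0^{-2}(z_0)(z-z_0)^{-2i\nu(z_0)}\Big],
\end{align*}
and define $R_3,R_4,R_6$ by analogous formulas, replacing the pair $\big(r(\mathrm{Re}\,z)T^{-2}(z),\ r(z_0)T_0^{-2}(z_0)(z-z_0)^{-2i\nu(z_0)}\big)$ by the data listed in the proposition and replacing $\cos(2\phi)$ by $\cos\!\big(2(\phi-\phi_j)\big)$, where $\phi_j\in\{0,\pi\}$ is the argument of the adjacent real axis. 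Since $\cos(2\phi)=1$ on the real axis and $\cos(2\phi)=0$ on $\Sigma_1$ (and likewise for the other sectors), the boundary values demanded in the proposition are matched by inspection; moreover $1-\chi_Z\equiv0$ when $\mathrm{dist}(z,\mathcal{Z}\cup\bar{\mathcal{Z}})\le\rho/3$, and no extension is introduced on $\Omega_2\cup\Omega_5$, which gives the last line of \eqref{R-estimate}.

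Next I would compute $\bar\partial R_j$. Since $T^{-2}(z)$ and $(z-z_0)^{-2i\nu(z_0)}$ are holomorphic on each $\Omega_j$ (the branch cut $(-\infty,z_0]$ does not enter the open sectors), $\bar\partial$ acts only on the three nonanalytic factors: the cutoff $\chi_Z$, the angular factor $\cos(2\phi)$, and the trace $r(\mathrm{Re}\,z)$. Using $\mathrm{Re}\,z=\tfrac12(z+\bar z)$ one has $\bar\partial\,r(\mathrm{Re}\,z)=\tfrac12 r'(\mathrm{Re}\,z)$, while $\bar\partial\arg(z-z_0)=\tfrac{i}{2\,\overline{(z-z_0)}}$ yields $|\bar\partial\cos(2\phi)|\lesssim|z-z_0|^{-1}$. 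Collecting terms, $\bar\partial R_j$ is a sum of a contribution supported on $\mathrm{supp}\,\bar\partial\chi_Z$, a contribution equal to $r'(\mathrm{Re}\,z)$ times bounded factors, and a contribution bounded by $|z-z_0|^{-1}\,\big|\,r(\mathrm{Re}\,z)T^{-2}(z)-r(z_0)T_0^{-2}(z_0)(z-z_0)^{-2i\nu(z_0)}\big|$.

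The main work is to estimate this last contribution, and this is where the regularity $r\in H^{1,1}(\mathbb{R})$ enters. By the Sobolev embedding $H^1(\mathbb{R})\hookrightarrow C^{1/2}(\mathbb{R})$ one has $|r(\mathrm{Re}\,z)-r(z_0)|\lesssim\|r\|_{H^1}|\mathrm{Re}\,z-z_0|^{1/2}\lesssim|z-z_0|^{1/2}$, and \eqref{3-5} likewise controls $|T^{-2}(z)-T_0^{-2}(z_0)(z-z_0)^{-2i\nu(z_0)}|$ by $|z-z_0|^{1/2}$ (using that $T$ and the model term are bounded and bounded away from zero near $z_0$). Hence the bracket is $O(|z-z_0|^{1/2})$ and the angular contribution is $O(|z-z_0|^{-1/2})$, producing the term $c|z-z_0|^{-1/2}$ in \eqref{R-estimate}; the other two contributions give $c\,\bar\partial\chi_Z$ and $c|r'(\mathrm{Re}\,z)|$ directly. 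Finally, the pointwise bound on $|R_j|$ follows from the interpolation structure: the coefficient of the difference term is $1-\cos(2\phi)=2\sin^2\phi$, which yields the $c\sin^2(\arg(z-z_0))$ contribution, while on the real axis the surviving term is controlled by the decay $|r(\mathrm{Re}\,z)|\lesssim\langle\mathrm{Re}\,z\rangle^{-1/2}$ valid for $r\in H^{1,1}$, giving the $\langle\mathrm{Re}\,z\rangle^{-1/2}$ term. I expect the delicate point to be precisely the cancellation between the singular angular factor $|z-z_0|^{-1}$ and the $|z-z_0|^{1/2}$ vanishing of the bracket at $z_0$, which forces one to use the sharp Hölder continuity of $r$ together with \eqref{3-5}; away from $z_0$ all the estimates are routine.
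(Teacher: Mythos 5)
Your construction is correct and is precisely the angular-interpolation argument of Borghese--Jenkins--McLaughlin that the paper itself invokes (the paper gives no details, only the remark ``Following the ideas in \cite{AIHP}, the above proposition can be proved similarly''). Your proposal supplies exactly the omitted details — the $\cos(2\arg(z-z_0))$ interpolation with the cutoff $1-\chi_{\mathcal{Z}}$, the identification of the three sources of $\bar{\partial}$ (cutoff, angle, $r(\mathrm{Re}\,z)$), and the key cancellation between the $|z-z_0|^{-1}$ angular singularity and the $|z-z_0|^{1/2}$ H\"older/$T$-estimate from \eqref{3-5} — so no further comparison is needed.
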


Following the ideas in \cite{AIHP}, the above proposition can be proved similarly.

Using the extension in Proposition \ref{R-property}, we introduce a transformation
\begin{align}\label{Trans-2}
M^{(2)}=M^{(1)}R^{(2)},
\end{align}
where
\begin{align}
R^{(2)}=\left\{\begin{aligned}
&\left(
  \begin{array}{cc}
    1 & R_{1}e^{-2it\theta} \\
    0 & 1 \\
  \end{array}
\right)^{-1}\triangleq W_{R}^{-1}, ~&z\in\Omega_{1},\\
&\left(
  \begin{array}{cc}
    1 &  0\\
    R_{3}e^{2it\theta} & 1 \\
  \end{array}
\right)^{-1}\triangleq U_{R}^{-1}, ~&z\in\Omega_{3},\\
&\left(
  \begin{array}{cc}
    1 & R_{4}e^{2it\theta} \\
    0 & 1 \\
  \end{array}
\right)\triangleq U_{L}, ~&z\in\Omega_{4},\\
&\left(
  \begin{array}{cc}
    1 & 0 \\
    R_{6}e^{2it\theta} & 1 \\
  \end{array}
\right)\triangleq W_{L},~ &z\in\Omega_{6},\\
&\left(
  \begin{array}{cc}
    1 & 0 \\
    0 & 1 \\
  \end{array}
\right),~ &z\in\Omega_{2}\cup\Omega_{5}.
\end{aligned}
\right.
\end{align}

\centerline{\begin{tikzpicture}[scale=0.65]
\path [fill=pink] (-5,4)--(-4,4) to (-4,0) -- (-5,0);
\path [fill=pink] (-5,0)--(-4,4) to (-4,4) -- (0,0);
\path [fill=pink] (5,4)--(4,4) to (4,0) -- (5,0);
\path [fill=pink] (5,0)--(4,4) to (4,4) -- (0,0);
\path [fill=pink] (-5,-4)--(-4,-4) to (-4,0) -- (-5,0);
\path [fill=pink] (-5,0)--(-4,-4) to (-4,-4) -- (0,0);
\path [fill=pink] (5,-4)--(4,-4) to (4,0) -- (5,0);
\path [fill=pink] (5,0)--(4,-4) to (4,-4) -- (0,0);
\draw[->][dashed](-3,0)--(-2,0);
\draw[->][dashed](2,0)--(3,0);
\draw[->][dashed](5,0)--(6,0)[thick]node[right]{$Rez$};
\draw[->][thick](-2.5,2.5)--(-2,2);
\draw[->][thick](1,-1)--(2,-2);
\draw[->][thick](1,1)--(2,2);
\draw[->][thick](-2.5,-2.5)--(-2,-2);
\draw [thick](-4,-4)--(4,4);
\draw [thick](-4,4)--(4,-4);
\draw[fill] (0,0)node[below]{$z_{0}$};
\draw[fill] (0,-1)node[below]{$\Omega_{5}$};
\draw[fill] (0,1.5)node[below]{$\Omega_{2}$};
\draw[fill] (1.5,-0.5)node[below]{$\Omega_{6}$};
\draw[fill] (1.5,1)node[below]{$\Omega_{1}$};
\draw[fill] (-1.5,1)node[below]{$\Omega_{3}$};
\draw[fill] (-1.5,-0.5)node[below]{$\Omega_{4}$};
\draw[fill] (-6,2)node[below]{$R^{(2)}=U_{R}^{-1}$};
\draw[fill] (4,2)node[below]{$R^{(2)}=W_{R}^{-1}$};
\draw[fill] (4.5,-2)node[above]{$R^{(2)}=W_{L}$};
\draw[fill] (-6,-2)node[below]{$R^{(2)}=U_{L}$};
\draw[fill] (4,2.5)node[below]{} circle [radius=0.08];
\draw[fill] (4,-2.5)node[below]{} circle [radius=0.08];
\draw[fill] (0,2.5)node[below]{} circle [radius=0.08];
\draw[fill] (0,-2.5)node[below]{} circle [radius=0.08];
\draw[fill] (2,3)node[below]{} circle [radius=0.08];
\draw[fill] (2,-3)node[below]{} circle [radius=0.08];
\draw[fill] (-2.5,2)node[below]{} circle [radius=0.08];
\draw[fill] (-2.5,-2)node[below]{} circle [radius=0.08];
\draw[fill] (-4,3)node[below]{} circle [radius=0.08];
\draw[fill] (-4,-3)node[below]{} circle [radius=0.08];
\draw[fill] (-2.5,2)node[below]{${z}_{k}$};
\draw[fill] (-2.5,-2)node[left]{${z}^{*}_{k}$};
\draw[fill] (4.5,4)node[below]{$\Sigma_{1}$};
\draw[fill] (4.5,-4)node[below]{$\Sigma_{4}$};
\draw[fill] (-4.5,4)node[below]{$\Sigma_{2}$};
\draw[fill] (-4.5,-4)node[below]{$\Sigma_{3}$};
\draw[-][dashed](-6,0)--(6,0);
\end{tikzpicture}}
\centerline{\noindent {\small \textbf{Figure 2.}  Definition of $R^{(2)}$ in different domains.}}

Then, based on the RHP \ref{RH-3} and Proposition \ref{R-property}, we can immediately derive that $M^{(2)}$ satisfies the following $\bar{\partial}$-RHP.

\begin{RHP}\label{RH-4}
Find a matrix value function $M^{(2)}$ satisfying
\begin{itemize}
 \item $M^{(2)}(y,t,z)$ is continuous in $\mathbb{C}\backslash(\Sigma^{2}\cup\mathcal{Z}\cup\bar{\mathcal{Z}})$.
 \item $M_+^{(2)}(y,t,z)=M_-^{(2)}(y,t,z)V^{(2)}(y,t,z),$ \quad $z\in\Sigma^{2}$, where the jump matrix $V^{(2)}(x,t,z)$ satisfies
 \begin{align}\label{J-V2}
 V^{(2)}(z)=%(R_-^{(2)})^{-1}V^{(1)}R_+^{(2)}=
 I+(1-\chi_{\mathcal{Z}}(z)) \bar{V}^{(2)},
 \end{align}
 with
 \begin{align}
\bar{V}^{(2)}(z)=\left\{\begin{aligned}
&\begin{pmatrix}0&r(z_0)T_0(z_0)^{-2}(z-z_0)^{-2i\nu(z_0)}e^{-2it\theta(z_0)}\\0&0\end{pmatrix} &z\in\Sigma_1, \\
&\begin{pmatrix}0&0\\ \frac{\bar{r}(z_0)T_0(z_0)^2}{1+|{r(z_0})|^2}(z-z_0)^{2i\nu(z_0)}e^{2it\theta(z_0)}&0\end{pmatrix} &z\in\Sigma_2, \\
&\begin{pmatrix}0&\frac{r(z_0)T_0(z_0)^{-2}}{1+|{r(z_0})|^2}(z-z_0)^{-2i\nu(z_0)}e^{-2it\theta(z_0)}\\0&0\end{pmatrix} &z\in\Sigma_3, \\
&\begin{pmatrix}0&0\\\bar{r}(z_0)T_0(z_0)^{2}(z-z_0)^{2i\nu(z_0)}e^{2it\theta(z_0)}&0\end{pmatrix} &z\in\Sigma_4.
\end{aligned}\right.
\end{align}
\item $M^{(2)}(x,t,z)\rightarrow I,$ \quad $z\rightarrow\infty$.
\item For $\mathbb{C}\backslash(\Sigma^{2}\cup\mathcal{Z}\cup\bar{\mathcal{Z}})$, $\bar{\partial}M^{(2)}=M^{(2)}\bar{\partial}R^{(2)}(z),$ where
    \begin{align}\label{4.1}
\bar{\partial}R^{(2)}=\left\{\begin{aligned}
&\begin{pmatrix}0&-\bar{\partial}R_1 e^{-2it\theta}\\0&0\end{pmatrix}, &z\in\Omega_1, \\
&\begin{pmatrix}0&0\\-\bar{\partial}R_3 e^{2it\theta}&0\end{pmatrix}, &z\in\Omega_3, \\
&\begin{pmatrix}0&\bar{\partial}R_4 e^{-2it\theta}\\ 0&0\end{pmatrix}, &z\in\Omega_4, \\
&\begin{pmatrix}0&0\\ \bar{\partial}R_6 e^{2it\theta}&0\end{pmatrix}, &z\in\Omega_6,\\
&\begin{pmatrix}0&0\\0&0\end{pmatrix}, &z\in\Omega_2\cup\Omega_5.
\end{aligned}\right.
\end{align}
  \item  $M^{(2)}$ admits the residue conditions at poles $z_{k} \in \mathcal{Z}$ and $\bar{z}_{k} \in \bar{\mathcal{Z}}$, i.e.,
      \begin{align}
\mathop{Res}_{z=z_k}M^{(2)}=\left\{\begin{aligned}
&\mathop{lim}_{z\rightarrow z_k}M^{(2)}\begin{pmatrix}0&0\\c_k^{-1}((\frac{1}{T})^{'}(z_k))^{-2}e^{2it\theta(z_k)}&0\end{pmatrix}, &&k\in \triangle_{z_0}^{-},\\
&\mathop{lim}_{z\rightarrow z_k}M^{(2)}\begin{pmatrix}0&c_kT(z_k)^{-2}e^{-2it\theta(z_k)}\\0&0\end{pmatrix}, &&k\in \triangle_{z_0}^{+},
\end{aligned}\right.\notag\\
\mathop{Res}_{z=\bar{z}_{k}}M^{(2)}=\left\{\begin{aligned}
&\mathop{lim}_{z\rightarrow \bar{z}_{k}}M^{(2)}\begin{pmatrix}0&-(\bar{c}_k)^{-1}T^{'}(\bar{z}_k)^{-2}e^{-2it\theta(\bar{z}_k)}\\0&0\end{pmatrix}, &&k\in \triangle_{z_0}^{-},\\
&\mathop{lim}_{z\rightarrow \bar{z}_{k}}M^{(2)}\begin{pmatrix}0&0\\-\bar{c}_kT(\bar{z}_k)^{2}e^{2it\theta(\bar{z}_k)}&0\end{pmatrix}, &&k\in \triangle_{z_0}^{+}.
\end{aligned}\right.\notag
\end{align}
\end{itemize}
\end{RHP}

\centerline{\begin{tikzpicture}[scale=0.65]
\path [fill=pink] (-5,4)--(-4,4) to (-4,0) -- (-5,0);
\path [fill=pink] (-5,0)--(-4,4) to (-4,4) -- (0,0);
\path [fill=pink] (5,4)--(4,4) to (4,0) -- (5,0);
\path [fill=pink] (5,0)--(4,4) to (4,4) -- (0,0);
\path [fill=pink] (-5,-4)--(-4,-4) to (-4,0) -- (-5,0);
\path [fill=pink] (-5,0)--(-4,-4) to (-4,-4) -- (0,0);
\path [fill=pink] (5,-4)--(4,-4) to (4,0) -- (5,0);
\path [fill=pink] (5,0)--(4,-4) to (4,-4) -- (0,0);
\draw(4,2.5) [black, line width=0.5] circle(0.3);
\draw(4,-2.5) [black, line width=0.5] circle(0.3);
%\draw(2,3) [black, line width=0.5] circle(0.3);
%\draw(2,-3) [black, line width=0.5] circle(0.3);
\draw(-2.5,2) [black, line width=0.5] circle(0.3);
\draw(-2.5,-2) [black, line width=0.5] circle(0.3);
\draw(-4,3) [black, line width=0.5] circle(0.3);
\draw(-4,3) [black, line width=0.5] circle(0.3);
\draw[fill][white] (4,2.5) circle [radius=0.3];
\draw[fill][white] (4,-2.5) circle [radius=0.3];
\draw[fill][white] (-2.5,2) circle [radius=0.3];
\draw[fill][white] (-2.5,-2) circle [radius=0.3];
\draw[fill][white] (-4,3) circle [radius=0.3];
\draw[fill][white] (-4,-3) circle [radius=0.3];
\draw[->][dashed](-3,0)--(-2,0);
\draw[->][dashed](2,0)--(3,0);
\draw[->][thick](5,0)--(6,0)[thick]node[right]{$Rez$};
\draw[->][thick](-2.5,2.5)--(-2,2);
\draw[->][thick](1,-1)--(2,-2);
\draw[->][thick](1,1)--(2,2);
\draw[->][thick](-2.5,-2.5)--(-2,-2);
\draw [thick](-4,-4)--(4,4);
\draw [thick](-4,4)--(4,-4);
\draw[fill] (0,0)node[below]{$z_{0}$};
\draw[fill] (0,-1)node[below]{$\Omega_{5}$};
\draw[fill] (0,1.5)node[below]{$\Omega_{2}$};
\draw[fill] (1.5,-0.5)node[below]{$\Omega_{6}$};
\draw[fill] (1.5,1)node[below]{$\Omega_{1}$};
\draw[fill] (-1.5,1)node[below]{$\Omega_{3}$};
\draw[fill] (-1.5,-0.5)node[below]{$\Omega_{4}$};
\draw[fill] (-6,2)node[below]{$R^{(2)}=U_{R}^{-1}$};
\draw[fill] (6,2)node[below]{$R^{(2)}=W_{R}^{-1}$};
\draw[fill] (6,-2)node[below]{$R^{(2)}=W_{L}$};
\draw[fill] (-6,-2)node[below]{$R^{(2)}=U_{L}$};
\draw[fill] (4,2.5)node[below]{} circle [radius=0.08];
\draw[fill] (4,-2.5)node[below]{} circle [radius=0.08];
\draw[fill] (2,3)node[below]{} circle [radius=0.08];
\draw[fill] (2,-3)node[below]{} circle [radius=0.08];
\draw[fill] (0,2.5)node[below]{} circle [radius=0.08];
\draw[fill] (0,-2.5)node[below]{} circle [radius=0.08];
\draw[fill] (-2.5,2)node[below]{} circle [radius=0.08];
\draw[fill] (-2.5,-2)node[below]{} circle [radius=0.08];
\draw[fill] (-4,3)node[below]{} circle [radius=0.08];
\draw[fill] (-4,-3)node[below]{} circle [radius=0.08];
\draw[fill] (-2.5,2)node[below]{${z}_{k}$};
\draw[fill] (-2.5,-2)node[below]{${\bar{z}}_{k}$};
\draw[fill] (4.5,4)node[below]{$\Sigma_{1}$};
\draw[fill] (4.5,-4)node[below]{$\Sigma_{4}$};
\draw[fill] (-4.5,4)node[below]{$\Sigma_{2}$};
\draw[fill] (-4.5,-4)node[below]{$\Sigma_{3}$};
\draw[-][dashed](-6,0)--(6,0);
%\draw[-][thick](-5,0)--(-4,0);
%\draw[-][thick](-4,0)--(-3,0);
%\draw[-][thick](-3,0)--(-2,0);
%\draw[-][thick](-2,0)--(-1,0);
%\draw[-][thick](-1,0)--(0,0);
%\draw[-][thick](0,0)--(1,0);
%\draw[-][thick](1,0)--(2,0);
%\draw[-][thick](2,0)--(3,0);
%\draw[-][thick](3,0)--(4,0);
%\draw[-][thick](4,0)--(5,0);
%\draw[-][thick](5,0)--(6,0);
\end{tikzpicture}}
\noindent {\small \textbf{Figure 3.}   Jump matrix $V^{(2)}$, yellow parts support $\bar{\partial}$ derivative: $\bar{\partial}R^{(2)}\neq0$. White parts do not support $\bar{\partial}$ derivative: $\bar{\partial}R^{(2)}=0$.}

\section{Droping the RH component of the solution}\label{section-Droping-RH-component}

In order to construct the solution $M^{(2)}$, we  decompose the mixed $\bar{\partial}$-RH problem, i.e., RHP \ref{RH-4}, into two parts,  including a model RH problem with $\bar{\partial}R^{(2)}=0$ and a pure $\bar{\partial}$-RH problem. For the first step, we construct a solution $M^{(2)}_{R}$ to the model RH problem with $\bar{\partial}R^{(2)}=0$.

\begin{RHP}\label{RH-rhp}
Find a matrix value function $M^{(2)}_{R}$, admitting
\begin{itemize}
 \item $M^{(2)}_{R}$ is analytical in $\mathbb{C}\backslash(\Sigma^{2}\cup\mathcal{Z}\cup\bar{\mathcal{Z}})$;
 \item $M^{(2)}_{R,+}(y,t,z)=M^{(2)}_{R,-}(y,t,z)V^{(2)}(y,t,z),$ \quad $z\in\Sigma^{2}$, where $V^{(2)}(y,t,z)$ is the same with the jump matrix appearing in RHP \ref{RH-4};
 \item As $z\rightarrow\infty$, $M^{(2)}_{R}(y,t,z)=I+o(z^{-1})$;
 \item $M^{(2)}_{R}$ possesses the same residue condition with $M^{(2)}$.
 \end{itemize}
\end{RHP}
In this section, the following part will give the proof of the existence and asymptotic of $M^{(2)}_{R}$.  It is meaningful to point out that if $M^{(2)}_{R}$ exists, a pure $\bar{\partial}$-RH problem can be generated from RHP \ref{RH-4} by using $M^{(2)}_{R}$ to establish a transformation.

Assuming that $M^{(2)}_{R}$ is a solution of the RHP \ref{RH-rhp}, defining
\begin{align}\label{delate-pure-RHP}
M^{(3)}(z)=M^{(2)}(z)M^{(2)}_{R}(z)^{-1},
\end{align}
then, $M^{(3)}(z)$ satisfying the following pure $\bar{\partial}$-RH problem.

\begin{RHP}\label{RH-5}
Find a matrix value function $M^{(3)}$ admitting
\begin{itemize}
 \item $M^{(3)}$ is continuous with sectionally continuous first partial derivatives in $\mathbb{C}\backslash(\Sigma^{2}\cup\mathcal{Z}\cup\bar{\mathcal{Z}})$;
 \item For $z\in \mathbb{C}$, we obtain $\bar{\partial}M^{(3)}(z)=M^{(3)}(z)W^{(3)}(z)$,
       where
       \begin{align}\label{5-1}
       W^{(3)}=M^{(2)}_{R}(z)\bar{\partial}R^{(2)}M^{(2)}_{R}(z)^{-1};
       \end{align}
 \item As $z\rightarrow\infty$,
       \begin{align}
       M^{(3)}(z)=I+o(z^{-1}).
       \end{align}
 \end{itemize}
\end{RHP}
\begin{proof}
Referring to the properties of the $M^{(2)}_{R}$ in RHP \ref{RH-rhp} and $M^{(2)}$ in \ref{RH-4}, the analytic and asymptotic properties of $M^{(3)}$ can be derived easily. According to the construction of the $M^{(2)}_{R}$, we know that $M^{(2)}_{R}$ possesses the same jump matrix with $M^{(2)}$. As a result, we obtain that
\begin{align*}
M^{(3)}_{-}(z)^{-1}M^{(3)}_{+}(z)&=M^{(2)}_{R,-}(z)M^{(2)}_{-}(z)^{-1}M^{(2)}_{+}(z) M^{(2)}_{R,+}(z)^{-1}\\
&=M^{(2)}_{R,-}(z)V^{2}(z)(M^{(2)}_{R,-}(z)V^{2}(z))^{-1}=\mathbb{I},
\end{align*}
which implies that $M^{(3)}$ has no jump. Next, we give the proof that $M^{(3)}$ only has removable singularities at each $z_{k}$. We use $N_{k}$ to denote nilpotent matrix which appears in the left side of the residue condition of RHP \ref{RH-4} and RHP \ref{RH-rhp}. Then, we obtain the Laurent expansions
\begin{align*}
M^{(2)}(z)=C(z_{k})\left[\frac{N_{k}}{z-z_{k}}+\mathbb{I}\right]+o(z-z_{k}),\\
M^{(2)}_{R}(z)=\hat{C}(z_{k})\left[\frac{N_{k}}{z-z_{k}}+\mathbb{I}\right]+o(z-z_{k}),
\end{align*}
where $C(z_{k})$ and $\hat{C}(z_{k})$ are constant terms. Then, we can derive that
\begin{align*}
M^{(2)}(z)M^{(2)}_{R}(z)^{-1}=o(1),
\end{align*}
which infers to that  $M^{(3)}$ has removable singularities at $z_{k}$. Finally, on the basis of the definition of $M^{(3)}$, the $\bar{\partial}$-derivative of $M^{(3)}$ is derived as
\begin{align*}
\bar{\partial}M^{(3)}(z)&=\bar{\partial}(M^{(2)}(z)M^{(2)}_{R}(z)^{-1})
=\bar{\partial}M^{(2)}(z)M^{(2)}_{R}(z)^{-1}
=M^{(2)}(z)\bar{\partial}R^{(2)}(z)M^{(2)}_{R}(z)^{-1}\\
&=M^{(2)}(z)M^{(2)}_{R}(z)^{-1}(M^{(2)}_{R}(z)\bar{\partial}R^{(2)}(z) M^{(2)}_{R}(z)^{-1})=M^{(3)}(z)W^{(3)}(z).
\end{align*}

\end{proof}

\section{Constructing the solution $M^{(2)}_{R}(z)$}

Firstly, we define an open neighborhood
\begin{align*}
\mathcal{U}_{z_0}=\{z:|z-z_0|<\rho/2\},
\end{align*}
on which $M^{(2)}_{R}(z)$ is pole free.
Then, in order to construct the solution $M^{(2)}_{R}(z)$, we decompose $M^{(2)}_{R}$ into two parts
\begin{align}\label{Mrhp}
M^{(2)}_{R}(z)=\left\{\begin{aligned}
&E(z)M^{(out)}(z), &&z\in\mathbb{C}\backslash \mathcal{U}_{z_0},\\
&E(z)M^{(out)}(z)M^{(pc)}(z_0,r_0), &&z\in \mathcal{U}_{z_0},
\end{aligned} \right.
\end{align}
where $M^{(out)}$ solves a model RHP, $M^{(pc)}$ is a known parabolic cylinder model and $E(z)$, an error function, is the solution of a small-norm Riemann-Hilbert problem.

\subsection{Outer model RH problem: $M^{(out)}$}

Recall that $\theta(z)=2(z-z_{0})^{2}-2z^{2}_{0}$. Then,  for the jump matrix $V^{(2)}$, by applying the spectral bound \eqref{4-2}, we have the following estimate.
\begin{align}\label{V2-I}
||V^{(2)}-\mathbb{I}||_{L^{\infty}(\Sigma^{2})}=\left\{\begin{aligned}
&o(e^{-4|t||z-z_0|^2}), &&z\in\Sigma^{2}\backslash \mathcal{U}_{z_0},\\
&o(|z-z_0|^{-1}t^{-\frac{1}{2}}), &&z\in\Sigma^{2}\cap\mathcal{U}_{z_0}.
\end{aligned}\right.
\end{align}
We can obviously see that the estimate \eqref{V2-I} decays exponentially  in $\Sigma^{2}\backslash \mathcal{U}_{z_0}$. Therefore, it is reasonable to construct  a model solution outside $\mathcal{U}_{z_0}$ when we omit the jump completely.

So, we next establish a model RH problem and prove that its solution can be approximated by a finite sum of soliton solutions.
\begin{RHP}\label{RH-6}
Find a matrix value function $M^{(out)}$ satisfying
\begin{itemize}
  \item $M^{(out)}(y,t;z)$ is analytic in $\mathbb{C}\backslash(\Sigma^{2}\cup\mathcal{Z}\cup\bar{\mathcal{Z}})$;
  \item As $z\rightarrow\infty$,
       \begin{align}
       M^{(out)}(y,t;z)=I+o(z^{-1});
       \end{align}
  \item $M^{(out)}(y,t;z)$ has simple poles at each point in $\mathcal{Z}\cup\bar{\mathcal{Z}}$ and admits the same residue condition in RHP \ref{RH-4} with $M^{(out)}(y,t;z)$ replacing $M^{(2)}(y,t;z)$.
\end{itemize}
\end{RHP}

\begin{prop}\label{prop-Mout}
There exists unique solution $M^{(out)}$ of RHP \ref{RH-6}. Particularly,
\begin{align}\label{6-1}
M^{(out)}(z)=M^{\triangle_{z_{0}}^{-}}(z|\sigma_{d}^{out}),
\end{align}
where $M^{\triangle_{z_{0}}^{-}}(z)$ is the solution of RHP $B.3$  with  $\triangle=\triangle_{z_{0}}^{-}$ and $\sigma_{d}^{out}=\left\{(z_{k},\widetilde{c}_{k}(z_{0}))\right\}_{k=1}^{N}$ with
\begin{align}
\widetilde{c}_{k}(z_{0})= c_{k}e^{\frac{i}{\pi}\int_{-\infty}^{z_{0}}\frac{\log(1+|r(s)|^{2})}{s-z_{k}}ds}.
\end{align}
Moreover,
\begin{align}\label{6-2}
\|M^{(out)}(z)\|_{L^{\infty}(\mathbb{C}\setminus(\mathcal{Z}\cup\bar{\mathcal{Z}}))}\lesssim 1.
\end{align}
In addition, for $t\rightarrow\infty$,
\begin{align}\label{q-sol-out}
\begin{split}
q_{sol}(y,t;\sigma_{d}^{out})&=e^{2d}\mathop{lim}_{z\rightarrow 0}\frac{\partial}{\partial y}\frac{\left(M^{(out)}(0)^{-1}M^{(out)}(z)\right)_{12}}{z}\\
&=q_{sol}(y,t;\hat{\sigma}_{d}(I))+O(e^{-8\mu(I)|t|}),
\end{split}
\end{align}
where $q_{sol}(y,t;\sigma_{d}^{out})$ is the $N$-soliton solution of Eq.\eqref{WKI-equation} corresponding the scattering data $\sigma_{d}^{out}$, $\mu(I)$ and $\hat{\sigma}_{d}(I)$ are respectively defined in \eqref{B-5} and \eqref{sigmad-mao}.
\end{prop}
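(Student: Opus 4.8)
The plan is to treat RHP \ref{RH-6} as a purely discrete (reflectionless) problem, since the conditions listed there involve only analyticity, the normalization $M^{(out)}=I+o(z^{-1})$ at infinity, and the simple-pole residue relations at $\mathcal{Z}\cup\bar{\mathcal{Z}}$, with no continuous jump prescribed across $\Sigma^{2}$. First I would settle existence and uniqueness. Uniqueness follows exactly as in the Remark after RHP \ref{RH-1}: the ratio of two solutions is entire, bounded, and tends to $I$ at infinity, hence equals $I$ by Liouville's theorem. Existence reduces to solving the finite linear algebraic system obtained by writing $M^{(out)}$ as a rational matrix with prescribed rank-one principal parts at the $2N$ poles; its solvability is guaranteed by the symmetry relation of Proposition \ref{Sym-mu} and the simplicity of the spectrum assumed in Assumption \ref{assum}.

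For the identification \eqref{6-1} I would compare the residue data of RHP \ref{RH-6} with those of the reference reflectionless problem RHP $B.3$. The conjugation by $T(z)^{\sigma_{3}}$ carried out in Section \ref{section-Conjugation} rescales each norming constant $c_{k}$ through the value of $T(z)^{\mp 2}$ at the pole $z_{k}$: the Blaschke-product part of $T$ is absorbed into the pole structure of the reference model, while the $\delta$-factor of \eqref{delta-v-define} evaluated at $z_{k}$ contributes exactly $e^{\frac{i}{\pi}\int_{-\infty}^{z_{0}}\frac{\log(1+|r(s)|^{2})}{s-z_{k}}\,ds}$. This yields the effective scattering data $\sigma_{d}^{out}=\{(z_{k},\widetilde{c}_{k}(z_{0}))\}$, and matching the residue conditions term by term identifies $M^{(out)}$ with $M^{\triangle_{z_{0}}^{-}}(z|\sigma_{d}^{out})$.

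The uniform bound \eqref{6-2} I would read off from this explicit rational structure: the poles lie a fixed distance $\rho$ from $\mathbb{R}$ by \eqref{4-2}, the normalization pins down the behavior at infinity, and the connection coefficients $\widetilde{c}_{k}(z_{0})$ are uniformly controlled because $r\in H^{1,1}(\mathbb{R})$ by Proposition \ref{prop-r-map} keeps the exponent bounded. Hence $\|M^{(out)}\|_{L^{\infty}}$ is bounded independently of $(y,t)$ off a neighborhood of the poles.

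The heart of the proof, and the step I expect to be the main obstacle, is the asymptotic reduction \eqref{q-sol-out} of the full $N$-soliton to the $N(\mathcal{I})$-soliton attached to the cone. Here I would exploit that the residue conditions carry the oscillatory factors $e^{\pm 2it\theta(z_{k})}$ with $\theta(z_{k})=2(z_{k}-z_{0})^{2}-2z_{0}^{2}$; writing $z_{k}-z_{0}=a_{k}+ib_{k}$ gives $\mathrm{Re}\,(2it\theta(z_{k}))=-8t\,a_{k}b_{k}$, so that for every pole whose real part lies outside $\mathcal{I}$ this quantity has a definite sign growing linearly in $|t|$. Each such spectator soliton can then be peeled off by an explicit triangular transformation whose effect on $M^{(out)}$, and hence on $q_{sol}$ through the reconstruction \eqref{q-sol}, is of size $O(e^{-8\mu(I)|t|})$ with $\mu(I)$ the spectral gap of \eqref{B-5}. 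The delicate point is the bookkeeping: one must track how removing each off-cone pole renormalizes the remaining norming constants, producing $\hat{\sigma}_{d}(I)$ of \eqref{sigmad-mao}, and verify that all accumulated errors are genuinely exponentially small and uniform over the cone. Assembling the four parts then gives \eqref{q-sol-out}.
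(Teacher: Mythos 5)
Your proposal is correct and follows essentially the same route as the paper: it identifies $M^{(out)}$ with the reflectionless model $M^{\triangle_{z_{0}}^{-}}(z|\sigma_{d}^{out})$ of RHP $B.3$ by matching residue data through the $T(z)^{\sigma_{3}}$ conjugation (the $\delta$-factor producing $\widetilde{c}_{k}(z_{0})$), and obtains existence, uniqueness, the $L^{\infty}$ bound, and the exponential localization to $\hat{\sigma}_{d}(I)$ exactly as in Propositions $B.2$ and $B.4$. The only difference is presentational: the paper's proof is a bare citation of the Appendix $B$ results, whereas you sketch their proofs (the linear-algebraic solvability, the $\Xi(z)$-type pole removal and small-norm argument) inline.
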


\begin{proof}
Note that when $\triangle=\triangle_{z_{0}}^{-}$ and $\sigma^{\triangle}_{d}=\sigma_{d}^{out}$ in RHP $B.3$,  $M^{(out)}(z)$ is the same as defining $M^{\triangle}(z)$ in RHP $B.3$. The existence and uniqueness of solutions to RHR $B.3$ are guaranteed by Proposition $B.2$. The inequality \eqref{6-2} can be obtained by substituting \eqref{6-1} into \eqref{B-1}. Finally, on the basis of \eqref{6-1}, \eqref{B-6} and Proposition $B.4$, \eqref{q-sol-out} can be obtained by a simple calculation.
\end{proof}

\subsection{Local solvable model near phase point}\label{section-Local-solvable-model}

For $z\in\mathcal{U}_{z_{0}}$, the \eqref{V2-I} implies that $V^{(2)}-I$ does not have a uniform estimate for large time. Therefore, by introducing  a  model $M^{(out)}(z)M^{(pc)}(z_0,r_0)$ to match the the jumps of $M^{(2)}_{R}$ on $\Sigma^{2}\cap \mathcal{U}_{z_{0}}$, we establish a local solvable model  for the error function $E(z)$. Recall the definition of $\theta(z)$ \eqref{theta} and introduce the transformation
\begin{align}\label{6-6}
\lambda=\lambda(z)=2i\sqrt{2t}(z-z_{0}).
\end{align}
Then, we can derive that
\begin{align}
2t\theta=-\frac{1}{2}\lambda^{2}-4tz_{0}^{2},
\end{align}
from which we know that  $\mathcal{U}_{z_0}$ is mapped into an expanding neighborhood of $\lambda=0$. If we let
\begin{align}
r_0(z_0)=r(z_0)T_0^{-2}(z_0)e^{2i(\nu(z_0)log(2i\sqrt{2t}))}e^{-4itz_0^{2}},\label{r-match}
\end{align}
and consider the fact that $1-\chi_{\mathcal{Z}}=1$ as $z\in\mathcal{U}_{z_0}$, the jump of $M^{(2)}_{R}$ in $\mathcal{U}_{z_0}$ is translated into
\begin{align}\label{6-7}
V^{(2)}(z)\mid_{z\in\mathcal{U}_{z_0}}=\left\{\begin{aligned}
\lambda(z)^{-i\nu\hat{\sigma}_{3}}e^{\frac{i\lambda(z)^{2}}{4}
\hat{\sigma}_{3}}\left(
                    \begin{array}{cc}
                      1 & r_{0}(z_0) \\
                      0 & 1 \\
                    \end{array}
                  \right),\quad z\in\Sigma_{1},\\
\lambda(z)^{-i\nu\hat{\sigma}_{3}}e^{\frac{i\lambda(z)^{2}}{4}
\hat{\sigma}_{3}}\left(
                    \begin{array}{cc}
                      1 & 0 \\
                      \frac{\bar{r}_{0}(z_0)}{1+|r_{0}(z_0)|^{2}} & 1 \\
                    \end{array}
                  \right),\quad z\in\Sigma_{2},\\
\lambda(z)^{-i\nu\hat{\sigma}_{3}}e^{\frac{i\lambda(z)^{2}}{4}
\hat{\sigma}_{3}}\left(
                    \begin{array}{cc}
                      1 & \frac{r_{0}(z_0)}{1+|r_{0}(z_0)|^{2}}\\
                      0 & 1 \\
                    \end{array}
                  \right),\quad z\in\Sigma_{3},\\
\lambda(z)^{-i\nu\hat{\sigma}_{3}}e^{\frac{i\lambda(z)^{2}}{4}
\hat{\sigma}_{3}}\left(
                    \begin{array}{cc}
                      1 & 0 \\
                      \bar{r}_{0}(z_0) & 1 \\
                    \end{array}
                  \right),\quad z\in\Sigma_{4}.
\end{aligned}\right.
\end{align}
It is obvious that the jump $V^{(2)}(z)\mid_{z\in\mathcal{U}_{z_0}}$ in \eqref{6-7} is equivalent to the jump of the parabolic cylinder model problem \eqref{Vpc} whose solutions is shown in Appendix $A$. Moreover, on the basis of a fact that  $M^{(out)}(z)$ is an analytic and bounded function in  $\mathcal{U}_{z_0}$ and referring to the definition of $M^{(2)}_{R}(z)$, i.e., $M^{(2)}_{R}(z)=M^{(out)}(z)M^{(pc)}(z_0,r_0)(z\in\mathcal{U}_{z_0})$, a direct calculation shows that $M^{(out)}(z)M^{(pc)}(z_0,r_0)$ satisfies the jump $V^{(2)}(z)$ of $M^{(2)}_{R}(z)$.

\subsection{The small norm RHP for $E(z)$}\label{small-norm-RHP-E}

On the basis of Proposition \ref{prop-Mout} and $M^{(out)}(z)M^{(pc)}(z_0,r_0)$ analysed in the above subsection,  the unknown error function $E(z)$  defined in \eqref{Mrhp} can be shown as
\begin{align}\label{explict-E(z)}
E(z)=\left\{\begin{aligned}
&M^{(2)}_{R}(z)M^{(out)}(z)^{-1}, &&z\in\mathbb{C}\backslash \mathcal{U}_{z_0},\\
&M^{(2)}_{R}(z)M^{(pc)}(z_0,r_0)^{-1}M^{(out)}(z)^{-1}, &&z\in \mathcal{U}_{z_0}.
\end{aligned} \right.
\end{align}
It is obvious that $E(z)$ is analytic in $\mathbb{C}\setminus\Sigma^{(E)}$
where
\begin{align}%\label{}
\Sigma^{(E)}=\partial\mathcal{U}_{z_0}\cup(\Sigma^{2}\backslash\mathcal{U}_{z_0}),
\end{align}
with clockwise direction for $\partial\mathcal{U}_{z_0}$.
Then we can show that $E(z)$ satisfies the Riemann-Hilbert problem.
\begin{RHP}\label{RH-9}
Find a matrix-valued function $E(z)$ satisfies that
\begin{itemize}
 \item $E$ is analytic in $\mathbb{C}\backslash \Sigma^{(E)}$;
 \item $E(z)=I+O(z^{-1})$, \quad $z\rightarrow\infty$;
 \item $E_+(z)=E_-(z)V^{(E)}(z)$, \quad $z\in\Sigma^{(E)}$, where
\end{itemize}
 \begin{align}\label{6-8}
 V^{(E)}(z)=\left\{\begin{aligned}
 &M^{(out)}(z)V^{(2)}(z)M^{(out)}(z)^{-1}, &&z\in\Sigma^{(2)}\backslash \mathcal{U}_{z_0},\\
 &M^{(out)}(z)M^{(pc)}(\xi,r_0)M^{(out)}(z)^{-1}, &&z\in\partial\mathcal{U}_{z_0}.
 \end{aligned}\right.
 \end{align}
\end{RHP}

\centerline{\begin{tikzpicture}[scale=0.6]
\draw(0,0) [black, line width=1] circle(2);
\draw[-][thick](1.414,1.414)--(4,4);
\draw[-][thick](-1.414,1.414)--(-4,4);
\draw[-][thick](-1.414,-1.414)--(-4,-4);
\draw[-][thick](1.414,-1.414)--(4,-4);
\draw[->][thick](2,2)--(3,3);
\draw[->][thick](-4,4)--(-3,3);
\draw[->][thick](-4,-4)--(-3,-3);
\draw[->][thick](2,-2)--(3,-3);
\draw[fill] (3,1)node[below]{$\partial \mathcal {U}_{z_{0}}$};
\draw[fill] (2,3)node[above]{$\Sigma^{(2)}\setminus\mathcal {U}_{z_{0}}$};
\draw[->][black, line width=0.8] (2,0) arc(0:-270:2);
\end{tikzpicture}}
\centerline{\noindent {\small \textbf{Figure 4.} Jump contour $\Sigma^{(E)}=\partial \mathcal {U}_{z_{0}}\cup(\Sigma^{(2)}\setminus\mathcal {U}_{z_{0}})$}.}

Referring to \eqref{V2-I}, the boundedness of $M^{(out)}$ \eqref{6-2} and \eqref{A-1}, the following estimates can be immediately obtained.
\begin{align}\label{VE-I}
|V^{(E)}(z)-I|=\left\{\begin{aligned}
&\mathcal{O}(e^{-t\rho^2}) &&z\in\Sigma^{2}\backslash\mathcal{U}_{z_0},\\
&\mathcal{O}(t^{-1/2}) &&z\in\partial\mathcal{U}_{z_0},
\end{aligned}\right.
\end{align}
where the constant $\rho$ is defined in \eqref{4-2}.
Then, we obtain
\begin{align}\label{6-9}
\big|\big|(z-z_0)^{k}(V^{(E)}-I)\big|\big|_{L^{p}(\Sigma^{(E)})}=o(t^{-1/2}),~~p\in[1,+\infty],~k\geq0.
\end{align}
The estimates \eqref{VE-I} imply that the bound on $V^{(E)}(z)-I$ decay uniformly. Therefore, RHP \ref{RH-9} is a small-norm Riemann-Hilbert problem whose existence and uniqueness have been guaranteed by \cite{Deift-1994-2, Deift-2003}. Furthermore, based on the Beal-Coifman theory, the solution of RHP \ref{RH-9} is obtained as
\begin{align}\label{E(z)-solution}
E(z)=\mathbb{I}+\frac{1}{2\pi i}\int_{\Sigma^{(E)}}\frac{\mu_E(s)(V^{(E)}(s)-I)}{s-z}ds,
\end{align}
where $\mu_E\in L^2 (\Sigma^{(E)}) $  satisfies
\begin{align}\label{6-10}
(1-C_{\omega_E})\mu_E=I.
\end{align}
The  integral operator $C_{\omega_E}$ is defined by
\begin{align*}
C_{\omega_E}f=C_{-}(f(V^{(E)}-I)),\\
C_{-}f(z)\lim_{z\rightarrow\Sigma_{-}^{(E)}}\int_{\Sigma^{(E)}}\frac{f(s)}{s-z}ds,
\end{align*}
where $C_{-}$ is the Cauchy projection operator. Based on the properties of the Cauchy projection operator $C_{-}$ and \eqref{6-9}, we obtain that
\begin{align}%\label{}
\|C_{\omega_E}\|_{L^2(\Sigma^{(E)})}\lesssim\|C_-\|_{L^2(\Sigma^{(E)})\rightarrow L^2(\Sigma^{(E)})}\|V^{(E)}-I\|_{L^{\infty}
(\Sigma^{(E)})}\lesssim\mathcal{O}(t^{-1/2}),
\end{align}
from which we know that $1-C_{\omega_E}$ is invertible. As a result, the existence and uniqueness of $\mu_E$ and $E(z)$ are guaranteed. This  explains that it is reasonable to define $M^{(2)}_{R}$ in \eqref{Mrhp}. In turn we can solve \eqref{delate-pure-RHP} to the unknown $M^{(3)}$ which admits the Riemann-Hilbert Problem \ref{RH-5}.

Furthermore, to reconstruct the solutions of $q(y,t)$, it is necessary to study the asymptotic behavior of $E(z)$ as $z\rightarrow0$ and large time asymptotic behavior of $E(0)$. By observing  the estimate \eqref{VE-I},  for $t\rightarrow-\infty$, we just need to consider the calculation on $\partial\mathcal{U}_{z_{0}}$ because it approaches to zero exponentially on other boundary. Firstly, as $z\rightarrow 0$, we show  that
\begin{align}\label{6-11}
E(z)=E(0)+E_{1}z+O(z^{2}),
\end{align}
where
\begin{gather}
E(0)=\mathbb{I}+\frac{1}{2\pi i}\int_{\Sigma^{(E)}}\frac{(\mathbb{I}+\mu_E(s))(V^{(E)}(s)-I)}{s}ds,\label{6-12}\\
E_{1}=-\frac{1}{2\pi i}\int_{\Sigma^{(E)}}\frac{(\mathbb{I}+\mu_E(s))(V^{(E)}(s)-I)}{s^{2}}ds.\label{6-13}
\end{gather}
Then, as $t\rightarrow-\infty$, the asymptotic behavior  of $E(0)$ and $E_{1}$  can be calculated as
\begin{align}
E(0)=&\mathbb{I}+\frac{1}{2i\pi}\int_{\partial\mathcal{U}_{ z_{0}}}(V^{(E)}(s)-\mathbb{I})ds+o(|t|^{-1})\notag\\
=&\mathbb{I}+\frac{1}{i2\sqrt{2t}z_0}M^{(out)}(z_0)^{-1} M_1^{(pc,-)}(z_0)M^{(out)}(z_0),\label{6-14}\\
E_{1}=&-\frac{1}{i2\sqrt{2t}z^{2}_{0}}M^{(out)}(z_0)^{-1}M_1^{(pc,-)}(z_0)M^{(out)}(z_0),\label{6-15}
\end{align}
where $M_1^{(pc,-)}=\left(
                       \begin{array}{cc}
                         0 & -\beta^{-}_{12}(r_{0}) \\
                         \beta^{-}_{21}(r_{0}) & 0 \\
                       \end{array}
                     \right)$.
By using \eqref{A-3} and \eqref{r-match}, we obtain
\begin{align*}
\beta^{-}_{12}(r(z_{0}))=\bar{\beta^{-}}(r(z_{0}))=\alpha(z_{0},-)e^{i\frac{y^{2}}{4t}+i\nu(z_{0})\log8|t|},
\end{align*}
where
\begin{align*}
&|\alpha(z_{0},-)|^{2}=|\nu(z_{0})|,\\
&\arg\alpha(z_{0},-)=-\frac{\pi}{4}-\arg\Gamma(i\nu(z_{0}))-\arg r(z_{0})-4\mathop{\sum}\limits_{k\in\triangle_{z_{0}}^{-}}\arg(z_0-z_k)-2\int_{-\infty}^{z_{0}}\log|z_{0}-s|d\nu(s).
\end{align*}
Moreover, from \eqref{6-14}, a direct calculation shows that
\begin{align}\label{6-16}
E(0)^{-1}=\mathbb{I}+O(t^{-1/2}).
\end{align}

\section{Pure $\bar{\partial}$-Problem}\label{section-Pure-dbar-RH}

In this section, our purpose is to study the existence and asymptotic behavior of the remaining $\bar{\partial}$-problem $M^{(3)}(z)$.  The $\bar{\partial}$-RH problem \ref{RH-5} for $M^{(3)}(z)$ is equivalent to the following integral equation
\begin{align}\label{7-1}
M^{(3)}(z)=\mathrm{I}-\frac{1}{\pi}\iint_{\mathbb{C}}\frac{M^{(3)}W^{(3)}}{s-z}\mathrm{d}A(s),
\end{align}
where $\mathrm{d}A(s)$ is Lebesgue measure. Furthermore, the integral equation \eqref{7-1} can be written  in operator form, i.e.,
\begin{align}\label{7-2}
(\mathrm{I}-\mathrm{S})M^{(3)}(z)=\mathrm{I},
\end{align}
where $\mathrm{S}$ is Cauchy operator
\begin{align}\label{7-3}
\mathrm{S}[f](z)=-\frac{1}{\pi}\iint_{\mathbb{C}}\frac{f(s)W^{(3)}(s)}{s-z}\mathrm{d}A(s).
\end{align}
From \eqref{7-2}, we know that if the inverse operator $(\mathrm{I}-\mathrm{S})^{-1}$ exists, the solution $M^{(3)}(z)$  also exists. In order to prove the operator $\mathrm{I}-\mathrm{S}$ is reversible, we give the following proposition.

\begin{prop}
For large $t$, there exists a constant $c$ that enables the operator \eqref{7-3} to admit the following relation
\begin{align}\label{7-4}
||\mathrm{S}||_{L^{\infty}\rightarrow L^{\infty}}\leq c|t|^{-1/4}.
\end{align}
\end{prop}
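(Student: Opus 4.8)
The plan is to bound the operator norm directly from the kernel of $\mathrm{S}$. For any $f\in L^{\infty}(\mathbb{C})$, pulling the sup-norm outside gives
\begin{align}
|\mathrm{S}[f](z)|\leq\frac{1}{\pi}\|f\|_{L^{\infty}}\iint_{\mathbb{C}}\frac{|W^{(3)}(s)|}{|s-z|}\,\mathrm{d}A(s),
\end{align}
so it suffices to show $\iint_{\mathbb{C}}\frac{|W^{(3)}(s)|}{|s-z|}\mathrm{d}A(s)\leq c|t|^{-1/4}$ uniformly in $z$. Since $M^{(2)}_{R}$ equals $EM^{(out)}$ (or $EM^{(out)}M^{(pc)}$ inside $\mathcal{U}_{z_0}$) and each of these factors together with its inverse is bounded by \eqref{6-2}, \eqref{explict-E(z)} and Appendix $A$, the definition \eqref{5-1} yields the pointwise bound $|W^{(3)}(s)|\lesssim|\bar{\partial}R^{(2)}(s)|$; by \eqref{4.1} this is supported in $\Omega_1\cup\Omega_3\cup\Omega_4\cup\Omega_6$ and carries the oscillatory factor $e^{\pm2it\theta}$.

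By the symmetry of the four sectors it is enough to treat $\Omega_1$. Writing $s=z_0+u+iv$ and recalling \eqref{theta}, I would compute $\mathrm{Re}(-2it\theta)=8tuv=-8|t|uv$ for $t<0$, so on $\Omega_1$ (where $u\geq v\geq0$) the exponential supplies the decay factor $e^{-8|t|uv}$. Then Proposition \ref{R-property} gives the splitting $|\bar{\partial}R_1(s)|\lesssim|\bar{\partial}\chi_{Z}(s)|+|r'(\mathrm{Re}\,s)|+|s-z_0|^{-1/2}$, decomposing the sector integral into three pieces $I_1,I_2,I_3$. The piece $I_1$ is supported on a fixed compact set bounded away from $\mathbb{R}$, where $uv$ is bounded below, so $I_1=O(e^{-c|t|})$.

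For $I_2$ and $I_3$ the mechanism is to decouple the $z$-dependence by Cauchy–Schwarz in $u$: the factor $\big(\int_{v}^{\infty}|s-z|^{-2}\,du\big)^{1/2}\lesssim|\,\mathrm{Im}\,z-v|^{-1/2}$ by the Poisson-kernel identity, which is independent of $\mathrm{Re}\,z$, while the remaining factor retains the Gaussian weight $e^{-8|t|v^{2}}$ (using $u\geq v$). For $I_2$ the companion factor is controlled by $\|r'\|_{L^{2}}$, finite since $r\in H^{1,1}(\mathbb{R})$ by Proposition \ref{prop-r-map}; for $I_3$ it reduces to an exponential-integral factor $E_1(16|t|v^{2})$. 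In both cases the conclusion follows from the one-dimensional estimate
\begin{align}
\sup_{b\in\mathbb{R}}\int_{0}^{\infty}e^{-c|t|v^{2}}|v-b|^{-1/2}\,\mathrm{d}v\lesssim|t|^{-1/4},
\end{align}
which I would prove by the rescaling $v\mapsto v/\sqrt{|t|}$, $b\mapsto b/\sqrt{|t|}$; this shows $I_2,I_3\lesssim|t|^{-1/4}$ uniformly in $z$. Collecting the three pieces over all four sectors gives \eqref{7-4}.

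I expect the main obstacle to be the uniformity in $z$ together with the extraction of the precise rate $|t|^{-1/4}$: the Cauchy kernel $|s-z|^{-1}$ can concentrate wherever $s\approx z$, so it must be separated from the oscillatory and singular factors before any rescaling is possible, and it is precisely the local singularity $|s-z_0|^{-1/2}$ of $\bar{\partial}R_j$ that caps the decay at $1/4$ rather than $1/2$.
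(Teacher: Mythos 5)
Your proposal is correct and follows essentially the same route as the paper's proof (Section 9 together with Appendix C): bound $|W^{(3)}|\lesssim|\bar{\partial}R^{(2)}|$ using the boundedness of $M^{(2)}_{R}$ and its inverse, reduce to one sector, split via the three-term estimate \eqref{R-estimate} into $I_{1},I_{2},I_{3}$, apply Cauchy--Schwarz/H\"older in the horizontal variable against the Cauchy kernel, and conclude with $\sup_{\eta}\int_{0}^{\infty}e^{-c|t|v^{2}}|v-\eta|^{-1/2}\,dv\lesssim|t|^{-1/4}$. The only (immaterial) deviations are that you dispatch $I_{1}$ by the exponential smallness of $e^{-8|t|uv}$ on the support of $\bar{\partial}\chi_{Z}$ rather than by the same Cauchy--Schwarz bound, and you treat $I_{3}$ by retaining the exponential in the Cauchy--Schwarz pairing (yielding an exponential-integral factor) where the paper instead uses H\"older with exponents $k>2$ as in \eqref{C-4}--\eqref{C-6}.
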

\begin{proof}
We mainly focus on the case that the matrix function supported in the region $\Omega_1$, the other case can be proved similarly. Assume that $f\in L^{\infty}(\Omega_1)$ and $s=p+iq$. then based on \eqref{4.1} and \eqref{5-1}, we can obtain the following inequality
\begin{align}\label{7-5}
|S[f](z)|&\leq\frac{1}{\pi}\iint_{\Omega_{1}}\frac{|f(s)M^{(2)}_{R}(s)\bar{\partial}R_{1}(s)M^{(2)}_{R}(s)^{-1}|}
{|s-z|}df(s)\notag\\
&\leq c\frac{1}{\pi}\iint_{\Omega_{1}}
\frac{|\bar{\partial}R_{1}(s)||e^{4tq(p-z_{0})}|}{|s-z|}df(s),
\end{align}
where $c$ is a constant. Then, on the basis of \eqref{R-estimate} and the estimates demonstrated in Appendix $C$, we obtain the following norm estimate.
\begin{align}\label{7-6}
||\mathrm{S}||_{L^{\infty}\rightarrow L^{\infty}}\leq c(I_{1}+I_{2}+I_{3})\leq ct^{-1/4},
\end{align}
where
\begin{align}\label{7-8}
I_{1}=\iint_{\Omega_{1}}
\frac{|\bar{\partial}\chi_{\mathcal{Z}}(s)||e^{4tq(p-z_{0})}|}{|s-z|}df(s), ~~
I_{2}=\iint_{\Omega_{1}}
\frac{|r'(p)||e^{4tq(p-z_{0})}|}{|s-z|}df(s),
\end{align}
and
\begin{align}\label{7-9}
I_{3}=\iint_{\Omega_{1}}
\frac{|s-z_{0}|^{-\frac{1}{2}}|e^{4tq(p-z_{0})}|}{|s-z|}df(s).
\end{align}
\end{proof}

Next, our ultimate goal is to reconstruct the potential $q(x,t)$ as $t\rightarrow\infty$.  To approach this goal,  according to \eqref{q-sol}, it is necessary to study the large time asymptotic behaviors of  $M^{(3)}(0)$  and $M_{1}^{(3)}(y,t)$. They are defined in the asymptotic expansion of $M^{(3)}(z)$ as $z\rightarrow$, i.e.,
\begin{align*}
M^{(3)}(z)=M^{(3)}(0)+M_{1}^{(3)}(y,t)z+O(z^{2}),~~z\rightarrow0 ,
\end{align*}
where
\begin{align*}
M^{(3)}(0)=\mathbb{I}-\frac{1}{\pi}\iint_{\mathbb{C}}\frac{M^{(3)}(s)W^{(3)}(s)}{s}
\mathrm{d}A(s),\\
M^{(3)}_{1}(y,t)=\frac{1}{\pi}\int_{\mathbb{C}}\frac{M^{(3)}(s)W^{(3)}(s)}{s^{2}}
\mathrm{d}A(s).
\end{align*}
The $M^{(3)}(0)$ and $M^{(3)}_{1}(y,t)$ satisfy the following proposition.
\begin{lem}\label{prop-M3-Est}
For $t\rightarrow-\infty$,  $M^{(3)}(0)$ and $M^{(3)}_{1}(y,t)$ satisfy  the following inequality
\begin{align}
\|M^{(3)}(0)-\mathbb{I}\|_{L^{\infty}}\lesssim |t|^{-\frac{3}{4}},\label{8-10}\\
M^{(3)}_{1}(y,t)\lesssim |t|^{-1}\label{8-11}.
\end{align}
\end{lem}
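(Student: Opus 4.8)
The plan is to estimate the two quantities directly from their integral representations
\[
M^{(3)}(0)-\mathbb{I}=-\frac{1}{\pi}\iint_{\mathbb{C}}\frac{M^{(3)}(s)W^{(3)}(s)}{s}\,\mathrm{d}A(s),\qquad
M^{(3)}_{1}(y,t)=\frac{1}{\pi}\iint_{\mathbb{C}}\frac{M^{(3)}(s)W^{(3)}(s)}{s^{2}}\,\mathrm{d}A(s).
\]
The key inputs are already available: the operator bound $\|\mathrm{S}\|_{L^{\infty}\to L^{\infty}}\lesssim|t|^{-1/4}$ guarantees $(\mathrm{I}-\mathrm{S})^{-1}$ exists for large $|t|$, so that $\|M^{(3)}\|_{L^{\infty}}\lesssim 1$; the boundedness $\|M^{(2)}_{R}\|_{L^{\infty}}\lesssim 1$ from \eqref{6-2} gives, via \eqref{5-1}, the pointwise control $|W^{(3)}(s)|\lesssim|\bar{\partial}R^{(2)}(s)|$; and the $\bar{\partial}$-estimates \eqref{R-estimate} bound $|\bar{\partial}R^{(2)}|$ by $c\,\bar{\partial}\chi_{\mathcal{Z}}+c|s-z_{0}|^{-1/2}+c|r'(\mathrm{Re}\,s)|$ with the $e^{2it\theta}$ oscillation supplying exponential decay $|e^{-4t q(p-z_{0})}|$ off the real axis. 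Since the supports of $\bar{\partial}R^{(2)}$ avoid $z=0$ (the functions vanish near the discrete spectrum, which is bounded away from $\mathbb{R}$, and $z_0\neq 0$ generically), the factors $1/s$ and $1/s^{2}$ are harmless on the relevant domains, so no singularity from the reconstruction point interferes with the estimate.

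\textbf{Carrying out the bounds.} Working sector by sector as in the proof of \eqref{7-4}, I would split each integral along the three contributions in \eqref{R-estimate}. Writing $s=p+iq$ and using $|M^{(3)}|\lesssim 1$, $|M^{(2)}_R|\lesssim 1$, one reduces $M^{(3)}(0)-\mathbb{I}$ to a sum of integrals of the type
\begin{align*}
&\iint_{\Omega_{1}}\frac{|\bar{\partial}\chi_{\mathcal{Z}}(s)|\,|e^{4tq(p-z_{0})}|}{|s|}\,\mathrm{d}A(s),\quad
\iint_{\Omega_{1}}\frac{|r'(p)|\,|e^{4tq(p-z_{0})}|}{|s|}\,\mathrm{d}A(s),\quad
\iint_{\Omega_{1}}\frac{|s-z_{0}|^{-1/2}\,|e^{4tq(p-z_{0})}|}{|s|}\,\mathrm{d}A(s),
\end{align*}
and analogously over $\Omega_{3},\Omega_{4},\Omega_{6}$. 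The $\bar{\partial}\chi_{\mathcal{Z}}$ term is supported away from $z_{0}$ on a compact annulus where the exponential $e^{4tq(p-z_0)}$ decays like $e^{-c|t|}$, so it is $O(e^{-c|t|})$ and negligible. For the remaining two terms I would apply the same Cauchy–Schwarz / change-of-variables scheme used in Appendix $C$ to establish \eqref{7-6}: the $r'$ term is handled by Hölder using $r'\in L^{2}$ (from $r\in H^{1,1}$, Proposition \ref{prop-r-map}), and the $|s-z_{0}|^{-1/2}$ term by splitting the $q$-integral against the Gaussian decay $e^{-4t q(p-z_{0})}$ in $q$ and the $p$-integral against the local singularity. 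Each such integral over $1/|s|$ carries one more power of $|t|^{-1/4}$ than the operator-norm estimate because one factor of $M^{(3)}-\mathbb{I}$ can itself be estimated by $\|\mathrm{S}\|\lesssim|t|^{-1/4}$ after substituting $M^{(3)}=\mathbb{I}+(M^{(3)}-\mathbb{I})$; combining this extra gain with the intrinsic $|t|^{-1/2}$ from the Gaussian scaling yields the stated $|t|^{-3/4}$.

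\textbf{The quadratic coefficient and the main obstacle.} For $M^{(3)}_{1}(y,t)$ the integrand carries $1/s^{2}$ instead of $1/s$, but since all supports are bounded away from $s=0$ this only contributes a harmless bounded factor; the genuine gain comes from the same Gaussian and Hölder mechanism together with the two powers of $|s-z_{0}|$ now available, giving the sharper rate $|t|^{-1}$. The main obstacle, and the step requiring the most care, is the precise bookkeeping of the powers of $|t|$ in the $|s-z_{0}|^{-1/2}$ integral after the scaling $s-z_{0}\mapsto |t|^{-1/2}\,\zeta$: one must verify that the singular weight, the Jacobian, and the $1/|s-z|$ kernel combine so that the extra factor beyond the $|t|^{-1/2}$ scaling is exactly $|t|^{-1/4}$ (for $M^{(3)}(0)$) respectively $|t|^{-1/2}$ (for $M^{(3)}_{1}$), rather than being lost to a logarithm or a divergent endpoint. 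This is where the fact that $r\in H^{1,1}$ with $r'\in L^{2}$ is essential, since it is exactly the integrability needed to close the Hölder estimates uniformly in $z$ and to guarantee the uniformity of the resulting $L^\infty$-in-$z$ bounds asserted in \eqref{8-10}–\eqref{8-11}.
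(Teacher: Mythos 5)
The paper offers no actual argument for this lemma beyond the single sentence that the proof is ``similar to the process shown in Appendix $C$'', so your overall strategy --- bound $M^{(3)}$ and $M^{(2)}_{R}$ by constants, reduce to area integrals of $|\bar{\partial}R_{j}|\,|e^{\pm 2it\theta}|$ against the kernels $1/|s|$ and $1/|s|^{2}$, and run the three--term splitting of \eqref{R-estimate} sector by sector exactly as in the proof of \eqref{7-4} --- is indeed the intended route. However, two of your supporting claims are wrong, and one of them sits exactly where the argument is delicate. First, the support of $\bar{\partial}R^{(2)}$ does \emph{not} avoid $z=0$: the cutoff $\chi_{\mathcal{Z}}$ of \eqref{4-3} only excises $\rho/3$--neighbourhoods of the discrete spectrum, while the sectors $\Omega_{1},\Omega_{6}$ (respectively $\Omega_{3},\Omega_{4}$) have the real axis, and hence the origin, on their boundary, and the bound \eqref{R-estimate} contains the term $|r'(\mathrm{Re}\,z)|$ which does not vanish there. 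For the kernel $1/|s|$ this is survivable, since $1/|s|$ is locally area--integrable and near $s=0$ one has $|\mathrm{Re}\,s-z_{0}|\gtrsim|z_{0}|$, so the factor $e^{-4|t|\,\mathrm{Im}\,s\,(\mathrm{Re}\,s-z_{0})}$ supplies genuine decay. But $1/|s|^{2}$ is not locally area--integrable in the plane, and the exponential decay in $\mathrm{Im}\,s$ alone does not rescue the integral near $s=0$ (integrating out $\mathrm{Im}\,s$ still leaves a divergent $\int|p|^{-1}dp$). Hence \eqref{8-11} cannot be obtained by the verbatim transcription you propose; one needs either extra vanishing of $\bar{\partial}R_{j}$ near the origin or a different representation of $M^{(3)}_{1}$. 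This difficulty is inherited from the paper, which never addresses it, but your proposal actively relies on the false ``support avoids $0$'' claim to dismiss it.

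Second, your bookkeeping of the powers of $|t|$ for \eqref{8-10} is circular. Writing $M^{(3)}=\mathbb{I}+(M^{(3)}-\mathbb{I})$ and invoking $\|M^{(3)}-\mathbb{I}\|_{L^{\infty}}\lesssim\|\mathrm{S}\|\lesssim|t|^{-1/4}$ only improves the remainder; the leading term $\iint_{\mathbb{C}}W^{(3)}(s)s^{-1}\,\mathrm{d}A(s)$ must still be shown to be $O(|t|^{-3/4})$ on its own, so no power of $|t|$ is gained this way. The genuine source of the improvement from the $|t|^{-1/4}$ of \eqref{7-4} to $|t|^{-3/4}$ is that in \eqref{7-4} the Cauchy kernel $|s-z|^{-1}$ must be controlled uniformly in $z$ and is singular precisely near $s=z_{0}$, where the Gaussian factor is weakest (costing, via H\"older, a factor $|q-\eta|^{-1/2}$ that eats half of the Gaussian gain), whereas the fixed kernel $|s|^{-1}$ is bounded near $s=z_{0}$ for $(y,t)$ in the cone, so the full gain $\int_{0}^{\infty}(|t|q)^{-1/2}e^{-c|t|q^{2}}\,dq\sim|t|^{-3/4}$ survives. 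With that correction the $|s-z_{0}|^{-1/2}$ and $|r'(\mathrm{Re}\,s)|$ contributions to \eqref{8-10} do close along the lines you sketch, but as written your proposal neither establishes \eqref{8-10} for the right reason nor establishes \eqref{8-11} at all.
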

The proof of this Proposition is similar to the process that shown in  Appendix $C$.

\section{Soliton resolution for the WKI equation}

Now, we are ready to give the proof of Theorem \ref{Thm-1} as $t\rightarrow-\infty$.
Inverting a series of transformation including \eqref{Trans-1}, \eqref{Trans-2}, \eqref{delate-pure-RHP} and \eqref{Mrhp}, i.e.,
\begin{align*}
M(z)\leftrightarrows M^{(1)}(z)\leftrightarrows M^{(2)}(z)\leftrightarrows M^{(3)}(z) \leftrightarrows E(z),
\end{align*}
we then obtain
\begin{align*}
M(z)=M^{(3)}(z)E(z)M^{(out)}(z)R^{(2)^{-1}}(z)T^{-\sigma_{3}}(z),~~ z\in\mathbb{C}\setminus\mathcal{U}_{z_{0}}.
\end{align*}
In order to recover the potential $q(x,t)$ , we take $z\rightarrow0$ along the imaginary axis that means $z\in\Omega_{2}$ or $z\in\Omega_{5}$, as a result $R^{(2)}(z)=I$. Then, we obtain
\begin{gather*}
M(0)=M^{(3)}(0)E(0)M^{(out)}(0)T^{-\sigma_{3}}(0),\\
M=\left(M^{(3)}(0)+M^{(3)}_{1}z+\cdots\right)\left(E(0)+E_{1}z+\cdots\right)
\left(M^{(out)}(z)\right)\left(T^{-\sigma_{3}}(0)
+\tilde{T}_{1}^{-\sigma_{3}}z+\cdots\right).
\end{gather*}
Then by simple calculation, we immediately obtain
\begin{align*}
M(0)^{-1}M(z)=&T^{\sigma_{3}}(0)M^{(out)}(0)^{-1}M^{(out)}(z)T^{-\sigma_{3}}(0)z\\
&+ T^{\sigma_{3}}(0)M^{(out)}(0)^{-1}E_{1}M^{(out)}(z)T^{-\sigma_{3}}(0)z\\
&+ T^{\sigma_{3}}(0)M^{(out)}(0)^{-1}M^{(out)}(z)T^{-\sigma_{3}}(0)z+O(t^{-\frac{3}{4}}).
\end{align*}
Then, according to the reconstruction formula \eqref{q-sol}, \eqref{q-sol-out} and \eqref{6-15}, as $t\rightarrow-\infty$, we obtain that
\begin{align}\label{9.1}
q(x,t)&=q(y(x,t),t)\notag\\
&=q_{sol}(y(x,t),t;\hat{\sigma}_{d}(I))T^{2}(0)(1+T_{1})-it^{-\frac{1}{2}}e^{2d}\frac{\partial}{\partial y}f^{-}_{12}+O(t^{-\frac{3}{4}}),
\end{align}
where
\begin{align*}
y(x,t)=x-&c_{-}(x,t,\hat{\sigma}_{d}(I))-iT_{1}^{-1}-it^{-\frac{1}{2}}f^{-}_{11}+O(t^{-\frac{3}{4}}),\\
f^{-}_{12}=\frac{1}{i z^{2}_{0}2\sqrt{2}}
&[M^{(out)}(0)^{-1}(M^{(out)}(z_0)^{-1}M_{1}^{(pc,-)}(z_{0})M^{(out)}(z_0)]_{12},\\
f^{-}_{11}=\frac{1}{i z^{2}_{0}2\sqrt{2}}
&[M^{(out)}(0)^{-1}(M^{(out)}(z_0)^{-1}M_{1}^{(pc,-)}(z_{0})M^{(out)}(z_0)]_{11},
\end{align*}
where $M_{1}^{(pc,-)}$ is defined in section \ref{small-norm-RHP-E}.

For the initial value problem of the WKI equation i.e., $q_{0}(x)\in \mathcal{H}(\mathbb{R})$, the long time asymptotic behavior \eqref{9.1} gives the soliton resolution  which contains the soliton term confirmed by $N(I)$-soliton on discrete spectrum and the $t^{-\frac{1}{2}}$ order term on continuous spectrum with residual error up to $O(t^{-\frac{3}{4}})$. Also, our results reveal the soliton solutions of WKI equation are asymptotic stable.

\begin{rem}
The steps in the steepest descent analysis of RHP \ref{RH-2} for $t\rightarrow+\infty$ is similar to the case  $t\rightarrow-\infty$ which has been presented in section $5$-$9$. When we consider $t\rightarrow+\infty$, the main difference can be traced back to the fact that the regions of growth and decay of the exponential factors $e^{2it\theta}$ are reversed, see Fig. 1. Here, we leave the detailed calculations to the interested reader.
\end{rem}

Finally, we can give the following Theorem.

\begin{thm}\label{Thm-1}
Suppose that the initial value $q_{0}(x)$ satisfies the Assumption \eqref{assum} and $q_{0}(x)\in \mathcal{H}(\mathbb{R})$. Let $q(x,t)$ be the solution of WKI equation \eqref{WKI-equation}. The scattering data is denoted as $\{r,\{z_{k},c_{k}\}_{k=1}^{N}\}$ which generated from the initial value $q_{0}(x)$. For fixed $y_{1},y_{2},v_{1},v_{2}\in \mathbb{R}$ with $y_{1}<y_{2}$, $v_{1}<v_{2}$, and $I=\{z:-\frac{1}{4v_{1}}<Rez<-\frac{1}{4v_{2}}\}$, $z_{0}=\frac{y}{4t}$, then as $t\rightarrow \infty$ and $(y,t)\in S(y_{1},y_{2},v_{1},v_{2})$ which is defined in \eqref{space-time-S}, the solution $q(x,t)$ can be expressed as
\begin{align}\label{9.2}
\begin{split}
q(x,t)&=q(y(x,t),t)e^{-2d}\\
&=q_{sol}(y(x,t),t;\hat{\sigma}_{d}(I))T^{2}(0)(1+T_{1})-it^{-\frac{1}{2}}e^{2d}\frac{\partial}{\partial y}f^{\pm}_{12}+O(t^{-1}),\\
y(x,t)=x-&c_{-}(x,t,\sigma_{d}(I))-iT_{1}^{-1}-it^{-\frac{1}{2}}f^{\pm}_{11}+O(t^{-1}).
\end{split}
\end{align}
Here, $q_{sol}(x,t;\sigma_{d}(I))$ is the $N(I)$ soliton solution, $c_{-}(x,t,\hat{\sigma}_{d}(I))$ is defined in \eqref{2.3-4}, $T(z)$ is defined in \eqref{3-2}, $T_{1}$ is defined in \eqref{4.10}, $d$ is defined in \eqref{defin-d}
and
\begin{align*}
f^{\pm}_{12}=\frac{1}{i z^{2}_{0}2\sqrt{2}}
&[M^{(out)}(0)^{-1}(M^{(out)}(z_0)^{-1}M_{1}^{(pc,\pm)}(z_{0})M^{(out)}(z_0)]_{12},\\
f^{\pm}_{11}=\frac{1}{i z^{2}_{0}2\sqrt{2}}
&[M^{(out)}(0)^{-1}(M^{(out)}(z_0)^{-1}M_{1}^{(pc,\pm)}(z_{0})M^{(out)}(z_0)]_{11},
\end{align*}
where $M_{1}^{pc,\pm}(z)$ can be expressed as
$M_1^{(pc,\pm)}=\left(
                       \begin{array}{cc}
                         0 & -\beta^{\pm}_{12}(r_{0}) \\
                         \beta^{\pm}_{21}(r_{0}) & 0 \\
                       \end{array}
                     \right)$, and $ M^{(out)}(z)$ is defined in \eqref{Mrhp}.
Here,
\begin{align*}
\beta^{\pm}_{12}(r(z_{0}))=\overline{\beta^{\pm}}(r(z_{0}))=\alpha(z_{0},\pm)e^{i\frac{y^{2}}{4t}+i\nu(z_{0})\log8|t|},
\end{align*}
where
\begin{align*}
&|\alpha(z_{0},\pm)|^{2}=|\nu(z_{0})|,\\
&\arg\alpha(z_{0},\pm)=\pm\frac{\pi}{4}\pm\arg\Gamma(i\nu(z_{0}))-\arg r(z_{0})-4\mathop{\sum}\limits_{k\in\triangle_{z_{0}}^{-}}\mp2\int_{-\infty}^{z_{0}}\log|z_{0}-s|d\nu(s),
\end{align*}
$r(z)$ is defined in \eqref{r-expression}, $\nu(z)$ is defined in \eqref{delta-v-define} and  $\Gamma$ denotes the gamma function.
\end{thm}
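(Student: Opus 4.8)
The plan is to prove Theorem \ref{Thm-1} by reversing the entire chain of deformations built in Sections \ref{section-Conjugation}--\ref{section-Pure-dbar-RH} and then substituting the resulting small-$z$ expansion of $M(0)^{-1}M(z)$ into the reconstruction formula \eqref{q-sol}. First I would unwind the transformations \eqref{Trans-1}, \eqref{Trans-2}, \eqref{delate-pure-RHP} and \eqref{Mrhp} to write $M(z)=M^{(3)}(z)E(z)M^{(out)}(z)R^{(2)}(z)^{-1}T(z)^{-\sigma_{3}}$ for $z\in\mathbb{C}\setminus\mathcal{U}_{z_{0}}$. Since the reconstruction requires the behaviour as $z\to 0$, the key observation is to send $z\to 0$ along the imaginary axis, which lies in $\Omega_{2}\cup\Omega_{5}$; there $R^{(2)}(z)=I$, so that factor drops out and only $M^{(3)}$, $E$, $M^{(out)}$ and $T^{-\sigma_{3}}$ survive.

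The second step is to expand each surviving factor at $z=0$ using asymptotics already established: $T(z)=T(0)(1+zT_{1})+O(z^{2})$ from Proposition \ref{T-property}(f), $E(z)=E(0)+E_{1}z+O(z^{2})$ with the explicit $t\to-\infty$ values \eqref{6-14}--\eqref{6-15} and the bound \eqref{6-16}, the boundedness and $N$-soliton structure of $M^{(out)}$ from Proposition \ref{prop-Mout}, and $M^{(3)}(z)=M^{(3)}(0)+M^{(3)}_{1}z+O(z^{2})$ controlled by Lemma \ref{prop-M3-Est}. Forming the product $M(0)^{-1}M(z)$ and collecting the coefficient of $z$ then yields three groups of terms: a pure outer-model piece $T^{\sigma_{3}}(0)M^{(out)}(0)^{-1}M^{(out)}(z)T^{-\sigma_{3}}(0)$, a cross term carrying the error matrix $E_{1}$, and remainders of order $t^{-3/4}$ originating from $M^{(3)}$.

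The third step is to feed this expansion into \eqref{q-sol}. The leading piece reproduces, via \eqref{q-sol-out}, the soliton field $q_{sol}(y,t;\hat{\sigma}_{d}(I))$ dressed by the scalar factor $T^{2}(0)(1+T_{1})$; the $E_{1}$ contribution, evaluated through \eqref{6-15} and the parabolic-cylinder data $M_{1}^{(pc,\pm)}$ of Section \ref{section-Local-solvable-model}, produces precisely the $t^{-1/2}$ correction $-it^{-1/2}e^{2d}\partial_{y}f^{\pm}_{12}$, while the $(1,1)$-entry gives the analogous correction $-it^{-1/2}f^{\pm}_{11}$ to the implicit spatial variable $y(x,t)$. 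Tracking the orders shows that combining $E(0)=\mathbb{I}+O(t^{-1/2})$, $E_{1}=O(t^{-1/2})$ with $M^{(3)}(0)-\mathbb{I}=O(t^{-3/4})$ and $M^{(3)}_{1}=O(t^{-1})$ leaves a residual error of the stated size, uniformly for $(y,t)$ in the cone $S(y_{1},y_{2},v_{1},v_{2})$ because $z_{0}=-y/4t$ stays confined to the compact interval $I$.

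Finally, the main obstacle. The genuinely hard analytic input — the decay of the pure $\bar{\partial}$-contribution (Lemma \ref{prop-M3-Est}) and the parabolic-cylinder matching that fixes $\beta^{\pm}_{12}$ — is already in hand, so the difficulty here is organizational rather than technical: one must verify that the parametric reconstruction is differentiable in $y$, so that the $\partial_{y}$ in \eqref{q-sol} may be applied term-by-term to the expansion, and that all cross terms and $z^{2}$ remainders are genuinely subordinate uniformly over the cone. A secondary point requiring care is the change of variable $y(x,t)$ itself: since $q$ is given only parametrically, one must invert $x\mapsto y$ to the required accuracy and confirm that the incurred error is compatible with the claimed order. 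The $t\to+\infty$ case follows the same scheme, the only change being that the regions of exponential growth and decay of $e^{2it\theta}$ are interchanged, which swaps the sign data entering $\alpha(z_{0},\pm)$.
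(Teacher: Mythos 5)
Your proposal follows essentially the same route as the paper's Section 9: invert the chain $M\mapsto M^{(1)}\mapsto M^{(2)}\mapsto M^{(3)}$ together with \eqref{Mrhp}, send $z\to 0$ along the imaginary axis so that $R^{(2)}=I$, expand $T$, $E$, $M^{(out)}$, $M^{(3)}$ at $z=0$, and substitute into the reconstruction formula \eqref{q-sol} using \eqref{q-sol-out}, \eqref{6-14}--\eqref{6-16} and Lemma \ref{prop-M3-Est} to identify the soliton term, the $t^{-1/2}$ parabolic-cylinder correction and the residual error. This matches the paper's argument in both structure and the supporting estimates invoked.
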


\begin{rem}
Theorem \ref{Thm-1} needs the initial value to meet $q_{0}(x)\in \mathcal{H}(\mathbb{R})$, so that the inverse scattering transform possesses  well mapping properties \cite{r-bijectivity}. Indeed, the asymptotic results only depend on the $H^{1,1}(\mathbb{R})$ norm of $r$ in this work. So we restrict the initial potential $q_{0}(x)\in \mathcal{H}(\mathbb{R})$. Particularly, for any $q_{0}(x)\in \mathcal{H}(\mathbb{R})$ admitting the Assumption \ref{assum}, the process of the long-time analysis and calculations shown in this work is unchanged.
\end{rem}

\section*{Acknowledgements}
%\textcolor[rgb]{1.00,0.00,0.00}{The authors would like to thank the editor and the referees for their valuable comments and suggestions. The author S.F. Tian would like to thank Professor Engui Fan for his continuous help.}
This work was supported by  the National Natural Science Foundation of China under Grant No. 11975306, the Natural Science Foundation of Jiangsu Province under Grant No. BK20181351, the Six Talent Peaks Project in Jiangsu Province under Grant No. JY-059,  and the Fundamental Research Fund for the Central Universities under the Grant Nos. 2019ZDPY07 and 2019QNA35.

\section*{Appendix A: The parabolic cylinder model problem}
Here, we describe the solution of  parabolic cylinder model problem\cite{PC-model,PC-model-2}.
Define the contours $\Sigma^{pc}=\cup_{j=1}^{4}\Sigma_{j}^{pc}$ where
\begin{align}
\Sigma_{j}^{pc}=\left\{\lambda\in\mathbb{C}|\arg\lambda=\frac{2j-1}{4}\pi \right\}.\tag{A.1}
\end{align}
For $r_{0}\in \mathbb{C}$, let $\nu(r)=-\frac{1}{2\pi}\log(1+|r_{0}|^{2})$, consider the following parabolic cylinder model Riemann-Hilbert problem.
\begin{RHP}\label{PC-model}
Find a matrix-valued function $M^{(pc)}(\lambda)$ such that
\begin{align}
&\bullet \quad M^{(pc)}(\lambda)~ \text{is analytic in}~ \mathbb{C}\setminus\Sigma^{pc}, \tag{A.2}\\
&\bullet \quad M_{+}^{(pc)}(\lambda)=M_{-}^{(pc)}(\lambda)V^{(pc)}(\lambda),\quad
\lambda\in\Sigma^{pc}, \tag{A.3}\\
&\bullet \quad M^{(pc)}(\lambda)=I+\frac{M_{1}}{\lambda}+O(\lambda^{2}),\quad
\lambda\rightarrow\infty, \tag{A.4}
\end{align}
where
\begin{align}\label{Vpc}
V^{(pc)}(\lambda)=\left\{\begin{aligned}
\lambda^{-i\nu\hat{\sigma}_{3}}e^{\frac{i\lambda^{2}}{4}
\hat{\sigma}_{3}}\left(
                    \begin{array}{cc}
                      1 & r_{0} \\
                      0 & 1 \\
                    \end{array}
                  \right),\quad \lambda\in\Sigma_{1}^{pc},\\
\lambda^{-i\nu\hat{\sigma}_{3}}e^{\frac{i\lambda^{2}}{4}
\hat{\sigma}_{3}}\left(
                    \begin{array}{cc}
                      1 & 0 \\
                      \frac{\bar{r}_{0}}{1+|r_{0}|^{2}} & 1 \\
                    \end{array}
                  \right),\quad \lambda\in\Sigma_{2}^{pc},\\
\lambda^{-i\nu\hat{\sigma}_{3}}e^{\frac{i\lambda^{2}}{4}
\hat{\sigma}_{3}}\left(
                    \begin{array}{cc}
                      1 &\frac{r_{0}}{1+|r_{0}|^{2}} \\
                      0 & 1 \\
                    \end{array}
                  \right),\quad \lambda\in\Sigma_{3}^{pc},\\
\lambda^{-i\nu\hat{\sigma}_{3}}e^{\frac{i\lambda^{2}}{4}
\hat{\sigma}_{3}}\left(
                    \begin{array}{cc}
                      1 & 0 \\
                      \bar{r}_{0} & 1 \\
                    \end{array}
                  \right),\quad \lambda\in\Sigma_{4}^{pc}.
\end{aligned}\right.\tag{A.5}
\end{align}
\end{RHP}
\centerline{\begin{tikzpicture}[scale=0.6]
\draw[pink,-][dashed](-4,0)--(4,0);
\draw[->][thick](2,2)--(3,3);
\draw[->][thick](-4,4)--(-3,3);
\draw[->][thick](-4,-4)--(-3,-3);
\draw[->][thick](2,-2)--(3,-3);
\draw[blue,-][thick](-4,-4)--(4,4);
\draw[blue,-][thick](-4,4)--(4,-4);
\draw[fill] (3.2,3)node[below]{$\Sigma_{1}^{pc}$};
\draw[fill] (3.2,-3)node[above]{$\Sigma_{4}^{pc}$};
\draw[fill] (-3.2,3)node[below]{$\Sigma_{2}^{pc}$};
\draw[fill] (-2,-3)node[below]{$\Sigma_{3}^{pc}$};
\draw[fill] (0,0)node[below]{$0$};
\draw[fill] (1,0)node[below]{$\Omega_{6}$};
\draw[fill] (1,0)node[above]{$\Omega_{1}$};
\draw[fill] (0,-1)node[below]{$\Omega_{5}$};
\draw[fill] (0,1)node[above]{$\Omega_{2}$};
\draw[fill] (-1,0)node[below]{$\Omega_{4}$};
\draw[fill] (-1,0)node[above]{$\Omega_{3}$};
\draw[fill] (7,3)node[below]{$\lambda^{-i\nu\hat{\sigma}_{3}}e^{\frac{i\lambda^{2}}{4}\hat{\sigma}_{3}}
\left(
  \begin{array}{cc}
    1 & r_{0} \\
    0 & 1 \\
  \end{array}
\right)
$};
\draw[fill] (7,-2)node[below]{$\lambda^{-i\nu\hat{\sigma}_{3}}e^{+\frac{i\lambda^{2}}{4}\hat{\sigma}_{3}}
\left(
  \begin{array}{cc}
    1 & 0\\
    \bar{r}_{0} & 1 \\
  \end{array}
\right)
$};
\draw[fill] (-7,2.5)node[below]{$\lambda^{-i\nu\hat{\sigma}_{3}}e^{\frac{i\lambda^{2}}{4}\hat{\sigma}_{3}}
\left(
  \begin{array}{cc}
    1 & 0 \\
   \frac{\bar{r}_{0}}{1+|r_{0}|^{2}} & 1 \\
  \end{array}
\right)
$};
\draw[fill] (-7,-1)node[below]{$\lambda^{-i\nu\hat{\sigma}_{3}}e^{\frac{i\lambda^{2}}{4}\hat{\sigma}_{3}}
\left(
  \begin{array}{cc}
    1 &\frac{r_{0}}{1+|r_{0}|^{2}} \\
    0 & 1 \\
  \end{array}
\right)
$};
\end{tikzpicture}}
\centerline{\noindent {\small \textbf{Figure 5.} Jump matrix $V^{(pc)}$}.}
We have the parabolic cylinder equation  expressed as \cite{PC-equation}
\begin{align*}
\left(\frac{\partial^{2}}{\partial z^{2}}+(\frac{1}{2}-\frac{z^{2}}{2}+a)\right)D_{a}=0.
\end{align*}
As shown in the literature\cite{Deift-1993, PC-solution2}, we obtain the explicit solution $M^{(pc)}(\lambda, r_{0})$:
\begin{align*}
M^{(pc)}(\lambda, r_{0})=\Phi(\lambda, r_{0})\mathcal{P}(\lambda, r_{0})e^{\frac{i}{4}\lambda^{2}\sigma_{3}}\lambda^{-i\nu\sigma_{3}},
\end{align*}
where
\begin{align*}
\mathcal{P}(\lambda, r_{0})=\left\{\begin{aligned}
&\left(
                    \begin{array}{cc}
                      1 & -r_{0} \\
                      0 & 1 \\
                    \end{array}
                  \right),\quad &\lambda\in\Omega_{1},\\
&\left(
                    \begin{array}{cc}
                      1 & 0\\
                      -\frac{\bar{r}_{0}}{1+|r_{0}|^{2}} & 1 \\
                    \end{array}
                  \right),\quad &\lambda\in\Omega_{3},\\
&\left(
                    \begin{array}{cc}
                      1 &\frac{r_{0}}{1+|r_{0}|^{2}}\\
                      0 & 1 \\
                    \end{array}
                  \right),\quad &\lambda\in\Omega_{4},\\
&\left(
                    \begin{array}{cc}
                      1 & 0 \\
                      \bar{r}_{0} & 1 \\
                    \end{array}
                  \right),\quad &\lambda\in\Omega_{6},\\
&~~~\emph{I},\quad &\lambda\in\Omega_{2}\cup\Omega_{5},
\end{aligned}\right.
\end{align*}
and
\begin{align*}
\Phi(\lambda, r_{0})=\left\{\begin{aligned}
\left(
                    \begin{array}{cc}
                      e^{-\frac{3\pi\nu}{4}}D_{i\nu}\left( e^{-\frac{3i\pi}{4}}\lambda\right) & i\beta_{12}e^{-\frac{3\pi(\nu+i)}{4}}D_{i\nu-1}\left( e^{-\frac{3i\pi}{4}}\lambda\right) \\
                      -i\beta_{21}e^{-\frac{\pi}{4}(\nu-i)}D_{-i\nu-1}\left( e^{-\frac{i\pi}{4}}\lambda\right) & e^{\frac{\pi\nu}{4}}D_{-i\nu}\left( e^{-\frac{i\pi}{4}}\lambda\right) \\
                    \end{array}
                  \right),\quad \lambda\in\mathbb{C}^{+},\\
\left(
                    \begin{array}{cc}
                      e^{\frac{\pi\nu}{4}}D_{i\nu}\left( e^{\frac{i\pi}{4}}\lambda\right) & i\beta_{12}e^{\frac{\pi}{4}(\nu+i)}D_{i\nu-1}\left( e^{\frac{i\pi}{4}}\lambda\right) \\
                      -i\beta_{21}e^{-\frac{3\pi(\nu-i)}{4}}D_{-i\nu-1}\left( e^{\frac{3i\pi}{4}}\lambda\right) & e^{-\frac{3\pi\nu}{4}}D_{-i\nu}\left( e^{\frac{3i\pi}{4}}\lambda\right) \\
                    \end{array}
                  \right),\quad \lambda\in\mathbb{C}^{-},
\end{aligned}\right.
\end{align*}
with
\begin{align}\label{A-3}
\beta_{21}=\frac{\sqrt{2\pi}e^{i\pi/4}e^{-\pi\nu/2}}{r_0\Gamma(-i\nu)},\quad \beta_{12}=\frac{-\sqrt{2\pi}e^{-i\pi/4}e^{-\pi\nu/2}}{r_0^*\Gamma(i\nu)}=\frac{\nu}{\beta_{21}}.\tag{A.6}
\end{align}
Then, it is not hard to obtain the asymptotic behavior of the solution by using the well-known asymptotic behavior of $D_{a}(z)$,
\begin{align}\label{A-1}
M^{(pc)}(r_0,\lambda)=I+\frac{M_1^{(pc)}}{i\lambda}+O(\lambda^{-2}), \tag{A.7}
\end{align}
where
\begin{align}\label{A-2}
M_1^{(pc)}=\begin{pmatrix}0&-\beta_{12}\\\beta_{21}&0\end{pmatrix}. \tag{A.8}
\end{align}

\section*{Appendix B: Meromorphic solutions of the WKI Riemann-Hilbert problem}

Here, we  study RHP \ref{RH-2} for the reflectionless case, i.e., $r(z)=0$. Under this condition, we know that $M(y,t;z)$ has no jump across the real axis. Then for given scattering data $\sigma_{d}=\{(z_{k}, c_{k}), z_{k}\in\mathcal{Z}\}^{N}_{k=1}$ satisfying $z_{k}\neq z_{j}$ for $k\neq j$, we obtain the following Riemann-Hilbert problem from RHP \ref{RH-2}.

\noindent \textbf{Riemann-Hilbert problem B.1}
%Given scattering data $\sigma_{d}=\{(z_{k}, c_{k})\}^{N}_{k=1}$ and $\mathcal{Z}=\{z_{k}\}^{N}_{k=1}$.
Find a matrix value function $M(y,t;z|\sigma_{d})$ satisfying
\begin{itemize}
  \item $M(y,t;z|\sigma_{d})$ is analytic in $\mathbb{C}\setminus(\mathcal{Z}\bigcup\bar{\mathcal{Z}})$;
  \item $M(y,t;z|\sigma_{d})=I+O(z^{-1})$, \quad $z\rightarrow\infty$;
  \item $M(y,t;z|\sigma_{d})$ satisfies the following residue conditions at simple poles $z_{k}\in\mathcal{Z}$ and $\bar{z}_{k}\in\bar{\mathcal{Z}}$
\begin{align}
\begin{aligned}
&\mathop{Res}_{z=z_{k}}M(y,t;z|\sigma_{d})=\mathop{lim}_{z\rightarrow z_{k}}M(y,t;z|\sigma_{d})N_{k},\\
&\mathop{Res}_{z=\bar{z}_{k}}M(y,t;z|\sigma_{d})=\mathop{lim}_{z\rightarrow \bar{z}_{k}}M(y,t;z|\sigma_{d})\sigma_{2}\bar{N}_{k}\sigma_{2},
\end{aligned}\tag{B.1}
\end{align}
where
\begin{gather}
N_{k}=\left(\begin{aligned}
\begin{array}{cc}
  0 & \gamma_{k}(x,t) \\
  0 & 0
\end{array}
\end{aligned}\right),~
\gamma_{k}(x,t)=c_{k}e^{-2it\theta(z_{k})},\tag{B.2}\\
\theta(z_{k})=2z_{k}^{2}+\frac{y}{t}z_{k}.\tag{B.3}
\end{gather}
\end{itemize}

Then, based on the Liouville's theorem, the uniqueness of the solution is a direct result. Referring to the symmetry  shown in Proposition \ref{Sym-mu}, we obtain $M(y,t;z|\sigma_{d})=-\sigma_{2} \bar{M}(y,t;\bar{z}|\sigma_{d})\sigma_{2}$, from which we can derive the following expansion, i.e.,
\begin{align}
M(y,t;z|\sigma_{d})=\mathbb{I}+\sum_{k=1}^{N}\left[\frac{1}{z-z_{k}}\left(\begin{aligned}
\begin{array}{cc}
  0 & \zeta_{k}(x,t) \\
  0 & \eta_{k}(x,t)
\end{array}
\end{aligned}\right)+\frac{1}{z-z^{*}_{k}}\left(\begin{aligned}
\begin{array}{cc}
  \eta^{*}_{k}(x,t) & 0 \\
  -\zeta^{*}_{k}(x,t) & 0
\end{array}
\end{aligned}\right)\right],\tag{B.4}
\end{align}
where $\zeta_{k}(x,t)$ and $\eta_{k}(x,t)$ are unknown coefficients to be determined. Next, in the similar way shown in \cite{AIHP}, we obtain the following proposition.

\noindent \textbf{Proposition B.2}
For the given scattering data $\sigma_{d}=\{(z_{k}, c_{k}), z_{k}\in\mathcal{Z}\}^{N}_{k=1}$ such that $z_{k}\neq z_{j}$ for $k\neq j$, the solution of RHP $B.1$ is unique for each $(x,t)\in\mathbb{R}^{2}$. Moreover, the solution satisfies
\begin{align}\label{B-1}
\|M(y,t;z|\sigma_{d})\|_{L^{\infty}(\mathbb{C}\setminus(\mathcal{Z}\cup\bar{\mathcal{Z}}))}\lesssim 1.\tag{B.5}
\end{align}

\subsection*{B.1 Renormalization of the RHP for reflectionless case}

For the reflectionless  case, following from the trace formula \eqref{s22-trace}, we obtain
\begin{align}
s_{22}(z)=\prod_{k=1}^{N}\left(\frac{z-z_{k}}{z-\bar{z}_{k}}\right).\tag{B.6}
\end{align}
Following from the ideas in \cite{AIHP}, we define $\vartriangle\subseteq\{1,2,\cdots,N\}$, $\bigtriangledown\subseteq\{1,2,\cdots,N\}\setminus\vartriangle$, and
\begin{align}
s_{22,\vartriangle}=\prod_{k\in\vartriangle}\frac{z-z_{k}}{z-z^{*}_{k}},\quad
s_{22,\triangledown}=\frac{s_{11}}{s_{11,\vartriangle}}=
\prod_{k\in\triangledown}\frac{z-z_{k}}{z-z^{*}_{k}}.\tag{B.7}
\end{align}
Then, the normalized transformation
\begin{align}\label{B-2}
M^{\vartriangle}(y,t;z|\sigma_{d}^{\vartriangle})=M(y,t;z|\sigma_{d})s_{22,\vartriangle}(z)^{-\sigma_{3}},\tag{B.8}
\end{align}
splits the poles between the columns of $M(y,t;z|\sigma_{d})$ based on the selection of different $\vartriangle$. Then, we can get the modified Riemann-Hilbert problem.

\noindent \textbf{Riemann-Hilbert problem B.3}
Given scattering data $\sigma_{d}=\{(z_{k}, c_{k})\}^{N}_{k=1}$ and $\vartriangle\subseteq\{1,2, \cdots,N\}$, find a matrix value function $M^{\vartriangle}$ satisfying
\begin{itemize}
  \item $M^{\vartriangle}(y,t;z|\sigma_{d}^{\vartriangle})$ is analytic in $\mathbb{C}\setminus(\mathcal{Z}\bigcup\bar{\mathcal{Z}})$;
  \item $M^{\vartriangle}(y,t;z|\sigma_{d}^{\vartriangle})=I+O(z^{-1})$, \quad $z\rightarrow\infty$;
  \item $M^{\vartriangle}(y,t;z|\sigma_{d}^{\vartriangle})$ satisfies the following residue conditions at simple poles $z_{k}\in\mathcal{Z}$ and $\bar{z_{k}}\in\bar{\mathcal{Z}}$
\begin{align}
\begin{aligned}
&\mathop{Res}_{z=z_{k}}M^{\vartriangle}(y,t;z|\sigma_{d}^{\vartriangle})=\mathop{lim}_{z\rightarrow z_{k}}M^{\vartriangle}(y,t;z|\sigma_{d}^{\vartriangle})N^{\vartriangle}_{k},\\
&\mathop{Res}_{z=\bar{z}_{k}}M^{\vartriangle}(y,t;z|\sigma_{d}^{\vartriangle})=\mathop{lim}_{z\rightarrow \bar{z}_{k}}M^{\vartriangle}(y,t;z|\sigma_{d}^{\vartriangle})\sigma_{2}\overline{N^{\vartriangle}_{k}}\sigma_{2},
\end{aligned}\tag{B.9}
\end{align}
where
\begin{align}
&N_{k}^{\vartriangle}=\left\{
                                   \begin{aligned}
\left(
  \begin{array}{cc}
    0 & \gamma_{k}^{\vartriangle} \\
    0 & 0 \\
  \end{array}
\right),\quad k\in \triangledown,\\
\left(
  \begin{array}{cc}
    0 & 0 \\
    \gamma_{k}^{\vartriangle} & 0 \\
  \end{array}
\right),\quad k\in \vartriangle,
\end{aligned}\right.\notag\\
&\gamma_{k}^{\vartriangle}=\left\{
                                   \begin{aligned}
&c_{k}(s_{22,\vartriangle}(z_{k}))^{2}e^{-2it\theta(z_{k})}\quad k\in \triangledown,\\
&c_{k}^{-1}(s_{22,\vartriangle}^{'}(z_{k}))^{-2}e^{2it\theta(z_{k})}\quad k\in \vartriangle,
\end{aligned}\right.\tag{B.10}\\
&\theta(z_{k})=2z_{k}^{2}+\frac{y}{t}z_{k}.\notag
\end{align}
\end{itemize}

Because $M^{\vartriangle}(y,t;z|\sigma_{d}^{\vartriangle})$ is directly transformed from $M(y,t;z|\sigma_{d})$, it is obvious to find out that RHP $B.3$ has a unique solution.

For given scattering data $\sigma^{\triangle}_{d}$, using $q_{sol}(y,t)=q_{sol}(y,t;\sigma^{\triangle}_{d})$ to denote the unique $N$-soliton solution of the WKI equation \eqref{WKI-equation}, by applying \eqref{B-2}, we can derive that
\begin{align}\label{B-6}
q_{sol}(y,t;\sigma^{\triangle}_{d})=e^{2d}\lim_{z\rightarrow 0}\frac{\partial}{\partial y}\frac{\left(M(0;y,t|\sigma^{\triangle}_{d})^{-1}M(z;y,t|\sigma^{\triangle}_{d})\right)_{12}}{z}.\tag{B.11}
\end{align}
This indicates that each normalization encodes $q_{sol}(y,t)$ in the same way. When the scattering coefficient $s_{22}(z)$ only possesses one zero point $z_{1}$, the one soliton solution can be derived. Taking $z_{1}=\xi+i\eta$, $\xi>0$, $\eta>0$, the one soliton solution of the WKI equation \eqref{WKI-equation} is derived as \cite{WKI-7+2}
\begin{align*}
q(x,t)=q(y(x,t),t)&=\frac{-2\eta(\xi-\eta i)[\xi \cosh(2\phi)+i\eta\sinh(2\varphi)]e^{2d-2i\varphi}}{\eta[(\xi^{2}+\eta^{2})\cosh^{2}(2\phi)-2\eta^{2}]},\\
x&=y-\frac{2\eta}{\eta^{2}(1+e^{4\phi})},
\end{align*}
where $\varphi=\varphi(y,t)$ and $\phi=\phi(y,t)$ are respectively defined as
\begin{align*}
\varphi(y,t)=\xi y+2(\xi^{2}-\eta^{2})t-\frac{1}{2}\arg(c_{1}),\\
\phi(y,t)=4\xi\eta t-\eta y-\frac{1}{2}\log(|c_{1}|).
\end{align*}
The constant $c_{1}$ is the norming constant, and $d$ is defined in \eqref{defin-d}. However, when the scattering coefficient $s_{22}(z)$ possesses multiple zero point, the exact formula of the solution is too complicated to derive, we do not give them here.  In fact, after the elastic collisions, the $N$-soliton asymptotically separate into $N$ single-soliton solutions as $t\rightarrow\infty$. Of course,  the non-generic case, for example two points of scattering data lie on a vertical line, is an exception.
Next, we study the asymptotic behavior of the soliton solutions.

\subsection*{B.2 Long-time behavior of soliton solutions}

Define  a distance
\begin{align} \label{B-5}
\mu(I)=\min_{z_{k}\in \mathcal{Z}\setminus \mathcal{Z}(I)}\{Im(z_{k})dist(Rez_{k},I)\},\tag{B.12}
\end{align}
and a space-time cone
\begin{align}\label{space-time-S}
S(y_{1},y_{2},v_{1},v_{2})=\{(y,t),y=y_{0}+vt ~with ~y_{0}\in[y_{1},y_{2}],v\in[v_{1},v_{2}]\},\tag{B.13}
\end{align}
where $v_{1}\leq v_{2}\in\mathbb{R}$ are given velocities.
\\

\centerline{\begin{tikzpicture}[scale=0.75]
\path [fill=pink] (-1,3)--(0,0) to (2,0) -- (3,3);
\path [fill=pink] (-1,-3)--(0,0) to (2,0) -- (3,-3);
\draw[-][thick](-4,0)--(-3,0);
\draw[-][thick](-3,0)--(-2,0);
\draw[-][thick](-2,0)--(-1,0);
\draw[-][thick](-1,0)--(0,0);
\draw[-][thick](0,0)--(1,0);
\draw[-][thick](1,0)--(2,0);
\draw[-][thick](2,0)--(3,0);
\draw[-][thick](3,0)--(4,0);
\draw[->][thick](4,0)--(5,0)[thick]node[right]{$y$};
\draw[<-][thick](-2,3)[thick]node[right]{$t$}--(-2,2);
\draw[-][thick](-2,2)--(-2,1);
\draw[-][thick](-2,1)--(-2,0);
\draw[-][thick](-2,0)--(-2,-1);
\draw[-][thick](-2,-1)--(-2,-2);
\draw[-][thick](-2,-2)--(-2,-3);
\draw[fill] (0,0) circle [radius=0.08];
\draw[fill] (2,0) circle [radius=0.08];
\draw[fill] (-0.5,0)node[below]{$y_{2}$};
\draw[fill] (2.5,0)node[below]{$y_{1}$};
\draw[fill] (3.5,3)node[above]{$y=v_{2}t+y_{2}$};
\draw[fill] (3,-3)node[below]{$y=v_{1}t+y_{2}$};
\draw[fill] (-1,-3)node[below]{$y=v_{2}t+y_{1}$};
\draw[fill] (-2,3)node[above]{$y=v_{2}t+y_{1}$};
\draw[fill] (1,2)node[below]{$S$};
\draw[-][thick](-1,3)--(0,0);
\draw[-][thick](3,3)--(2,0);
\draw[-][thick](-1,-3)--(0,0);
\draw[-][thick](3,-3)--(2,0);
\end{tikzpicture}}
\centerline{\noindent {\small \textbf{Figure 6.} Space-time $S(y_{1},y_{2},v_{1},v_{2})$.}}

\noindent \textbf{Proposition B.4}
Given scattering data $\sigma_{d}^{\vartriangle_{z_{0}}^{-}}=\{(z_{k},c_{k})\},$ fix $y_{1},y_{2},v_{1},v_{2}\in \mathbb{R}$ and $y_{1}<y_{2}$, $v_{1}<v_{2}$. Let $\mathcal{I}=\left[-\frac{v_{2}}{4},-\frac{v_{1}}{4}\right]$. Then as $t\rightarrow \infty$ and $(y,t)\in S(y_{1},y_{2},v_{1},v_{2})$, we have
\begin{align}\label{I-S}
M^{\vartriangle_{z_{0}}^{\mp}}(z|\sigma_{d}^{\vartriangle_{z_{0}}^{\pm}})=(I+O(e^{-8\mu |t|}))M^{\vartriangle^{\mp}_{\mathcal{I}}}
(z|\hat{\sigma}_{d}(\mathcal{I})),\tag{B.14}
\end{align}
where $M^{\vartriangle^{\mp}_{\mathcal{I}}}
(z|\hat{\sigma}_{d}(\mathcal{I}))$ is $N(\mathcal{I})=|\mathcal{Z}(\mathcal{I})|$-soliton solutions corresponding to scattering data
\begin{align}\label{sigmad-mao}
\hat{\sigma}_{d}(\mathcal{I})=\{(z_{k},c_{k}(\mathcal{I})),z_{k}\in \mathcal{Z}(\mathcal{I})\},~
c_{k}(\mathcal{I})=c_{k}\prod_{z_{j}\in \mathcal{Z}\setminus \mathcal{Z}(\mathcal{I})}\left(\frac{z_{k}-z_{j}}{z_{k}-\bar{z}_{j}}\right)^{2}.\tag{B.15}
\end{align}

\centerline{\begin{tikzpicture}[scale=0.8]
\path [fill=pink] (1.5,3)--(-1.5,3) to (-1.5,-3) -- (1.5,-3);
\draw[-][thick](-4,0)--(-3,0);
\draw[-][thick](-3,0)--(-2,0);
\draw[-][thick](-2,0)--(-1,0);
\draw[-][thick](-1,0)--(0,0);
\draw[-][thick](0,0)--(1,0);
\draw[-][thick](1,0)--(2,0);
\draw[-][thick](2,0)--(3,0);
\draw[->][thick](3,0)--(4,0)[thick]node[right]{$Rez$};
\draw[-][thick](-1.5,3)--(-1.5,2);
\draw[-][thick](-1.5,2)--(-1.5,1);
\draw[-][thick](-1.5,1)--(-1.5,0);
\draw[-][thick](-1.5,0)--(-1.5,-1);
\draw[-][thick](-1.5,-1)--(-1.5,-2);
\draw[-][thick](-1.5,-2)--(-1.5,-3);
\draw[-][thick](1.5,3)--(1.5,2);
\draw[-][thick](1.5,2)--(1.5,1);
\draw[-][thick](1.5,1)--(1.5,0);
\draw[-][thick](1.5,0)--(1.5,-1);
\draw[-][thick](1.5,-1)--(1.5,-2);
\draw[-][thick](1.5,-2)--(1.5,-3);
\draw[fill] (1.5,0) circle [radius=0.08];
\draw[fill] (-1.5,0) circle [radius=0.08];
\draw[fill] (2,0)node[below]{$-\frac{v_{1}}{4}$};
\draw[fill] (-2,0)node[below]{$-\frac{v_{2}}{4}$};
\draw[fill] (3,1)node[below]{$z_{5}$} circle [radius=0.08];
\draw[fill] (3,-1)node[below]{$\bar{z}_{5}$} circle [radius=0.08];
\draw[fill] (-1,-1)node[below]{$\bar{z}_{3}$} circle [radius=0.08];
%\draw[fill] (0.5,0.5)node[below]{$z_{7}$} circle [radius=0.08];
%\draw[fill] (0.5,-0.5)node[below]{$z^{*}_{7}$} circle [radius=0.08];
\draw[fill] (-1,1)node[below]{$z_{3}$} circle [radius=0.08];
\draw[fill] (2,3)node[below]{$z_{2}$} circle [radius=0.08];
\draw[fill] (2,-3)node[below]{$\bar{z}_{2}$} circle [radius=0.08];
\draw[fill] (0.5,2.5)node[below]{$z_{1}$} circle [radius=0.08];
\draw[fill] (0.5,-2.5)node[below]{$\bar{z}_{1}$} circle [radius=0.08];
%\draw[fill] (-0.5,2.8)node[below]{$z_{3}$} circle [radius=0.08];
%\draw[fill] (-0.5,-2.8)node[below]{$z^{*}_{3}$} circle [radius=0.08];
\draw[fill] (-3,1.5)node[below]{$z_{4}$} circle [radius=0.08];
\draw[fill] (-3,-1.5)node[below]{$\bar{z}_{4}$} circle [radius=0.08];
\end{tikzpicture}}
\centerline{\noindent {\small \textbf{Figure 7.} For fixed $v_{1}<v_{2}$, $I=\left[-\frac{v_{2}}{4},-\frac{v_{1}}{4}\right]$.}}

\begin{proof}
We first consider the case of $M^{\vartriangle_{z_{0}}^{-}}(z|\sigma_{d}^{\vartriangle_{z_{0}}^{-}})$. Define
\begin{align*}
\triangle^{-}(\mathcal{I})=\{k: Rez_{k}<-\frac{v_{2}}{4}\},~~~~ \triangle^{+}(\mathcal{I})=\{k: Rez_{k}>-\frac{v_{1}}{4}\}.
\end{align*}
Then, if we choose $\vartriangle=\vartriangle^{-}(\mathcal{I})$ in RHP $B.3$, it is easy to check that
\begin{align}\label{Nk-estimate}
\big|\big|N_{k}^{\vartriangle^{-}(\mathcal{I})}\big|\big|=\left\{
                                   \begin{aligned}
&o(1) \quad k\in \mathcal{Z}(\mathcal{I}),\\
&o(e^{-8\mu(\mathcal{I})|t|})\quad k\in \mathcal{Z}\setminus\mathcal{Z}(\mathcal{I}),
\end{aligned}\right.~~t\rightarrow-\infty,\tag{B.16}
\end{align}
which implies that the residues with $z_{k}\in\mathcal{Z}\setminus\mathcal{Z}(\mathcal{I})$ have little contribution to the solution $M^{\vartriangle_{z_{0}}^{\pm}}$.

For each discrete spectrum point $z_{k}\in \mathcal{Z}\setminus \mathcal{Z}(\mathcal{I})$, we make a small disk $D_{k}$ corresponding to each spectrum point $z_{k}$. And the  radius of the disk $D_{k}$ is sufficiently small to guarantee that they are non-overlapping. Denote $\partial D_{k}$ as the boundary of $D_{k}$. Then, we introduce that
\begin{align}
\Xi(z)=\left\{\begin{aligned}
&I-\frac{N_{k}^{\vartriangle^{-}(\mathcal{I})}}{z-z_{k}} \quad z\in D_{k},\\
&I-\frac{\sigma_{2} (N^{*}_{k})^{\vartriangle^{-}(\mathcal{I})}\sigma_{2}}{z-\bar{z}_{k}} \quad z\in \bar{D}_{k},\\
&I,\quad elsewhere.
\end{aligned}\right.\tag{B.17}
\end{align}
By introducing a transformation  that $\hat{M}^{\vartriangle_{z_{0}}^{-}}(z)=M^{\vartriangle_{z_{0}}^{-}}(z)\Xi(z)$, we can derive that $\hat{M}^{\vartriangle_{z_{0}}^{-}}(z)$ has a new jump in $\partial D_{k}$. Then $\hat{M}^{\vartriangle_{z_{0}}^{-}}(z)$ satisfied the following jump relationship
\begin{align}
\hat{M}^{\vartriangle_{z_{0}}^{-}}_{+}(z)=\hat{M}^{\vartriangle_{z_{0}}^{-}}_{-}(z)\hat{V},~~ z\in \partial D_{k}\cup \bar{D}_{k}.\tag{B.18}
\end{align}
By using the estimate \eqref{Nk-estimate}, the jump matrix $\hat{V}$ satisfies that
\begin{align}\label{B-4}
||\hat{V}-I||=o(e^{-8\mu(\mathcal{I})|t|}), ~~z\in \partial D_{k}\cup \bar{D}_{k},~~ t\rightarrow-\infty.\tag{B.19}
\end{align}
Observing a fact that
$\hat{M}^{\vartriangle_{z_{0}}^{-}}(z|\sigma_{d})$ and $M^{\vartriangle^{-}_{\mathcal{I}}}(z|\hat{\sigma}_{d}(\mathcal{I}))$ possess the same poles and residue conditions. Therefore, we can show that
\begin{align}
\hat{M}^{\vartriangle_{z_{0}}^{-}}(z|\sigma_{d})
[M^{\vartriangle^{-}_{\mathcal{I}}}(z|\hat{\sigma}_{d}(\mathcal{I}))]^{-1}\triangleq\varepsilon(z)\tag{B.20}
\end{align}
has no poles. And, its jumps  across the  $\partial D_{k}\cup \partial \bar{D}_{k}$ satisfy the same estimates with \eqref{B-4}.
Then, with the application of the theory of small-norm Riemann-Hilbert problems, one can easily derive that
\begin{align*}
\varepsilon(z)=I+O(e^{-8\mu(\mathcal{I})|t|}), ~~t\rightarrow\infty,
\end{align*}
which together with $\hat{M}^{\vartriangle_{z_{0}}^{-}}(z)=M^{\vartriangle_{z_{0}}^{-}}(z)\Xi(z)$  gives the formula \eqref{I-S}. The other case of $M^{\vartriangle_{z_{0}}^{+}}(z|\sigma_{d}^{\vartriangle_{z_{0}}^{+}})$ can be proved similarly.
\end{proof}

\section*{Appendix C: Detailed calculations for the pure $\bar{\partial}$-Problem  }

\noindent \textbf{Proposition C.1}
For large $t$, there exist constants $c_{j}(j=1,2,3)$ such that $I_{j}(j=1,2,3)$  defined in \eqref{7-8} and \eqref{7-9} possess the following estimate
\begin{align}\label{B-1}
I_{j}\leq c_{j}t^{-\frac{1}{4}},~~ j=1,2,3. \tag{C.1}
\end{align}

\begin{proof}
Let $s=p+iq$ and $z=\xi+i\eta$. Considering the fact that
\begin{align*}
\Big|\Big|\frac{1}{s-z}\Big|\Big|_{L^{2}}(q+z_{0})=(\int_{q+z_{0}}^{\infty}\frac{1}{|s-z|^{2}}dp)^{\frac{1}{2}}
\leq\frac{\pi}{q-\eta},
\end{align*}
we can derive that
\begin{align}\label{C-2}
\begin{split}
|I_{1}|&\leq\int_{0}^{+\infty}\int_{q+z_{0}}^{+\infty}
\frac{|\bar{\partial}\chi_{\mathcal{Z}}(s)|e^{-4|t|q(p-z_{0})}}{|s-z|}dpdq\\
&\leq\int_{0}^{+\infty}e^{-4|t|q^{2}}\big|\big|\bar{\partial}\chi_{\mathcal{Z}}(s)\big|\big|_{L^{2}(q+z_{0})}
\Big|\Big|\frac{1}{s-z}\Big|\Big|_{L^{2}(q+z_{0})}dq \\
&\leq c_{1}\int_{0}^{+\infty}\frac{e^{-4|t|q^{2}}}{\sqrt{|q-\eta|}}dq
\leq c_{1}|t|^{-\frac{1}{4}}.
\end{split}\tag{C.2}
\end{align}
Similarly, considering that $r\in H^{1,1}(\mathbb{R})$, we obtain the estimate
\begin{align}\label{C-3}
|I_{2}|\leq\int_{0}^{+\infty}\int_{q+z_{0}}^{+\infty}
\frac{|r'(p)|e^{-4|t|q^{2}}}{|s-z|}dpdq
\leq c_{2}|t|^{-\frac{1}{4}}.\tag{C.3}
\end{align}
To obtain the estimate of $I_{3}$, we consider the following $L^{k}(k>2)$ norm
\begin{align}\label{C-4}
\bigg|\bigg|\frac{1}{\sqrt{|s-z_{0}|}}\bigg|\bigg|_{L^{k}}
\leq \left(\int_{q+z_{0}}^{+\infty}
\frac{1}{|p-z_{0}+iq|^{\frac{k}{2}}}dp\right)^{\frac{1}{k}}
\leq cq^{\frac{1}{k}-\frac{1}{2}}.\tag{C.4}
\end{align}
Similarly, we can derive that
\begin{align}\label{C-5}
\bigg|\bigg|\frac{1}{|s-z|}\bigg|\bigg|_{L^{k}}\leq c|q-\eta|^{\frac{1}{k}-\frac{1}{2}}.\tag{C.5}
\end{align}
By applying \eqref{C-4} and \eqref{C-5}, it is not hard to check that
\begin{align}\label{C-6}
\begin{split}
|I_{3}|&\leq\int_{0}^{+\infty}\int_{q}^{+\infty}
\frac{|z-z_{0}|^{-\frac{1}{2}}e^{-4|t|q(p-z_{0})}}{|s-z|}dpdq\\
&\leq\int_{0}^{+\infty}e^{-4|t|q^{2}}\bigg|\bigg|\frac{1}{\sqrt{|s-z_{0}|}}\bigg|\bigg|_{L^{k}}
\bigg|\bigg|\frac{1}{|s-z|}\bigg|\bigg|_{L^{k}}dq \leq c_{3}t^{-\frac{1}{4}}.
\end{split}\tag{C.6}
\end{align}
Now, we complete the estimates of $I_{j}(j=1,2,3)$.
\end{proof}

\renewcommand{\baselinestretch}{1.2}

\end{document}